\definecolor{myred}{RGB}{220,43,25}
\definecolor{mygreen}{RGB}{0,146,64}
\definecolor{myblue}{RGB}{0,143,224}
\renewcommand{\vec}{}
\renewcommand{\P}{\mathbb{P}}
\DeclareMathOperator{\conv}{conv}
\DeclareMathOperator{\moag}{mag}
\newcommand{\Sc}{\mathcal{S}}
\newcommand{\Z}{\mathbb{Z}}
\renewcommand{\P}{\mathcal{P}}
\newcommand{\B}{\mathcal{B}}
\newcommand{\EB}{\mathcal{EB}}
\DeclareMathOperator{\relint}{relint}
\DeclareMathOperator{\val}{val}
\DeclareMathOperator{\relaxation}{R}
\DeclareMathOperator{\cost}{cost}
\DeclareMathOperator{\polymatroid}{polymatroid}
\DeclareMathOperator{\utility}{utility}
\newcommand*\colvec[1]{
        \global\colveccount#1
        \begin{pmatrix}
        \colvecnext
}
\def\colvecnext#1{
        #1
        \global\advance\colveccount-1
        \ifnum\colveccount>0
                \\
                \expandafter\colvecnext
        \else
                \end{pmatrix}
        \fi
}
\theoremstyle{definition}
\newtheorem{definition}{Definition}[section]
\newtheorem{example}[definition]{Example}
\theoremstyle{plain}
\newtheorem{assumption}[definition]{Assumption}
\newtheorem{proposition}[definition]{Proposition}
\newtheorem{lemma}[definition]{Lemma}
\newtheorem{corollary}[definition]{Corollary}
\newtheorem{theorem}[definition]{Theorem}
\theoremstyle{remark}
\newtheorem{remark}[definition]{Remark}
\newcommand{\N}{\mathbb{N}}	% Natürliche Zahlen
\newcommand{\IQ}{\mathbb{Q}}	% Rationale Zahlen
\newcommand{\R}{\mathbb{R}}	% Reelle Zahlen
\DeclareMathOperator{\LP}{LP}
\renewcommand{\vec}{}
\renewcommand{\bm}{}
\title{Pricing in Resource Allocation Games Based on\\ Lagrangean Duality and Convexification}
\author{Tobias Harks }
\affil{\small Augsburg University, Institute of Mathematics, 86135 Augsburg\\
	\href{mailto:tobias.harks@math.uni-augsburg.de}{\texttt{tobias.harks@math.uni-augsburg.de}}}
\date{First version, July 2019, this version \today}
\begin{document}

\maketitle
\begin{abstract}
We consider a basic resource allocation game, where
the players' strategy spaces are subsets of $\R^m$ and
cost/utility functions  are parameterized
by some common vector $\vec u\in \R^m$
and, otherwise, only depend on the own strategy choice.
A strategy of a player can be interpreted as a vector of resource consumption 
and a joint strategy profile naturally leads to an aggregate
consumption vector. Resources can be priced, that is, the game is augmented by a price vector $\bm \lambda\in\R^m_+$
and  players have quasi-linear  overall costs/utilities 
meaning that in addition to the original costs/utilities, a player needs to pay the corresponding price per consumed 
unit. We investigate the following 
question:  for which aggregated consumption vectors $\vec u$ 
can we find prices $\bm \lambda$ that induce an equilibrium
realizing the targeted consumption profile?

For answering this question, we revisit a well-known duality-based framework and derive several characterizations of the existence of such $\vec u$ and $\lambda$ using convexification techniques. We show
that for finite strategy spaces or certain concave games, the equilibrium existence problem reduces to solving a well-structured LP.
We then consider a class of monotone aggregative games 
having the property that the cost/utility functions
of players may depend on the induced load of a strategy profile.
For this class, we show a sufficient condition of
enforceability based on the previous characterizations.
We demonstrate that this framework  can help to unify
 parts of four largely independent streams in the literature: tolls
in transportation systems, Walrasian market equilibria, trading networks and congestion control in communication networks.  Besides reproving existing
results we establish new existence results by using methods from
polyhedral combinatorics, polymatroid theory and discrete convexity.\end{abstract}

% Paper body
\section{Introduction}
Resource allocation problems appear in several real-world
situations.
Whenever available resources need to be matched to demands,
the goal is to find the most profitable or least costly allocation of
the resources. Applications can be found
in several areas, including traffic
and telecommunication networks.  In 
the above applications,  a
finite (or infinite) number of  players interact strategically, each optimizing their individual
objective function. The corresponding allocation of resources in such setting is usually determined by an equilibrium solution of the underlying strategic game. 
A central question in all these areas concerns the problem
of how to incentivize players in order to use the (scarce) resources
optimally. One key approach in all named application areas is the concept
of \emph{pricing resources} according to their usage.
Every resource comes with an anonymous prices per unit of consumption
and defining the ``right'' prices thus offers the chance of
inducing equilibria with optimal or efficient resource usage.
Prominent examples are toll pricing in transportation
networks, congestion pricing in telecommunication networks and
\emph{market pricing} in economics.
A prime example of the latter is the \emph{Walrasian competitive equilibrium} (cf. Walras~\cite{walras54}), where goods are priced such that
there is an allocation of goods to buyers  with the property that every buyer gets a bundle of items maximizing her overall utility
given the current prices for the goods.

In this paper, we will introduce a generic model of pricing in resource
allocation games with quasi-linear costs/utilities that subsumes several of the above mentioned applications
as a special case. In the following, we first introduce the model formally, discuss applications and 
then give an overview on the main results and related work.
\subsection{The Model}\label{sec:model}
 Let $E=\{1,\dots, m\}$ be a finite and non-empty set of resources
 and $N=\{1,\dots,n\}$ be a nonempty finite set of players. 
 For $i\in N$, let $X_i\subset \R^m, X_i\neq\emptyset$ denote the strategy space of player $i$ and
define $X:=\times_{i\in N} X_i$ as the combined strategy space.
The vector $\vec x_i=(x_{ij})_{j\in E} \in X_i $ is a strategy profile
of player $i\in N$ and the entry $x_{ij}\in \R$ can be interpreted  as the level of resource
usage of player $i$ for resource $j$.  
For every player $i\in N$, there is a function
$g_i:\R^m\rightarrow \R^m, \vec x_i\mapsto g_i(\vec x_i) $ mapping 
the resource usage vector to a vector of the actual \emph{resource consumption}. The function $g_i$ allows to model player-specific characteristics such as weights. For both $\vec x_i$ and $g_i$
negative values are allowed.
We  call the vector of resource usage $\vec x=(x_{ij})\in \R^{n\cdot m}$ a
 \emph{strategy distribution}.
Given $\vec x\in X$, we can define
the \emph{load} on resource $j\in E$ as
$\ell_j(\vec x):=\sum_{i\in N} g_{ij}(\vec x_{i}),$
where $g_{ij}$ is the $j$-th component of $g_i$.
In the following, we introduce properties
of utility functions needed for our main results.
We will distinguish between cost minimization games and
utility maximization games.

\begin{assumption}\label{ass:aggregate}
We assume that cost/utility functions
are parameterized by an \emph{exogenously given vector $\vec u\in \R^m$}  and depend on the
own strategy vector only.
\begin{enumerate}
\item For minimization games $G^{\min}(\vec u)$ with respect to $\vec u\in \R^m$, the total cost of a player $i\in N$ under strategy distribution $\vec x\in X$ is defined by a function $\cost_{i}:X\rightarrow \R,$
which satisfies 
\[ \cost_i(\vec x):= \pi_i(\vec u,\vec x_i) \text{ for all }\vec x\in X
\text{ for some function $\pi_i:\R^m\times X_i\rightarrow \R$.}\]
\item For maximization games $G^{\max}(\vec u)$, we denote the utility function 
for $i\in N$ by 
$\utility_{i}:X\rightarrow \R$ and we assume that it satisfies 
\[ \utility_i(\vec x):= v_i(\vec u,\vec x_i) \text{ for all }\vec x\in X
\text{ for some function $v_i:\R^m\times X_i\rightarrow \R$.}\]
\end{enumerate}
\end{assumption}
The vector $\vec u$ can be interpreted
as the induced load of an equilibrium, that is, 
$\vec u=\ell(\vec x)$. We assume for the moment that players are \emph{load taking} in the sense
that they assume not being able to influence the global load vector $\vec u$
by their own strategy $\vec x_i$, thus leading to the prescribed shape
of the cost/utility functions -- we will later also consider models in which a functional dependency of the strategy choice on the induced load is allowed.

%Clearly $x_j$ is a function of $\vec x$ but in order
%to keep notation simple we simply write $x_j$.

\subsection{Pricing in Resource Allocation Games}\label{sec:pricing}
We are concerned with the problem of defining \emph{prices} $\lambda_j \geq 0, j\in E$ on 
the resources
in order to incentivize an efficient usage of the resources
as explained below.
If player $i$ uses resource $j$
at consumption level $g_{ij}(\vec x_i)$, she needs to pay
$\lambda_j g_{ij}(\vec x_i)$. The quantities $\pi_i(\vec u,\vec x_i)$ and $\bm\lambda^\intercal g_{i}(\vec x_i)$ are assumed to be normalized
to represent the same unit (say money in Euro)
and we assume that the private cost functions are quasi-linear:
$ \pi_i(\vec u,\vec x_i)+\bm\lambda^\intercal g_{i}(\vec x_i).$
If the parameter $\vec u=(u_{j})_{j\in E}\in \R^m$ represents
a targeted load vector $\ell(\vec x^*)$,
then, the task is to find prices $\bm{\lambda}\in\R^m_+$ so that  $\vec x^*$
becomes an equilibrium of the game with prices.
\begin{definition}[Enforceability]\label{def:enforceable}
We now introduce three variants of enforceability.
\begin{enumerate}
\item\label{enum:enforceable} A vector $\vec u\in \R^m$ is enforceable,
if there is a tuple $(\vec x^*,\bm \lambda)\in X\times\R^m_+$ satisfying~\ref{cond1}. and~\ref{cond2}.
for minimization games $G^{\min}(\vec u)$ or~\ref{cond1}. and~\ref{cond3}.
for maximization games $G^{\max}(\vec u)$:
\begin{enumerate}
\item\label{cond1}  $\ell_j(\vec x^*)= u_j$ for all $j\in E$.
\item Minimization: \label{cond2} $ \vec x^*_i\in\arg\min_{\vec x_i\in X_i}\left\{ \pi_{i}(\vec u,\vec x_i)+\bm\lambda^\intercal g_i(\vec x_i)\right\} \text{ for all }i\in N.$
\item Maximization: \label{cond3} $ \vec x^*_i\in\arg\max_{\vec x_i\in X_i}\left\{ v_i(\vec u,\vec x_i)-\bm\lambda^\intercal g_i(\vec x_i)\right\} \text{ for all }i\in N.$
\end{enumerate}
In this case, we say $\vec u$ can be enforced by $(\vec x^*,\bm \lambda)\in X\times\R^m_+$.
\item\label{enum:weakly-enforceable-market}
A vector $\vec u\in \R^m$ is called \emph{weakly enforceable with market clearing prices}, if there is a tuple $(\vec x^*,\bm \lambda)$ 
that satisfies the above condition~\eqref{cond2} (or~\eqref{cond3})
but~\eqref{cond1} is replaced with $\ell(\vec x^*)\leq\vec u$
and $\bm \lambda$ satisfies the Walrasian law that
resources $j\in E$ with slack capacity have zero price, that is, $\ell_j(\vec x^*)<\vec u_j \Rightarrow \lambda_j=0$ for all $j\in E$.
\item\label{enum:unique-enforceable}
A vector $\vec u\in \R^m$ is uniquely enforceable,
if there is $\bm \lambda\in \R^m_+$ 
and a unique $\vec x^*\in X$  satisfying~\ref{cond1}. and~\ref{cond2}.
for minimization games $G^{\min}(\vec u)$ or~\ref{cond1}. and~\ref{cond3}.
for maximization games $G^{\max}(\vec u)$.
\end{enumerate}
 \end{definition}
Condition~\ref{cond1}. requires that $\vec x^*$ realizes the capacities $
\ell(\vec x^*)=\vec u$ 
while Condition~\ref{cond2}. implements $\vec x^*$ as a pure Nash equilibrium of the minimization game $G^{\min}(\vec u)$ augmented with prices. Condition~\ref{cond1}. and~\ref{cond3}.
refer to a maximization game $G^{\max}(\vec u)$ augmented with prices. 
The definition of weakly enforceable capacity vectors (with market prices)
is motivated by applications, for which outcomes are interesting that do not use all capacities at equilibrium.
\subsection{Running Examples}
We give four prototypical examples that
are used throughout the paper.
\begin{example}[Tolls in Network Routing]\label{ex:routing}
There is a directed
graph $G=(V,E)$ and a finite set $N$ of populations of commuters modeled
by tuples $(s_i,t_i,d_i), i\in N$, where $s_i$ is the source,
$t_i$ the sink and $d_i>0$ represents the volume
of flow that is traveling from $s_i$ to $t_i$.
In this setting, we can think of the set $E$
as being the set of resources and $X_i$
representing a flow polytope for every population $i\in N$.
In the network routing literature, there are several
equilibrium notions known according to whether
the underlying model is \emph{nonatomic}
(Wardrop equilibrium) or  \emph{atomic} (Nash equilibrium).
 Given an equilibrium concept, the goal is to find network tolls $\lambda_j\geq 0, j\in E$
on edges that enforce a prescribed capacity vector $\vec u$
via an equilibrium strategy distribution.
\iffalse
One important aspect of these two different models
is that in the atomic model, the set of flows
$X_i$ carries integrality requirements while the non-atomic
formulation usually involves convex sets $X_i$.
Let us emphasize that most models in the area of network routing assume that the cost of a player (or an agent)
only depends on the own strategy choice and the aggregate
load vector induced by the strategies of the competitors.
Moreover for nonatomic models, an agent cannot influence the load vector by her own strategy choice.
In this regard, only nonatomic games seem to fit into the class
of games $G^{\min}(\vec u)$ augmented with prices introduced in Assumption~\ref{ass:aggregate}.
But as we will see later, also \emph{atomic} congestion games can be handled.
\fi
\end{example}
Now we turn to the area
of \emph{Walrasian market equilibria} which constitutes a central
topic in the economics literature, see the original work of Walras~\cite{walras54} and later landmark papers of Kelso and Crawford~\cite{Kelso82}, Gul and Stachetti~\cite{Gul99}
and Danilov et al.~\cite{Danilov2001}.
\begin{example}[Market Equilibria]\label{ex:auction}
Suppose there are items $E=\{1,\dots,m\}$  for sale
and there is a set $N=\{1,\dots,n\}$ of buyers
interested in buying some of the items.
For every subset $S\subseteq E$ of items, 
player $i$ experiences value $w_i(S)\in \R$
giving rise to a \emph{valuation function}
$w_i:2^m\rightarrow \R ,i\in N,$
where $2^m$ represents the set of all subsets of $E$.
The market manager wants to determine
a price vector $\bm\lambda\in \R_+^m$ so that
all items are sold to the players and every 
player demands a subset $S_i\subseteq E$ maximizing her quasi-linear utility:
$S_i\in \arg\max_{S\subseteq E}\{ w_i(S)-\sum_{j\in S} \lambda_j \}$.
This is known as a \emph{Walrasian competitive equilibrium}.

This class of games also belongs
to the class $G^{\max}(\vec u)$ augmented with prices introduced in Assumption~\ref{ass:aggregate},
because the valuation function of a buyer only depends on her own
assigned bundle of items. 
If $X_i, i\in N$ represents the set of incidence vectors of subsets of $E$,
we can set $\vec u=(1,\dots,1)^\intercal\in \R^m$
and any pair $\vec x\in X, \bm \lambda\in \R^m_+$ 
that \emph{weakly enforces $\vec u$ with market prices} corresponds to a competitive
equilibrium. 
\iffalse
Several further variants are known in the literature
according to whether items are divisible or not and
if allocations of items need to satisfy further combinatorial
or algebraical constraints.
As we will show later, the introduced framework allows several generalizations, such as letting the valuation function
also depend on the aggregated vector of allocations of other players.
\fi
\end{example}
A related application are so-called \emph{trading networks}
as introduced by Hatfield et al.~\cite{HatfieldKNOW15}.

\begin{example}[Bilateral Trading Networks]
A bilateral trading network is represented by a directed multigraph $G = (N, E)$, where $N$ is the set of vertices and $E=\{e_1,\dots,e_m\}$ the set of edges. Vertices of the graph correspond to players and edges
represent possible bilateral trades that can take place between the  pair of incident vertices. 
For such trade $e=(s,b)\in E$, the source vertex $s$ corresponds to the seller and the sink vertex $b$ corresponds to the buyer. 
For a set of edge prices $\lambda_e\geq 0, e\in E$, we can associate
with each possible trade $e=(s,b)\in E$ 
a price $\lambda_e\geq 0$ with the understanding
that the buyer $b$ pays $\lambda_e$ to the seller $s$.
An outcome of the market is a set of 
\emph{realized trades} $S\subseteq E$ and a vector of prices $\bm\lambda\in \R_+^m$.
Given an outcome, the quasi-linear utility of a player $i \in N$  is defined as
the  
sum of the utility gained from trades plus the income
minus the cost of trades, respectively.
The utility of
realized trades is given by a function 
$\bar w_i:2^{\delta(i)}\rightarrow \R.$
As in market equilibria, the goal is to identify a subset of edges
and a price vector so that every player
gets a utility maximizing subset of trades.
The main difference to market equilibria arises as players
can act simultaneously as both, sellers and buyers in the market.
As we will show in Section~\ref{sec:trading}, this class of games also belongs
to the class $G^{\max}(\vec u)$ augmented with prices introduced in Assumption~\ref{ass:aggregate}.
\iffalse
because the valuation function of a buyer only depends on her own
assigned bundle of trades. 

If $X_i, i\in N$ represents the set of incidence vectors of subsets of $E$,
where $\{-1,1\}$ entries are used to distinguish
between the role as seller and buyer, 
we can set $\vec u=(0,\dots,0)^\intercal\in \R^m$
and any pair $\vec x\in X, \bm \lambda\in \R^m_+$ 
that \emph{enforces $\vec u$} corresponds to a competitive equilibrium, details are in Section~\ref{sec:trading}. 
\fi
\end{example}

The next application resides in the area of congestion control in communication networks.
\begin{example}[Congestion Control in Communication Networks]
We consider a  model
of Kelly et al.~\cite{Kelly98}  in the domain of TCP-based congestion control. We are given a directed or undirected \emph{capacitated} graph $G=(V,E,\vec c)$,
where $V$ are the nodes, $E$ the edge set with $|E|=m$ and
$\vec c \in \R_+^m$ denotes the edge capacities.
There is a set of players $N= \{1, \dots,
n\}$ and each $i\in N$ is associated with an end-to-end pair $(s_i,t_i)\in V\times V$ and a non-decreasing and concave bandwidth utility function $U_i:\R_+\rightarrow\R_+$
measuring the received benefit from sending net flow from $s_i$ to $t_i$.
The strategy space $X_i$ of a player represents a flow polyhedron 
and for a flow $\vec x_i\in X_i$ with value $\val(\vec x_i)$ the
received bandwidth utility is equal to $U_i( \val(\vec x_i))$.
The goal in this setting is to determine a price vector $\bm \lambda\in \R_+^m$ so that
a strategy distribution $\vec x^*$ is induced as an equilibrium
respecting the network capacities $\vec c$ and, hence, avoids
congestion. The equilibrium condition is given by
\[ \vec x_i^*\in \arg\max\{U_i( \val(\vec x_i))-\bm\lambda^\intercal\vec x_i \vert \vec x_i\in X_i\} \text{ for all $i\in N$.}\]
This model fits to the class $G^{\max}(\vec u)$ augmented with prices:
The utility function of a player only depends on the own action
and with $\vec u:=\vec c$ we obtain the desired structure.
\iffalse
For maximization problems over network flows, it is clear that in a capacitated graph only bottleneck edges are saturated, 
thus, the goal is to determine a price vector $\bm \lambda\in \R_+^m$
so that the network capacities are \emph{weakly enforced with market prices}.
\fi
\end{example}

\subsection{Overview of Results, Main Techniques and Organization of the Paper}
In Section~\ref{sec:model} and~\ref{sec:pricing}, we introduced a resource allocation
model and motivated the question of enforceability
of load vectors induced by equilibrium profiles
with respect to anonymous resource prices.
In Section~\ref{sec:gap}, 
we will revisit a well-known duality-based framework
and we prove a  complete characterization of enforceability:
\begin{framed}
Theorem~\ref{thm:main}: $\vec u\in \R^m$ is enforceable by  $(\vec x^*,\bm\lambda)$
if and only if $(\vec x^*,\bm\lambda)$ yields 
zero duality gap for the master problem 
$\min\{\sum_{i\in N} \pi_i(\vec u,\vec x_i)\vert \ell(x)\leq \vec u, x\in X\}$ and $x^*$
satisfies  $\ell(\vec x^*)=\vec u$.
\end{framed}
The  above result only requires
the separability structure of utility/cost functions and the conditions on the duality gap.\footnote{The if-direction is well-known, see ~\cite{BertsekasGallager1992,Palomar:2006,ScutariPFP10}. The only-if direction roughly corresponds to the first
welfare theorem in economics saying that every pricing equilibrium maximizes social welfare, however, the theorem asks for a slightly stronger condition  (strong duality) and the fact that an inequality must be tight. }
Otherwise, the  strategy sets and utility/cost functions 
are not restricted, for instance, they may be non-convex.
However,  checking whether
or not a non-convex master problem exhibits zero duality gap
and admits an optimal solution with the desired property may be very difficult to prove. In this regard, for any $G^{\min}(\vec u)$,
 we introduce in Section~\ref{sec:convex-relaxations} the notion of a \emph{convex
relaxation} $G^{\min-\conv}(\vec u)$ using the concept of \emph{convex subfunctionals}. 
We derive the following characterization:
\begin{framed}
 Theorem~\ref{thm:main-convexification}: $\vec u$ is enforceable for $G^{\min}(\vec u)$ via $(\vec x^*,\lambda)$ if and only if $(\vec x^*,\lambda)$ enforces $\vec u$ for the convexified game $G^{\min-\conv}(\vec u)$.
 \end{framed}
While the new game $G^{\min-\conv}(\vec u)$  is now convex
and, thus, more accessible, the complexity of the original non-convex
problem is in some sense shifted to the representation of the convex subfunctionals.
For two  game classes, however, namely (1) games with finite strategy spaces
or (2) games for which the convex hull of the strategy space is finite
and the cost/utility functions is concave/convex, we can show that
the convex subfunctionals are representable as optimal
solutions of an underlying LP. This leads to the third main result: 
\begin{framed}
Theorem~\ref{thm:convex-hull-finite} and Theorem~\ref{concave-main}: For (1) and (2):  $\vec u$ is enforceable for $G^{\min}(\vec u)$
if and only if  the master LP for $G^{\min-\conv}(\vec u)$
admits optimal solutions that are feasible for $G^{\min}(\vec u)$ .
\end{framed}
The dual of the master LP can be solved  in polynomial time via the ellipsoid
method, if there is an efficient separation oracle.
The separation oracle for the two problem classes reduces to the so-called \emph{demand problem}, where for every player one is given prices and
the problem is to compute an optimal strategy. The complexity of the demand- versus
the master problem~\ref{price-opt} can then be used to establish impossibility
results for enforceability using complexity-theoretic
assumption (like $P\neq NP$). The connection between the complexity of the
demand problem and that of the master problem (or welfare maximization problem)  has been discovered first by
Roughgarden and Talgam-Cohen~\cite{Roughgarden:2015}
in the context of pricing problems for Walrasian market
equilibria.

We
then consider in Section~\ref{sec:integral} the case of \emph{integrality} of strategy spaces.
It follows that for master problems admitting a fractional relaxation 
with zero duality gap and integer optimal solutions, the sufficiency condition of Theorem~\ref{thm:main} is satisfied.
 For polyhedral integral strategy spaces, the powerful methods from polyhedral combinatorics can be used to categorize
 cases for which such relaxations exist.
 In this regard, we show 
 two prototypical results:
 \begin{framed}
 \begin{enumerate} 
 \item  Theorem~\ref{thm:box-integral}
 gives an enforceability result for games with homogenous additive linear utilities/costs and (box) totally-dual-integral and decomposable
aggregation polytopes.
\item  Theorem~\ref{thm:polymatroid-main}  
gives an enforceability result for games with player-specific additive linear utilities/costs on \emph{polymatroidal} strategy spaces.
\end{enumerate}
\end{framed}

In Section~\ref{sec:monotone}, we turn to models 
for which the private cost/utility function
intrinsically depends on the aggregated load vector $\ell(\vec x)$,
that is, it has the form $\pi_i(\ell(\vec x),\vec x_i), i\in N$
and is \emph{not} separable anymore.
A prime example is an atomic congestion game, where
any change of strategies has an effect on the perceived
cost since the load vector changes.
We identify an expressive class of games
that we term \emph{monotone aggregative games}
which include among others congestion games with nondecreasing cost functions. 
\begin{framed}
Theorem~\ref{thm:mag}: Let $G^{\min-\moag}$ be a monotone aggregative game.
If  $\vec u$ is enforceable for
the game $G^{\min}(\vec u)$ with 
$\pi_i(\ell(\vec x),\vec x_i):=\pi_i(\vec u,\vec x_i), i\in N, x\in X$,
then $\vec u$ is also enforceable for $G^{\min-\moag}$.
\end{framed}
This way, we can translate several positive enforceability results
to monotone aggregative game including atomic congestion games.
With these results and methods at hand, we apply the framework 
to the four application domains. 
\paragraph{Tolls in traffic networks.}
In Section~\ref{sec:congestion-games}, we consider the problem of
defining tolls in order to enforce certain load vectors as
Wardrop equilibrium. For nonatomic network games, we reprove and generalize in Corollary~\ref{cor:nonatomic} a characterization
of enforceable load vectors by 
Yang and Huang~\cite{Yang04}, Fleischer et al.~\cite{Fleischer04} and Karakostas and Kolliopoulos~\cite{Karakostas04}.

Then we turn to atomic congestion games. For general nondecreasing homogeneous cost functions, we show that polytopal congestion games can be analyzed
using Theorem~\ref{thm:box-integral}. It turns out that for a wide
classes of congestion games (matroid games, single-source network games,
$r$-arborescences, matching games, and more) the
defining aggregation polytope is box-integral and decomposable leading to existence
results of enforcing tolls (Corollary~\ref{cor:consequences}).
For all these settings, a congestion vector $\vec u$
minimizing the social cost  can be computed in polynomial
time (see Del Pia et al.~\cite{PiaFM17} and Kleer and Sch\"afer~\cite{KleerS17}) and the space of enforcing
prices can be described by a compact linear formulation.
It follows that for a fixed enforceable capacity vector $\vec u$, arbitrary linear objective functions 
(like maximum or minimum revenue) can 
be efficiently optimized over the price/allocation space.
Besides single-source network games (see Fotakis and Spirakis~\cite{FotakisS08} and
Fotakis et al.~\cite{FotakisKK10}), these results
were not known before.

Then, we study the more challenging case of
atomic congestion games with nondecreasing player-specific 
cost functions on the resources. We prove -- using 
 Theorem~\ref{thm:polymatroid-main} -- that for polymatroidal
strategy spaces, one can obtain existence results (Corollary~\ref{cor:congestion-polymatroid}). 
To the best of our knowledge, these are the first existence results of tolls for congestion games with
player-specific cost functions. 
\iffalse
We complement this result
by showing in Theorem~\ref{thm:hard-hetero} that already the case of homogenous cost functions with heterogeneous players
is considerably harder than the purely homogenous case: even for
symmetric $s$,$t$ network games the corresponding (compact) LP-relaxation
cannot be integral (unless $P=NP$).
 \fi
\paragraph{Market equilibria.}

In Section~\ref{sec:market-equilibria}, we study (indivisible) single-, multi-item, or package auctions  and show that the existence of Walrasian equilibria can be studied within the framework.
Using the fact that in all these models, the 
quasi-linear utility function is separable and the strategy space
of every player (buyer or seller) consists of a finite
point set we can use the relationship
between a game and its convexified game. This way,  we can reprove classical LP-characterization results
of the existence of Walrasian equilibria
by Bikchandani and Mamer~\cite{Bikchandani1997}, Bikchandani
and Ostroy~\cite{BikhchandaniO02} as well as more recent LP characterizations by
Candogan et al.~\cite{Candogan2018} and  
Roughgarden and Talgam-Cohen~\cite{Roughgarden:2015}.
For gross-substitute valuations we reprove the existence
of Walrasian equilibria using methods of discrete
convexity and $M$-convexity (see Murota~\cite{Murota:2003}).

Then we consider a class of valuations
for multi-unit items that we term \emph{separable additive} valuations
with \emph{negative externalities}.
The idea is that items are of different type but
may be sold at a certain multiplicity and the values for received items are additive.
The precise item values may depend on the allocation
vector. This dependency is assumed to model
negative externalities, that is, - roughly speaking -  if an item type
is sold to more players, the value goes down.
For this class of valuations, we prove in Corollary~\ref{cor:market-polymatroid} that for general polymatroidal environments,
Walrasian equilibria exist.  
\paragraph{Trading networks.}
In Section~\ref{sec:trading}, we study trading networks
as introduced by Hatfield et al.~\cite{HatfieldKNOW15}.
We  show that also this class of games fits
into the framework. The main
conceptual difference to the previous market
equilibrium setting is that players
may be both buyers and sellers at the same time.
As our model allows negative resource consumption, we just use
 $\{-1,1\}$ entries in an allocation vector to distinguish buy
 or sell activities.
Using the relationship
between a game and its convexified game, we prove an LP-characterization result
of the existence of trading equilibria (Corollary~\ref{cor:trading-LP}).
This characterization was, to the best of our knowledge, not known before.
For gross-substitute valuations, we give a simple proof for the existence
of Walrasian equilibria using again $M$-convexity arguments.
%This proof is considerably simpler than that of Hatfield et %al.~\cite{HatfieldKNOW15}.

\paragraph{Congestion control in communication networks.}
In the final Section~\ref{sec:congestion-control},
we consider congestion control problems in communication
networks using a flow-based model proposed by Kelly et al.~\cite{Kelly98}.
We first reprove an existence result of enforceable capacity vectors
of Kelly et al.~\cite{Kelly98}. Then, we turn to the much less
explored model of \emph{integral flows},
where a discrete unit-packet size is given.  With the previous
results related to TDI systems, we prove that for single-source
networks with identical linearly increasing bandwidth utility functions, every nonnegative
capacity vector is weakly enforceable with market prices
(Corollary~\ref{cor:kelly-integral}).
We complement this result by showing in Proposition~\ref{prop:congestion-control-reduction} that already for two-player
instances with different source-sink pairs and linear and identical  (capped)
bandwidth utility functions, not every $\vec u$ is weakly enforceable by market prices, unless $P=NP$. For this result, we use the LP-characterization of Theorem~\ref{concave-main} and then show that the demand
problem is polynomial time solvable while the master-problem is NP-hard.

\subsection{Related Work}
As outlined in the introduction, the topic of pricing resources 
concerns different streams of literature and it seems
impossible to give a complete overview here.
Lagrangian multipliers date back to the 18th century and their use in terms of \emph{shadow prices}
measuring the change of the optimal value function
for marginal changes of the right-hand sides of constraints is well-known -- 
assuming some constraint qualification conditions, see 
for instance Boyd and Vandenberghe~\cite{Boyd:2004}.

Our first main result  (Theorem~\ref{thm:main}) relies on a decomposition property of the Lagrangian
(for separable problems) and the use of Lagrange multipliers for pricing the resources. This approach is by no means new and has
been developed in several facets before, see for instance Dantzig and Wolfe~\cite{Dantzig:1960} and Bertsekas and Ghallager~\cite{BertsekasGallager1992}.
Dantzig and Wolfe~\cite{Dantzig:1960} used this principle
for their  celebrated decomposition framework for solving certain
linear (integer) programming problems.
Bertsekas and Ghallager~\cite{BertsekasGallager1992}, Palomar and Chiang~\cite{Palomar:2006} and Scutari et al.~\cite{ScutariPFP10} described how the Lagrangian of a general separable optimization problem
\[ \max\Big\{\sum_{i\in N}U_i(\vec x_i) \vert \vec x_i\in X_i, i\in N, \; \sum_{i\in N}h_i(\vec x_i)\leq \vec u\Big\}\]
can be decomposed into $n$ independent problems.
These works  describe the close connection
between strong duality and the existence of enforcing dual prices.
One subtle difference of this model
to ours is the parameterization
of the cost/utility functions  $\pi_i(\vec u,\vec x_i) $ with respect to the capacity vector $\vec u$.
This degree of freedom allows to model
dependencies of targeted capacity vectors with respect
to the intrinsic cost/utilities - a prime example appears in nonatomic congestion games, where the cost function of an agent \emph{only} depends on the aggregated load vector. Moreover, this dependency allows
to model \emph{externalities} with respect to allocations
which are not directly possible in the previous formulations.
In contrast to most works in the ``dual-decomposition'' area, we systematically investigate the impact of
\emph{non-convexities} of the cost/utility functions
and the strategy spaces (e.g., integrality of strategies)
on the resulting enforceability properties.

\paragraph{Convexification of Non-Convex Models.}
The idea of convexifying a non-convex economic model
dates back to the late sixties starting with the work 
of Shapley and Shubik~\cite{Shapley66} and Starr~\cite{Starr1969}.
Starr~\cite{Starr1969} considered a standard Arrow-Debreau exchange economy 
without convexity assumptions on production or consumption
sets nor on the preference ordering. The analysis of the existence of competitive market equilibria is based
on a \emph{convexified economy} in which the
convex hull of production or consumption sets and  the convex hull of the epigraph with respect to the preference orderings are considered (see also later related works of Henry~\cite{henry72}, Moore et al.~\cite{Moore72}
and Svensson~\cite{svensson84}). 
By separation arguments, this convexified economy permits
a competitive equilibrium (called a synthetic convex equilibrium).
A quasi-equilibrium lives in the original non-convex model and is defined as  a closest approximation within  w.r.t the synthetic equilibrium. 
With the Shapley-Folkman Theorem (which appeared inside the paper
of Starr) the approximation guarantee can be parameterized in terms
of the number of commodities or number of traders involved.\footnote{The
bound was recently improved by Budish and Reny~\cite{Budish20}.}
For large markets (number of traders tends to be large) this distance
vanishes.\footnote{In the spirit of large markets, 
Aumann~\cite{Aumann66} derived a very general existence result
of competitive equilibria
assuming a continuum of traders but without any convexity assumptions.}
The approach of convexifying a game in this paper is qualitatively 
similar to that of Starr. The main difference lies in the representation
of the convexified game. Instead of convexifying the epigraph
of utility level sets as in Starr, we explicitly use convex envelopes
of the utility functions which allow (in the context of separable problems) to define a convex master-problem. This way, we obtain for 
(1) games with finite strategy spaces
or (2) games for which the convex hull of the strategy space is finitely generated
and the cost/utility functions is concave/convex
a tractable LP formulation for the master problem.
The assumption (1) for instance is fulfilled  for exchange markets
with indivisible items and the representation of the convexified game
corresponds to the so-called configuration LP of Bikchandani
and Mamer~\cite{Bikchandani1997} and  Bikchandani
and Ostroy~\cite{BikhchandaniO02}.

\paragraph{Tolls in Traffic Networks.}
A large body of work in the area of transportation networks is
concerned with congestion toll pricing.
Beckmann et al.~\cite{Beckmann56} showed that for the Wardrop model
with homogeneous users, charging the difference between the marginal
cost and the real cost in the socially optimal solution (marginal cost
pricing) leads to an equilibrium flow which is optimal. Cole et al.~\cite{cole2003pne}
considered the case of heterogeneous users,
that is, users value latency relative to monetary cost differently.
For single-commodity networks, the authors showed the existence of
tolls that induce an optimal flow as Nash flow. Yang and Huang~\cite{Yang04}, Fleischer et al.~\cite{Fleischer04} and Karakostas and Kolliopoulos~\cite{Karakostas04} proved that there are tolls inducing an optimal flow for heterogenous
users even in general networks - all proofs are based on linear programming duality.
Swamy~\cite{swamy07} and Yang and Zhang~\cite{YangX08} proved
the existence of optimal tolls for the
atomic splittable model using convex programming duality. 

For atomic (unsplittable) network congestion games much less
is known regarding the existence of tolls.
Caragiannis et al.~\cite{CaragiannisKK10} studied the existence of tolls for singleton congestion games.
%For results related to the current model, they showed
%that for singleton strategies and  nondecreasing homogeneous cost
%functions, the minimum cost solution can be enforced by tolls.
Fotakis and Spirakis~\cite{FotakisS08} proved the existence
of tolls inducing any acyclic integral flow for symmetric $s$,$t$ network games with homogeneous players.
Fotakis et al.~\cite{FotakisKK10} further extended this result to heterogeneous players and networks with a common source but different sinks. Marden et al.~\cite{Marden09} transferred the idea of charging marginal cost tolls to congestion games and showed
the existence of tolls enforcing the load vector of a socially
optimal strategy distribution. Very recently, Chandan et al.~\cite{ChandanPFM19} derived an optimization
formulation computing optimal tolls
minimizing the resulting price of anarchy.

\paragraph{Market Equilibria.}
For the problem of allocating indivisible single-unit items,
there are several characterizations of the
existence of competitive equilibria related to
the gross-substitute property of valuations, see 
Kelso and Crawford~\cite{Kelso82}, Gul and Stachetti~\cite{Gul99}
and Ausubel and Milgrom~\cite{Ausubel2002}.
Several works established connections
of the equilibrium existence problem w.r.t. LP-duality and integrality (see Bikchandani
and Mamer~\cite{Bikchandani1997}, Bikchandani
and Ostroy~\cite{BikhchandaniO02} and Shapley and Shubik~\cite{Shapley1971}). Murota and Tamura~\cite{MurotaT03,Murota:2003}
established connections between the gross substitutability property
and M-convexity properties of demand sets and valuations.
Yokote~\cite{Yokote2018} recently proved that the existence
of Walrasian equilibria follows from a duality property in
discrete convexity.

For multi-unit items, several recent papers
studied the existence of Walrasian equilibria. 
Danilov et al.~\cite{Danilov2001} investigated the existence
of Walrasian equilibria in multi-unit auctions and identified general
conditions on the demand sets and valuations
related to discrete convexity, see also
Milgrom and Struluvici~\cite{Milgrom2009} and Ausubel~\cite{Ausubel2006}. 
Baldwin and Klemperer~\cite{Baldwin2019} explored a connection with tropical geometry and gave necessary and sufficient conditions for the existence of competitive equilibrium in product-mix auctions of indivisible goods, see also Sun and Yang~\cite{Sun2009}.
% and
%Tran and Yu~\cite{Tran2015}.
%gave a new proof of the sufficiency condition of~\cite{Baldwin2019} using a %unimodularity theorem in integer programming.
For a comparison of the above works especially
with respect to the role of discrete convexity, we refer to the excellent survey
of Shioura and Tamura~\cite{Shioura2015}. Candogan et al.~\cite{Candogan2018,CandoganP18} showed that valuations classes
(beyond GS valuations) based on graphical structures also imply the existence of Walrasian equilibria. Their proof also uses
integrality of optimal solutions of an associated linear min-cost
flow formulation and linear programming formulation, respectively.

Our existence result for polymatroid environments
 differs to these
previous works in the sense
that we allow valuations to depend on the allocation
of items to other players (negative externalities).
Much fewer works allow for externalities in valuation
functions, see for instance Zame and Noguchi~\cite{Zame2006}.
Models with positive (network-based) externalities  have been considered by
Candogan et al.~\cite{Candogan2012}. Bhattacharya et al.~\cite{BhattacharyaKMX11} considered a setting
with weighted negative network-based externalities and unit-demand buyers.
Bikchandani et al.~\cite{Bikchandani2011} consider a problem of
selling a base of polymatroid. In their model, however, the prices
are not anonymous (rather VCG) for several items of the same type. The same holds true for
Goel et al.~\cite{Goel2015}  who also consider polymatroids
even  with budget constraints.
Feldman et al.~\cite{FeldmanGL16} proposed the notion
of combinatorial Walrasian equilibria, where items
can be packed a priori into bundles. This
ensures the existence of equilibria with approximately optimal welfare
guarantees.
\iffalse
As mentioned before, Roughgarden and Talgam-Cohen~\cite{Roughgarden:2015}
established an interesting and far-reaching connection between the equilibrium existence of Walrasian equilibria and
the computational complexity of the allocation and demand problems.
With our LP-characterization given in Theorem~\ref{thm:convex-hull-finite} and Theorem~\ref{concave-main}, we can apply the methodology of
Roughgarden and Talgam-Cohen even to a wider class of problems.
\fi
%, where
%in the former, one computes a welfare or revenue optimal
%allocation given the valuations.

\paragraph{Trading Networks.}

Hatfield et al.~\cite{Hatfield13,HatfieldKNOW15}
introduced the model of trading networks
and established existence and characterization results
for so-called fully-substitutable valuations -- a generalization
of gross-substitutable valuations.
Ikebe et al.~\cite{Ikebe15}  generalized the model
of Hatfield et al. by using
certain discrete concave utility functions for which they derived existence results.
Subsequently, Candogan et al.~\cite{Candogan:2016}
reduced the problem of computing competitive equilibria
to a submodular  flow problem on a suitably defined
network. This way, they established the polynomial time computation of 
market equilibria for fully substitutable valuations.
Further generalizations regarding the inclusion of taxes and
other monetary transfers appear in Fleiner et al.~\cite{Fleiner19}.

\paragraph{Congestion Control.}
 Kelly et al.~\cite{Kelly98} proposed to model congestion control
 via analyzing optimal solutions of a
convex optimization
problem, where an aggregated bandwidth utility subject to network
capacity constraints is maximized.
By dualizing the problem and then decomposing terms
(as we do in this paper),
it is shown that Lagrangian multipliers correspond to equilibrium
enforcing congestion prices. For an overview
on more related work in this area, we refer to the book by 
Srikant~\cite{srikant03}.
Kelly and Vazirani~\cite{Vazirani2002} drew connections between
market equilibrium computation and the congestion control model
of Kelly.
Cominetti et al.~\cite{Cominetti:2014}
also studied the convex programming formulation of Kelly et al. and
established  connections to the Wardrop equilibrium model.
The most obvious difference of these work to ours is
that they assume convex strategy spaces and concave utility
functions. Our framework allows to add integrality conditions
or non-convexities to the model.
\section{Connection to Lagrangean Duality in Optimization}\label{sec:gap}
In the following, we distinguish between \emph{cost minimization problems}
and \emph{utility maximization problems}.
We explicitly prove our main results in the realm
of  cost minimization but all arguments
carry directly over to the maximization case.
For later referral, we summarize the 
results for the maximization case at the end of the section.

%\subsection{Cost Minimization Problems}\label{subsec:cost-min}
For a game $G^{\min}(\vec u)$, we define the following minimization problem
that we call \emph{master problem}:
\begin{framed}
\begin{align}\tag{$P^{\min}(\vec u)$}\label{price-opt}
\min\; & \pi(\vec x) \\
\text{s.t.: } &  \ell_j(\vec x) \leq u_j, \; j\in E, \label{eq:inequality}\\
\vec x_i & \in X_i, \; i=1,\dots,n,\notag
\end{align}
where the objective function is defined as
$ \pi(\vec x):=\sum_{i\in N}\pi_i(\vec u,\vec x_i).$
\end{framed}
We  assume in the formulation of~\ref{price-opt} that a global minimum actually
exists. 
The Lagrangian function for problem~\ref{price-opt} becomes
$ L(\vec x,\bm\lambda):=\pi(\vec x) +\bm\lambda^\intercal (\ell(\vec x)-\vec u) ,\; \bm \lambda\in \R_+^m,$
and we can define the Lagrangian-dual as:
$ \mu : \R_+^m \rightarrow\R, \;\;
\mu(\bm \lambda)=\inf_{\vec x \in X} L(\vec x,\bm \lambda)=\inf_{ \vec x \in X}\{\pi(\vec x)+\bm\lambda^\intercal (\ell(\vec x)-\vec u)\}.
$
We assume that $\mu(\bm \lambda)=-\infty$, if $L(\vec x,\bm \lambda)$ 
is not bounded from below on $X$.
The \emph{dual problem} is defined as:
\begin{align}\label{price-dual-min}
\tag{$D^{\min}(\vec u)$} \sup_{\bm \lambda\geq 0} \mu(\bm \lambda)\end{align}

\begin{definition}
Problem~\ref{price-opt} has zero-duality gap,
if there is $\bm \lambda^*\in \R_+^m$ and $\vec x^*\in X$ with
$\pi(\vec x^*)=\mu(\bm \lambda^*).$
In this case, we say that the pair $(\vec x^*,\bm \lambda^*)$ is
primal-dual optimal.
\end{definition}

If problem~\ref{price-opt} has zero-duality gap,
the two solutions $\bm \lambda^*\in \R_+^m$ and $\vec x^*\in X$
are optimal for their respective problems~\ref{price-dual-min} and \ref{price-opt}
and  infima/suprema in the definition of $\mu$ become a minimum/maximum.

We now show a key structure, namely that 
the Lagrangian dual can be decomposed into $n$ independent subproblems.
This decomposition step is classical for separable optimization problems,
see Bertsekas and Ghallager~\cite{BertsekasGallager1992}.
\begin{lemma}\label{eq:decomposition}
Let $\bm \lambda\in \R_+^m$. For a problem of type~\ref{price-opt}, the following holds true:
\begin{equation}
\vec x^*\in\arg\min_{\vec x \in X} L(\vec x,\bm \lambda)\Leftrightarrow \vec x_i^*\in\arg\min_{\vec x_i\in X_i} \{\pi_i(\vec u,\vec x_i)+\bm\lambda^\intercal g_i(\vec x_{i}) \} \text{ for all }i\in N.
\end{equation}
\end{lemma}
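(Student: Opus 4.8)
The plan is to exploit the \emph{separable} structure of the Lagrangian. First I would substitute the definitions $\pi(\vec x)=\sum_{i\in N}\pi_i(\vec u,\vec x_i)$ and $\ell(\vec x)=\sum_{i\in N} g_i(\vec x_i)$ into $L(\vec x,\bm\lambda)=\pi(\vec x)+\bm\lambda^\intercal(\ell(\vec x)-\vec u)$ and regroup the summands by player. This yields the identity
\[
L(\vec x,\bm\lambda)=\sum_{i\in N}\bigl[\pi_i(\vec u,\vec x_i)+\bm\lambda^\intercal g_i(\vec x_i)\bigr]-\bm\lambda^\intercal\vec u,
\]
in which each bracketed term depends only on the single block $\vec x_i$, while the trailing term $-\bm\lambda^\intercal\vec u$ is a constant independent of $\vec x$. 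Since the feasible set $X=\times_{i\in N}X_i$ is a Cartesian product, minimizing a sum of block-separable terms over $X$ decouples into $n$ independent minimizations over the $X_i$, and the additive constant does not affect the set of minimizers.

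For the forward direction I would argue by a single-player deviation: fix $i\in N$ and, for an arbitrary $\vec x_i\in X_i$, form the distribution that agrees with $\vec x^*$ in every block except $i$, where it equals $\vec x_i$; this lies in $X$ by the product structure. Optimality of $\vec x^*$ together with the cancellation of all but the $i$-th bracket gives $\pi_i(\vec u,\vec x_i^*)+\bm\lambda^\intercal g_i(\vec x_i^*)\le \pi_i(\vec u,\vec x_i)+\bm\lambda^\intercal g_i(\vec x_i)$, so $\vec x_i^*$ solves the $i$-th subproblem. Conversely, if each $\vec x_i^*$ is a minimizer of its subproblem, I would sum these $n$ inequalities over $i$ and subtract the common constant $\bm\lambda^\intercal\vec u$ to recover $L(\vec x^*,\bm\lambda)\le L(\vec x,\bm\lambda)$ for every $\vec x\in X$, which establishes global optimality.

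There is no genuine obstacle here; the statement is the standard decomposition of a separable Lagrangian, and the only point requiring a little care is that the argmin-correspondence presupposes existence of minimizers (or, equivalently, should be read as an equivalence of the respective, possibly empty, solution sets). The crux is simply that $\bm\lambda^\intercal g_i(\vec x_i)$ pairs the multiplier $\bm\lambda$ only with the $i$-th consumption vector and that $-\bm\lambda^\intercal\vec u$ is constant, so that over the product domain $X$ the coordinates become fully independent; once the regrouped identity is written down, both implications are immediate.
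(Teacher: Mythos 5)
Your proposal is correct and follows essentially the same route as the paper: both exploit the separability of $\pi$ and the linearity of $\ell$ in the $g_i$ to regroup $L(\vec x,\bm\lambda)$ into per-player brackets plus the constant $-\bm\lambda^\intercal\vec u$, and then use the Cartesian-product structure of $X$ to decouple the minimization. The paper states the interchange $\min\sum=\sum\min$ in one line, whereas you spell out the two directions (single-player deviation, then summing the subproblem inequalities), which is a slightly more explicit rendering of the same argument.
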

\begin{proof} 
We calculate:
\begin{align*}
%\mu(\bm \lambda)&=
\min_{\vec x \in X} L(\vec x,\bm \lambda)&=\min_{\vec x_i\in X_i,i\in N}\Big\{\sum_{i\in N}(\pi_i(\vec u,\vec x_i)+\sum_{j=1}^m \lambda_j( g_{ij}(\vec x_{i})-u_j)) \Big\}\\
&=\sum_{i\in N}  \min_{\vec x_i\in X_i} \Big\{\pi_i(\vec u, \vec x_i)+\sum_{j=1}^m \lambda_j ( g_{ij}(\vec x_{i})-u_j) \Big\},
\end{align*}
where the first equality follows by the linearity of $\ell(\vec x)$
w.r.t. $g_i, i\in N$
and the last equality by the assumption that $\pi_i(\vec u, \vec x_i)$
only depends on $\vec x_i\in X_i$. Because taking the minimum
is independent of the constant $-\sum_{j=1}^m\lambda_j u_j$, the lemma follows.
\end{proof}
We obtain the following result.
\begin{theorem}\label{thm:main}
The following equivalences hold for $G^{\min}(\vec u)$.
\begin{enumerate}
\item\label{enum:main1} A capacity vector $\vec u\in \R^m$ is enforceable
via $(\vec x^*,\bm \lambda^*)$ if and only if $(\vec x^*,\bm \lambda^*)$  has zero duality gap for~\ref{price-opt}  and $\vec x^*$ 
satisfies~\eqref{eq:inequality} with equality. 
\item\label{enum:main2} 
A capacity vector $\vec u\in \R^m$ is weakly enforceable
via $(\vec x^*,\bm \lambda^*)$ with market clearing prices $\bm \lambda^*$ if and only if $(\vec x^*,\bm \lambda^*)$ has zero duality gap for~\ref{price-opt}. 
\item\label{enum:main-unique} 
A capacity vector $\vec u\in \R^m$ is uniquely enforceable
via $(\vec x^*,\bm \lambda^*)$ if and only if $(\vec x^*,\bm \lambda^*)$ has zero duality gap for~\ref{price-opt}  and $\vec x^*$ is
a unique optimal solution for~\ref{price-opt}
satisfying~\eqref{eq:inequality} with equality.  
\end{enumerate}
\end{theorem}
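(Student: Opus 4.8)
The plan is to route all three equivalences through Lemma~\ref{eq:decomposition}, which is the only step that exploits the separable structure of $\pi$. Its content is that, for a fixed price vector $\bm\lambda^*\in\R_+^m$, the simultaneous best-response condition~\ref{cond2} of Definition~\ref{def:enforceable} holds for all players if and only if $\vec x^*\in\arg\min_{\vec x\in X}L(\vec x,\bm\lambda^*)$, equivalently $L(\vec x^*,\bm\lambda^*)=\mu(\bm\lambda^*)$. Pairing this with the elementary identity $L(\vec x^*,\bm\lambda^*)=\pi(\vec x^*)+(\bm\lambda^*)^\intercal(\ell(\vec x^*)-\vec u)$ reduces everything to bookkeeping on the single quantity $(\bm\lambda^*)^\intercal(\ell(\vec x^*)-\vec u)$. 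Throughout I use that a pair with zero duality gap has, by the observation following the definition of the gap, a feasible optimal primal point $\vec x^*$, so in particular $\ell(\vec x^*)\leq\vec u$.

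For part~\ref{enum:main1}, condition~\ref{cond1} is exactly ``$\vec x^*$ satisfies~\eqref{eq:inequality} with equality'', i.e.\ $\ell(\vec x^*)=\vec u$; this annihilates the slack term and collapses the Lagrangian to $L(\vec x^*,\bm\lambda^*)=\pi(\vec x^*)$. Hence, once $\ell(\vec x^*)=\vec u$ is assumed, Lemma~\ref{eq:decomposition} turns condition~\ref{cond2} into the single equation $\pi(\vec x^*)=\mu(\bm\lambda^*)$, which is precisely zero gap, and conversely. Both implications are then immediate: enforceability, namely condition~\ref{cond1} together with condition~\ref{cond2}, is equivalent to equality in~\eqref{eq:inequality} together with zero gap.

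For part~\ref{enum:main2}, the equality is replaced by $\ell(\vec x^*)\leq\vec u$ plus the Walrasian law. The elementary observation I would isolate first is that, under $\bm\lambda^*\geq0$ and $\ell(\vec x^*)\leq\vec u$, every summand $\lambda^*_j(\ell_j(\vec x^*)-u_j)$ is nonpositive, so the law ``$\ell_j(\vec x^*)<u_j\Rightarrow\lambda^*_j=0$'' is equivalent to the complementary-slackness equation $(\bm\lambda^*)^\intercal(\ell(\vec x^*)-\vec u)=0$. In the forward direction, Lemma~\ref{eq:decomposition} gives $\mu(\bm\lambda^*)=L(\vec x^*,\bm\lambda^*)$ and cancelling the (vanishing) slack term yields $\pi(\vec x^*)=\mu(\bm\lambda^*)$. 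For the converse I would sandwich the slack term: $\mu(\bm\lambda^*)\leq L(\vec x^*,\bm\lambda^*)$ forces it to be nonnegative, while feasibility forces it nonpositive, so it is $0$; this simultaneously gives $L(\vec x^*,\bm\lambda^*)=\mu(\bm\lambda^*)$ (hence condition~\ref{cond2} via Lemma~\ref{eq:decomposition}) and the Walrasian law. This sandwich is the one genuinely non-formal step.

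Part~\ref{enum:main-unique} I would obtain by lifting part~\ref{enum:main1} to solution sets: for a fixed $\bm\lambda^*$ with zero gap, part~\ref{enum:main1} shows that the $\vec x\in X$ satisfying~\ref{cond1} and~\ref{cond2} are exactly the optimal solutions of~\ref{price-opt} meeting~\eqref{eq:inequality} with equality, since any such $\vec x$ has $\pi(\vec x)=\mu(\bm\lambda^*)$ and is therefore optimal, and vice versa. Uniqueness then transfers verbatim between the two descriptions, giving the claimed equivalence. The maximization versions follow by the sign-symmetric argument promised in the text (replace $\min$ by $\max$, $\pi_i$ by $v_i$, and $+\bm\lambda^\intercal g_i$ by $-\bm\lambda^\intercal g_i$). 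Beyond invoking Lemma~\ref{eq:decomposition}, I expect the only real care to lie in keeping the feasibility of $\vec x^*$ explicit, so that the slack term carries a definite sign in part~\ref{enum:main2}.
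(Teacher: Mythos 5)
Your proof is correct and follows essentially the same route as the paper: both arguments hinge on Lemma~\ref{eq:decomposition} to identify the per-player best-response conditions with minimization of the Lagrangian, and on the inequality chain $\mu(\bm\lambda^*)\leq L(\vec x^*,\bm\lambda^*)\leq\pi(\vec x^*)$ to force complementary slackness (your ``sandwich''). The only difference is organizational --- the paper proves part~\ref{enum:main2} first and derives parts~\ref{enum:main1} and~\ref{enum:main-unique} from it, whereas you prove part~\ref{enum:main1} directly and lift it to solution sets for part~\ref{enum:main-unique} --- which does not change the mathematical content.
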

\begin{proof}
For the proof it suffices to show~\ref{enum:main2}.,
since~\ref{enum:main1}. satisfies all conditions
of~\ref{enum:main2}. except that $\ell(\vec x^*)=\vec u$
holds true for either side of the equivalence in ~\ref{enum:main1}.
Statement~\ref{enum:main-unique}. follows directly from~\ref{enum:main1}.
as on both sides of ~\ref{enum:main-unique}. uniqueness of $\vec x^*$ is assumed.\\
%The statement~\ref{enum:main3}. follows from the if-condition
%of equivalence~\ref{enum:main2}.\\
For \ref{enum:main2}.: $\Leftarrow:$
Assume there are $\bm\lambda^*\in \R_+^m, \vec x^*\in X$ with  $\ell(\vec x^*)\leq \vec u$
so that $\mu(\bm\lambda^*)=\pi(\vec x^*)$. We obtain
\[ \mu(\bm\lambda^*)=\min_{\vec x \in X} \{\pi(\vec x)+(\bm\lambda^*)^\intercal(\ell(\vec x) -\vec u)\}\leq \pi(\vec x^*)+(\bm\lambda^*)^\intercal(\ell(\vec x^*) -\vec u)\leq \pi(\vec x^*)=\mu(\bm\lambda^*).\]
Hence, all inequalities must be tight leading to
$(\bm\lambda^*)^\intercal(\ell(\vec x^*) -\vec u)=0$  as claimed.
 It remains to prove Condition~\ref{cond2}.
With  $\vec x^*\in \arg\min_{\vec x \in X} L(\vec x,\bm\lambda^*)$ we get
\begin{align}\notag \vec x^*\in\arg\min_{\vec x \in X} L(\vec x,\bm\lambda^*)\underset{Lem.~\ref{eq:decomposition}}\Leftrightarrow& \vec x_i^*\in\arg\min_{\vec x_i\in X_i} \{\pi_i(\vec u,\vec x_i)+\sum_{j\in E} \lambda_j^*  g_{ij}(\vec x_{i}) \} \text{ for all }i\in N.
\end{align}

$\Rightarrow:$
Let $\vec u\in \R^m$ be weakly enforceable by some $\vec x^*\in X$ 
with market clearing prices $\bm \lambda^*\in \R_+^m$, that is,  $(\vec x^*,\bm \lambda^*)$ satisfy
$\ell(\vec x^*)\leq \vec u, (\lambda^*)^\intercal (\ell(\vec x^*)-\vec u)=0$ and
$ \vec x^*_i\in\arg\min_{\vec x_i\in X_i}\left\{ \pi_i(\vec u,\vec x_i)+(\bm\lambda^*)^\intercal  g_{i}(\vec x_{i})\right\} \text{ for all }i\in N.$
We calculate
\begin{align}\notag
\mu(\bm \lambda^*)&=\inf_{\vec x \in X} \{\pi(\vec x)+(\bm\lambda^*)^\intercal(\ell(\vec x) -\vec u)\}\\
\iffalse
\label{eq:4}
&=\inf_{\vec x \in X} \{\sum_{i\in N} \pi_i(\vec u, \vec x_i)+(\bm\lambda^*)^\intercal g_i(\vec x_i)\}
-(\bm\lambda^*)^\intercal\vec u\\\label{eq:5}
&=\sum_{i\in N}  \inf_{\vec x_i\in X_i} \left\{\pi_i(\vec u, \vec x_i)+(\bm\lambda^*)^\intercal g_i(\vec x_i)\right\}-(\bm\lambda^*)^\intercal\vec u\\
&=\sum_{i\in N}  \min_{\vec x_i\in X_i} \left\{\pi_i(\vec u, \vec x_i)+ \notag(\bm\lambda^*)^\intercal g_i(\vec x_i)\right\}-(\bm\lambda^*)^\intercal\vec u\\%
%&=\sum_{i\in N}  \inf_{\vec x_i\in X_i} \left\{\pi_i(\ell(\vec x), \vec x_i)+(\bm\lambda^*)^\intercal\vec x_i\right\}-(\bm\lambda^*)^\intercal\vec u\\
\fi
\label{eq:6} &= \pi(\vec x^*)+(\bm\lambda^*)^\intercal \ell(\vec x^*)-(\bm\lambda^*)^\intercal\vec u\\\label{eq:7}
&=\pi(\vec x^*),
\end{align}
\iffalse
where~\eqref{eq:4} follows from the definition of $\pi(\vec x)$ and the linearity of  $\ell(\vec x)$ w.r.t. $g_i(\vec x_i), i\in N$,~\eqref{eq:5}
follows because $\pi_i(\vec u, \vec x_i)$ only depends on $\vec x_i$,~
\fi
where
\eqref{eq:6} follows from Lemma~\ref{eq:decomposition} and~\eqref{eq:7}
uses the market price condition $(\bm \lambda^*)^\intercal (\ell(\vec x^*)-\vec u)=0$.
Hence, strong duality holds for the pair $(\vec x^*,\bm\lambda^*)$.
\end{proof}

As mentioned before, the if-direction of the above characterizations are
well known in  the literature, see, e.g.\cite{BertsekasGallager1992,Kelly98,Palomar:2006,ScutariPFP10}.
We remark here that the theorem does not rely on any assumption
on the feasible sets $X_i$ nor on the functions $\pi_i(\vec u,\vec x_i), i\in N$
as long as~\ref{price-opt} has zero duality gap.
In the optimization literature, several classes
of optimization problems are
known to have zero duality gap even \emph{without} convexity
of feasible sets and objective functions, see for instance Zheng et al.~\cite{Zheng2012}.
In cost minimization games, the feasible sets $X_i$
 usually contain some sort of covering conditions
 on the resource consumption.
 For example in network routing, one needs
 to send some prescribed amount of flow.
 In this regard, we introduce a natural candidate set of vectors $\vec u$
 for which we know that any feasible solution satisfying~\eqref{eq:inequality} does so with equality.
\begin{definition}\label{def:minimal}
A vector $\vec u\in \R^m$ is called \emph{minimal} for $X$,
if there are strictly increasing functions $h_j:\R\rightarrow\R ,j\in E$ such that
$ \vec u\in\arg\min_{\vec u'\in \R^m} \left\{\sum_{j\in E}h_j(u'_j)\middle\vert \;\;\exists \vec x\in X \text{ with }\ell(\vec x)\leq \vec u'\right\}.$
\end{definition}
The above definition has been previously used 
by Fleischer et al.~\cite{Fleischer04} in the context
of enforcing tolls in nonatomic congestion games.
\begin{corollary}\label{thm:main2}
%Let $X_i, i\in N$ be non-empty convex sets
%and assume that $\pi_i, i\in N$ are convex functions.
Let $\vec u\in \R^m$ be minimal for $X$.
Then, the following two statements are equivalent:
\begin{enumerate}
\item\label{char1} $\vec u$ is enforceable via price vector $\bm\lambda^*\in\R_+^m$
and  $\vec x^*\in X$.
\item\label{char2}   $(\vec x^*,\bm \lambda^*)$
satisfies $\pi(\vec x^*)=\mu(\bm \lambda^*)$.
\end{enumerate}
\end{corollary}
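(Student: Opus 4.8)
The plan is to derive both implications directly from Theorem~\ref{thm:main}, using minimality of $\vec u$ only to supply the single missing ingredient that distinguishes enforceability from weak enforceability, namely that the capacity constraint~\eqref{eq:inequality} is automatically tight at every feasible point. The direction~\ref{char1}$\Rightarrow$\ref{char2} is immediate and does not even use minimality: if $\vec u$ is enforceable via $(\vec x^*,\bm\lambda^*)$, then the forward direction of part~\ref{enum:main1} of Theorem~\ref{thm:main} shows that the pair has zero duality gap, which by definition means $\pi(\vec x^*)=\mu(\bm\lambda^*)$. The substance of the corollary therefore lies entirely in the converse, and in the structural fact extracted from Definition~\ref{def:minimal}.

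The key step I would isolate first is the following consequence of minimality: \emph{if $\vec u$ is minimal for $X$, then every $\vec x\in X$ with $\ell(\vec x)\le\vec u$ already satisfies $\ell(\vec x)=\vec u$.} I would prove this by contradiction. Suppose some $\vec x\in X$ had $\ell_k(\vec x)<u_k$ for an index $k$ while $\ell_j(\vec x)\le u_j$ for all $j\in E$. Setting $\vec u':=\ell(\vec x)$, the point $\vec x$ itself witnesses the constraint $\ell(\vec x)\le\vec u'$ appearing in the minimization of Definition~\ref{def:minimal}, so $\vec u'$ is an admissible competitor to $\vec u$. Since each $h_j$ is strictly increasing and $u'_j\le u_j$ for all $j$ with strict inequality at the coordinate $k$, one obtains $\sum_{j\in E}h_j(u'_j)<\sum_{j\in E}h_j(u_j)$, contradicting the optimality of $\vec u$. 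Hence no slack coordinate can exist and $\ell(\vec x)=\vec u$.

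With this lemma in hand, the direction~\ref{char2}$\Rightarrow$\ref{char1} assembles cleanly. Assuming $\pi(\vec x^*)=\mu(\bm\lambda^*)$, the pair $(\vec x^*,\bm\lambda^*)$ has zero duality gap, so part~\ref{enum:main2} of Theorem~\ref{thm:main} guarantees that $\vec u$ is weakly enforceable with market clearing prices via $(\vec x^*,\bm\lambda^*)$; in particular $\vec x^*\in X$ satisfies $\ell(\vec x^*)\le\vec u$ together with the per-player optimality condition~\ref{cond2}. The minimality lemma then upgrades $\ell(\vec x^*)\le\vec u$ to $\ell(\vec x^*)=\vec u$, i.e.\ constraint~\eqref{eq:inequality} holds with equality. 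Since $(\vec x^*,\bm\lambda^*)$ simultaneously has zero duality gap and saturates~\eqref{eq:inequality}, the converse (``if'') direction of part~\ref{enum:main1} of Theorem~\ref{thm:main} yields that $\vec u$ is enforceable via $(\vec x^*,\bm\lambda^*)$, which is statement~\ref{char1}.

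I do not anticipate a serious analytic obstacle: the corollary is essentially a bookkeeping consequence of Theorem~\ref{thm:main}. The one point demanding care is the logical hinge separating enforceability from weak enforceability, namely the tightness of~\eqref{eq:inequality}, and the whole weight of the argument rests on extracting this tightness from Definition~\ref{def:minimal}. I would make sure the \emph{strict} monotonicity of the $h_j$ is used in exactly the right place, since it is precisely what rules out a feasible point with slack in some coordinate, and I would note that the contradiction needs only a single slack coordinate rather than componentwise strict improvement.
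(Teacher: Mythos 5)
Your proof is correct and follows essentially the same route as the paper: the paper's own justification is the one-line observation that minimality of $\vec u$ forces $\ell(\vec x)=\vec u$ at every feasible point of~\ref{price-opt}, so the tightness requirement in Theorem~\ref{thm:main} is automatic, and you simply make that observation explicit via the (correct) contradiction argument using the strict monotonicity of the $h_j$. No gaps.
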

The only difference to Theorem~\ref{thm:main}
is  that by minimality of $\vec u$,
we  get $\ell(\vec x)=\vec u$
for any feasible solution of~\ref{price-opt},
therefore, tightness of inequality~\eqref{eq:inequality}
is already satisfied.

Let us now consider the important special
case of \emph{convex} optimization problems.
\begin{corollary}\label{cor:convex}
Let $X_i, i\in N$ be nonempty convex sets 
and assume that $\pi_i(\vec u, \vec x_i), g_i(\vec x_i), i\in N$ are convex functions
over $X_i$.
Let $\vec u\in \R^m$ be minimal and suppose there exists
$\vec x^0\in \relint\left(\{\vec x\in X\vert \ell(\vec x)\leq \vec u\}\right) $,
where $\relint(U)$ denotes the relative topological interior of $U\subset \R^{nm}$.
Then, $\vec u$ is enforceable.  
If $\pi(\vec x)=\sum_{i\in N}\pi_i(\vec u, \vec x_i)$ is strictly convex 
over $X$, then $\vec u$ is uniquely enforceable.\end{corollary}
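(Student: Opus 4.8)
The plan is to recognize \ref{price-opt} as a convex program for which the stated relative-interior hypothesis serves as a Slater-type constraint qualification, deduce strong duality, and then read off enforceability from Corollary~\ref{thm:main2}. First I would verify convexity of the master problem: the feasible set $X=\times_{i\in N}X_i$ is convex as a product of convex sets; the objective $\pi(\vec x)=\sum_{i\in N}\pi_i(\vec u,\vec x_i)$ is a finite sum of convex functions and hence convex; and each constraint $\ell_j(\vec x)-u_j=\sum_{i\in N}g_{ij}(\vec x_i)-u_j$ is convex because every component $g_{ij}$ is. Thus \ref{price-opt} minimizes a convex function over a convex set, and by the standing assumption its optimum $\vec x^*$ is attained.

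The core step is strong duality. I would introduce the perturbation (value) function $v(\vec b):=\inf\{\pi(\vec x)\mid \vec x\in X,\ \ell(\vec x)\le \vec u+\vec b\}$, which is convex in $\vec b$ and whose conjugate-based lower bound at $\vec b=\vec 0$ equals $\sup_{\bm\lambda\ge 0}\mu(\bm\lambda)$. The hypothesis that some $\vec x^0$ lies in $\relint(\{\vec x\in X\mid \ell(\vec x)\le \vec u\})$ places $\vec 0$ in the relative interior of $\operatorname{dom} v$; since $v$ is convex and finite at $\vec 0$, it is subdifferentiable there, and any subgradient provides a multiplier $\bm\lambda^*\in\R^m_+$ with $\pi(\vec x^*)=\mu(\bm\lambda^*)$ (this is standard convex duality, see Boyd and Vandenberghe~\cite{Boyd:2004}). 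In other words, $(\vec x^*,\bm\lambda^*)$ has zero duality gap for \ref{price-opt}. Because $\vec u$ is minimal for $X$, Corollary~\ref{thm:main2} then yields directly that $\vec u$ is enforceable via $(\vec x^*,\bm\lambda^*)$.

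For the uniqueness addendum, I would argue that strict convexity of $\pi$ over the convex feasible region forces the primal minimizer $\vec x^*$ to be unique, while minimality of $\vec u$ guarantees that every feasible point, and hence $\vec x^*$, satisfies~\eqref{eq:inequality} with equality. Combined with the zero duality gap established above, these are precisely the conditions appearing on the right-hand side of the equivalence in part~\ref{enum:main-unique} of Theorem~\ref{thm:main}, so $\vec u$ is uniquely enforceable.

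The delicate point is the constraint qualification, and I expect it to be the main obstacle. One must verify rigorously that the relative-interior hypothesis on the \emph{feasible set} (rather than the textbook Slater condition demanding a point in the relative interior of the \emph{domain} at which all nonaffine constraints hold strictly) indeed forces $\vec 0\in\relint(\operatorname{dom} v)$ and hence subdifferentiability of $v$ at the origin. In particular one should treat affine constraints $\ell_j$ (which need only be satisfied, not strictly) separately from genuinely nonlinear ones and confirm that the relative interior is taken within the common affine hull, so that the resulting multiplier certificate is not vacuous.
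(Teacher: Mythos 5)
Your route is the paper's route: view \ref{price-opt} as a convex program, invoke a Slater-type constraint qualification to obtain an attained dual maximizer with zero gap, and conclude enforceability from Corollary~\ref{thm:main2}; the uniqueness addendum via strict convexity, minimality and Statement~\ref{enum:main-unique} of Theorem~\ref{thm:main} is handled correctly. The paper's own proof is exactly this argument compressed to one sentence (``Slater's condition is satisfied''), so structurally you have reproduced it.

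However, the step you yourself flag as delicate is a genuine gap, and it does not close: the existence of $\vec x^0\in \relint\left(\{\vec x\in X\mid \ell(\vec x)\le\vec u\}\right)$ does \emph{not} place $\vec 0$ in $\relint(\operatorname{dom} v)$. The feasible set is a nonempty convex set, so its relative interior is automatically nonempty -- the hypothesis as written carries no information beyond feasibility. On the other hand, $\operatorname{dom} v$ is upward closed and contains the nonnegative orthant, hence is full-dimensional, so $\vec 0\in\relint(\operatorname{dom} v)$ is equivalent to genuine strict feasibility of the non-affine constraints. These two conditions are not equivalent, and the conclusion can actually fail in between. Concretely, take $n=m=1$, $X_1=[0,1]$, $g_1(x)=x^2$, $\pi_1(\vec u,x)=-x$ and $\vec u=0$. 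Then $\vec u$ is minimal, the feasible set is the singleton $\{0\}$, which equals its own relative interior, yet
\begin{equation*}
\mu(\lambda)=\min_{x\in[0,1]}\{-x+\lambda x^2\}=-\tfrac{1}{4\lambda}<0 \quad\text{for }\lambda>\tfrac12 ,
\end{equation*}
so the dual supremum $0$ is approached but never attained, and $x^*=0$ is not a minimizer of $-x+\lambda x^2$ for any $\lambda\ge 0$; thus $\vec u=0$ is not enforceable. So the implication you assert (and the corollary as literally stated) is false; the hypothesis has to be read as the actual Slater condition -- a point of $\relint(X)$ at which every non-affine constraint $\ell_j(\vec x)<u_j$ holds strictly, with affine constraints exempted as you indicate. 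Under that reading your perturbation-function argument (subdifferentiability of $v$ at $\vec 0$ yielding an attained multiplier) is the correct way to make the paper's one-line proof rigorous.
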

\iffalse
\begin{proof}
For~\ref{price-opt}, we have 
a convex objective over non-empty convex set.
Since~\ref{price-opt} is feasible and Slater's constrained qualification condition (cf.~\cite{slater1959}) is satisfied, we get that~\ref{price-opt}  has zero duality gap
and the result follows from Corollary~\ref{thm:main2}.

Unique enforceability follows  from uniqueness of the optimal
solution of~\ref{price-opt}.
\end{proof}
\fi
We moved analogous results for maximization problems to the appendix~\ref{subsec:max}.

\section{Convexified Games}\label{sec:convex-relaxations}
So far, the strategy spaces $X_i, i\in N$ and the cost functions $\pi_i,i\in N$
of a game $G^{\min}(\vec u)$ were 
not restricted and are allowed to be non-convex.
For instance integrality restrictions in $X_i\subset\Z^m, i\in N$ are 
allowed. In what follows, 
we connect $G^{\min}(\vec u)$
with a related convexified game $G^{\min-\conv}(\vec u)$,
where $X_i, i\in N$ are replaced by their convex hulls
and the cost functions $\pi_i$
are replaced by their \emph{convex envelope}
or \emph{convex subfunctionals}.
With this convexification, it follows that the
duals of the original master problem~\ref{price-opt}
and that of the convexified game are equal.
With this insight, the characterization of enforceable
vectors $\vec u$ can (in some cases) be reduced to a more
tractable convex problem. 
The overall idea of convexifying a (nonconvex)
optimization problem is quite old and belongs to the broad field of \emph{global optimization}. Let us refer here to standard textbooks 
of the late seventies such as that of 
Horst and Tuy~\cite[\S 4.3.]{HorstT96} or Shapiro~\cite[\S 5]{shapiro1979}.
%.
%Also the fact that the dual of a convexified optimization problem
%is equal to the original one is well-known, see the above textbooks
For an overview on duality theory of general  non-convex programs, we refer to the work of Lemar{\'{e}}chal and Renaud~\cite{LemarechalR01}.

For the general approach to work, we need to make some mild assumptions.
\begin{assumption}\label{ass:convex}
We impose the following assumptions.
\begin{enumerate}
\item The strategy spaces $X_i\subset\R^m,i\in N$ are compact.
\item The functions $\vec x_i\mapsto \pi_i(\vec u,\vec x_i)$
are lower-semi-continuous (lsc)  on $X_i$ for  all $i\in N$.
\item\label{ass:concave-g} The functions $g_i(\vec x_i), i\in N$
are lsc and concave on $X_i$.\end{enumerate}
\end{assumption}
\begin{remark}
Condition~\eqref{ass:concave-g} includes the case that $g_i$ is linear, that is, $g_i(\vec x_i)=G_i\vec x_i$, where $G_i\in \R^{m\times m}$ is an $m\times m$  matrix.
\end{remark}
For $X_i\subset\R^m$ denote
$\conv(X_i):=\cap\{K\supset X_i | K\subset\R^m \text{ convex} \}$
the \emph{convex hull} of $X_i$ which by Assumption~\ref{ass:convex}
is closed and convex.
By the theorem of Carath\'{e}odory, every $\vec x_i\in \conv(X_i)\subset\R^m$
can be represented as a convex combination of
at most $m+1$ points in $X_i$.
We thus get
\[ \conv(X_i)=\left\{\sum_{k=1}^{m+1} \alpha_{ik} \vec y^k \middle\vert \vec y^k\in X_i, k=1,\dots,m+1, \bm\alpha_i\in\Lambda\right\},
 \]
where 
$\Lambda:= \{\bm \alpha\in \R_+^{m+1} \vert \vec 1^\intercal\bm\alpha=1\}$.
We now define the concept of a \emph{convex envelope}, see Horst and Tuy~\cite[\S 4.3.]{HorstT96}.
\begin{definition}
Let $K\subset\R^m$ be any compact set
and let $f:K\rightarrow \R$ be lsc. A \emph{convex envelope}
of $f$ on $\conv(K)$ is a function $\phi: \conv(K)\rightarrow \R$
satisfying:
\begin{enumerate}
\item $\phi$ is convex on $\conv(K)$.
\item $\phi(\vec x)\leq f(\vec x)$ for all $\vec x\in K$.
\item For all convex functions
$h: \conv(K)\rightarrow \R$ with $h(\vec x)\leq f(\vec x)$ for all $\vec x\in K$
we have $\phi(\vec x)\geq h(\vec x)$ for all $\vec x\in \conv(K)$.
\end{enumerate}
\end{definition}
From this definition it is evident that, if the convex envelope
exists, it is unique.
We will now explicitly describe the (unique) convex envelope
of the functions $\pi_i(\vec u,\vec x_i), i\in N$.
As shown by Grotzinger~\cite[Lemma 3.1.]{Grotzinger85}, under Assumption~\ref{ass:convex},
the convex envelopes of $\pi_i(\vec u,\vec x_i), i\in N$ exist and read as:\footnote{Rockafellar~\cite[pp.157]{Rockafellar70} showed
that without lsc and compactness of $X_i$, the convex envelope is given by the same formula
where $\min$ is replaced by $\inf$.} 
\begin{equation}\label{envelope}
\begin{aligned}
 \phi_i & :\R^m\times \conv(X_i) \rightarrow\R\\ (\vec u,\vec x_i) &\mapsto  
\min\left\{\sum_{k=1}^{m+1} \alpha_{ik} \pi_i(\vec u,\vec x_i^k)\middle\vert  \sum_{k=1}^{m+1}\alpha_{ik}\vec x_i^k=\vec x_i, \bm\alpha_i\in \Lambda, \vec x_i^k\in X_i, k=1,\dots, m+1\right\}.\end{aligned}
\end{equation}

\begin{definition}
For a game  $G^{\min}(\vec u)=(N, X, (\pi_i)_{i\in N})$,
the associated \emph{convexified game} is defined as
\[ \text{$G^{\min-\conv}(\vec u)=(N, X^{\conv}, (\phi_i)_{i\in N})$,
where  $X^{\conv}:=\times_{i\in N}\conv(X_i)$.}\]
\end{definition}
We obtain the following characterization result
connecting $G^{\min}(\vec u)$ with $G^{\min-\conv}(\vec u)$.

\begin{theorem}\label{thm:main-convexification}
Let  $\vec x^*\in X\subseteq X^{\conv}$ and $\bm\lambda\in \R_+^m$.
Then, under Assumption~\ref{ass:convex}, the following statements are equivalent.
\begin{enumerate}
\item\label{enum:main-convexification1}  $\vec u\in \R^m$ is enforceable
for $G^{\min}(\vec u)$ via $(\vec x^*,\bm\lambda)$.
\item\label{enum:main-convexification2} 
 $\vec u\in \R^m$ is enforceable for $G^{\min-\conv}(\vec u)$ via $(\vec x^*,\bm\lambda)$. 
\item\label{enum:main-convexification3} 
 $(\vec x^*,\bm\lambda)$ is a primal-dual optimal solution of~\ref{price-opt}  
for $G^{\min}(\vec u)$  and $\vec x^*$ satisfies
 $\ell(\vec x^*)=\vec u$.
 \item\label{enum:main-convexification4} 
 $(\vec x^*,\bm\lambda)$ is a primal-dual optimal solution of~\ref{price-opt}  
for  $G^{\min-\conv}(\vec u)$  and $\vec x^*$ satisfies
 $\ell(\vec x^*)=\vec u$.
\end{enumerate}
Moreover, all equivalences remain true by replacing the term ``enforceable''
with ``weakly enforceable by market prices'' and removing the condition
$\ell(\vec x^*)=\vec u$ in Statements~\ref{enum:main-convexification3}.
and~\ref{enum:main-convexification4}.
\end{theorem}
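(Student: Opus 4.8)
The plan is to use Theorem~\ref{thm:main} as the engine. Applied to $G^{\min}(\vec u)$ it yields the equivalence \ref{enum:main-convexification1}.$\Leftrightarrow$\ref{enum:main-convexification3}., and applied verbatim to $G^{\min-\conv}(\vec u)$ it yields \ref{enum:main-convexification2}.$\Leftrightarrow$\ref{enum:main-convexification4}. This second application is legitimate because the convexified master problem is again separable (its objective is $\sum_{i\in N}\phi_i(\vec u,\vec x_i)$ and its constraint is $\ell(\vec x)\le\vec u$), so Lemma~\ref{eq:decomposition} applies with $(\pi_i,X_i)$ replaced by $(\phi_i,\conv(X_i))$. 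The whole theorem thus reduces to linking the two master problems \ref{price-opt}, and the observation driving this link is that their Lagrangian duals coincide.

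Accordingly I would first prove the key lemma: for every $\bm\lambda\in\R_+^m$, the dual value $\mu(\bm\lambda)$ of \ref{price-opt} for $G^{\min}(\vec u)$ equals the dual value $\mu^{\conv}(\bm\lambda)$ for $G^{\min-\conv}(\vec u)$. By the decomposition this amounts to the per-player identity
\[ \inf_{\vec x_i\in X_i}\{\pi_i(\vec u,\vec x_i)+\bm\lambda^\intercal g_i(\vec x_i)\}=\inf_{\vec x_i\in\conv(X_i)}\{\phi_i(\vec u,\vec x_i)+\bm\lambda^\intercal g_i(\vec x_i)\}. \]
The inequality ``$\le$'' is immediate since $X_i\subseteq\conv(X_i)$ and $\phi_i\le\pi_i$ on $X_i$. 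For ``$\ge$'' I would insert the explicit envelope representation $\vec x_i=\sum_k\alpha_{ik}\vec x_i^k$ realizing $\phi_i(\vec u,\vec x_i)=\sum_k\alpha_{ik}\pi_i(\vec u,\vec x_i^k)$ from formula~\eqref{envelope}; concavity of $g_i$ together with $\bm\lambda\ge 0$ gives $\bm\lambda^\intercal g_i(\sum_k\alpha_{ik}\vec x_i^k)\ge\sum_k\alpha_{ik}\bm\lambda^\intercal g_i(\vec x_i^k)$, so the convexified objective dominates the convex combination $\sum_k\alpha_{ik}(\pi_i(\vec u,\vec x_i^k)+\bm\lambda^\intercal g_i(\vec x_i^k))$, which is at least the per-player minimum over $X_i$. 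Summing over $i\in N$ gives $\mu=\mu^{\conv}$ on $\R_+^m$, so the two dual problems share the same optimal value and the same optimal multipliers.

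With $\mu=\mu^{\conv}$ in hand, \ref{enum:main-convexification3}.$\Rightarrow$\ref{enum:main-convexification4}. is a sandwich: $\vec x^*$ is feasible for the convexified problem, so weak duality gives $\mu^{\conv}(\bm\lambda)\le\sum_i\phi_i(\vec u,\vec x_i^*)\le\pi(\vec x^*)=\mu(\bm\lambda)=\mu^{\conv}(\bm\lambda)$, forcing equality. The reverse implication \ref{enum:main-convexification4}.$\Rightarrow$\ref{enum:main-convexification3}. is the delicate step and the one I expect to be the \emph{main obstacle}: there $\sum_i\phi_i(\vec u,\vec x_i^*)=\mu(\bm\lambda)\le\pi(\vec x^*)$, and the inequality $\phi_i\le\pi_i$ runs the wrong way to conclude $\pi(\vec x^*)=\mu(\bm\lambda)$ directly. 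The way through is to use that $\vec x^*\in X$ is an honest point of the original strategy space: by the convexified decomposition $\vec x_i^*\in\arg\min_{\conv(X_i)}\{\phi_i+\bm\lambda^\intercal g_i\}$, and tracing equality through the inequality chain of the key lemma forces every support point $\vec x_i^k$ of the optimal envelope representation into $\arg\min_{X_i}\{\pi_i+\bm\lambda^\intercal g_i\}$ and makes the concavity inequality tight. The crux is then to show that $\vec x_i^*$ \emph{itself} attains the per-player minimum, i.e. that the envelope is exact at the enforcing point, $\phi_i(\vec u,\vec x_i^*)=\pi_i(\vec u,\vec x_i^*)$; this is precisely where the hypothesis $\vec x^*\in X$ (and not merely $\vec x^*\in X^{\conv}$) must be used in full, and where I would be most careful, since a convexified optimum that happens to lie in $X$ need not automatically be exact.

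Finally, the weakly-enforceable variant is obtained by replacing each occurrence of $\ell(\vec x^*)=\vec u$ by $\ell(\vec x^*)\le\vec u$ together with the market/complementary-slackness condition, and by invoking part~\ref{enum:main2}. of Theorem~\ref{thm:main} in place of part~\ref{enum:main1}. The dual-equality lemma makes no use of any tightness of $\ell(\vec x^*)$, so $\mu=\mu^{\conv}$ still holds; the same sandwich and the same decomposition argument transfer verbatim, and dropping the condition $\ell(\vec x^*)=\vec u$ from Statements~\ref{enum:main-convexification3}. and~\ref{enum:main-convexification4}. then requires no extra work.
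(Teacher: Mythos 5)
Your route is the paper's route: invoke Theorem~\ref{thm:main} for both $G^{\min}(\vec u)$ and $G^{\min-\conv}(\vec u)$ to obtain \ref{enum:main-convexification1}.$\Leftrightarrow$\ref{enum:main-convexification3}. and \ref{enum:main-convexification2}.$\Leftrightarrow$\ref{enum:main-convexification4}., then link the two master problems through the identity $\mu=\mu^{\conv}$. Your per-player proof of that identity (insert the envelope representation from~\eqref{envelope}, use concavity of $g_i$ and $\bm\lambda\ge 0$, and bound the resulting convex combination by the minimum over $X_i$) is exactly the paper's computation, which phrases the same step as ``a concave lsc function attains its minimum over the convex hull at a point of the original set.'' Your sandwich for \ref{enum:main-convexification3}.$\Rightarrow$\ref{enum:main-convexification4}. is also correct.

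The divergence is the direction \ref{enum:main-convexification4}.$\Rightarrow$\ref{enum:main-convexification3}., and there your instinct is right but your proof is incomplete: the paper stops after establishing $\mu=\mu^{\conv}$, whereas you correctly observe that $\phi_i\le\pi_i$ points the wrong way and that what is really needed is exactness of the envelope at the enforcing point, $\phi_i(\vec u,\vec x_i^*)=\pi_i(\vec u,\vec x_i^*)$. You do not establish this, and it can in fact fail under Assumption~\ref{ass:convex} alone. Take $n=1$, $m=1$, $X_1=\{0,1,2\}$, $g_1(x)=x$, $\pi_1(\vec u,0)=\pi_1(\vec u,2)=0$, $\pi_1(\vec u,1)=1$, and $\vec u=1$: then $\phi_1(\vec u,\cdot)\equiv 0$ on $[0,2]$, so the pair $(\vec x^*,\bm\lambda)=(1,0)$ satisfies Statement~\ref{enum:main-convexification4}. (and~\ref{enum:main-convexification2}.) with $\ell(\vec x^*)=\vec u$, yet $1\notin\arg\min_{x\in\{0,1,2\}}\{\pi_1(\vec u,x)+\lambda x\}$ for any $\lambda\ge 0$, so Statements~\ref{enum:main-convexification1}. and~\ref{enum:main-convexification3}. fail. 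Tracing equality through your inequality chain only forces the \emph{support points} $\vec x_i^k$ of the optimal envelope representation into $\arg\min_{X_i}\{\pi_i+\bm\lambda^\intercal g_i\}$; it says nothing about $\vec x_i^*$ itself, which need not be among them. So the obstacle you flag is genuine and is not closed by the dual-equality lemma, neither in your write-up nor in the paper's proof: the implication from the convexified game back to the original one requires the additional property that the envelope is exact at $\vec x^*$ (which you do get for free in the forward direction \ref{enum:main-convexification1}.$\Rightarrow$\ref{enum:main-convexification2}., but not conversely).
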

\begin{proof}
Since $G^{\min-\conv}(\vec u)$ fits into the framework
presented so far,  Theorem~\ref{thm:main}
implies already \ref{enum:main-convexification2}.~$\Leftrightarrow$~
\ref{enum:main-convexification4}.
and \ref{enum:main-convexification1}.~$\Leftrightarrow$~
\ref{enum:main-convexification3}.
Thus, we only need to show that the Problems~\ref{price-opt}
for $G^{\min-\conv}(\vec u)$ and $G^{\min}(\vec u)$, respectively,
have the same dual.
We get 
\begin{align} \notag & \mu^{\conv}(\bm\lambda)= \min_{\vec x\in X^{\conv}}\left\{\sum_{i\in N}\phi_i(\vec u,\vec x_i)+\bm\lambda^\intercal(\ell(\vec x)-\vec u)\right\}\\ \notag&=\sum_{i\in N}\min\left\{\sum_{k=1}^{m+1} \alpha_{ik} \pi_i(\vec u,\vec x_i^k)+\bm\lambda^\intercal g_i\left(\sum_{k=1}^{m+1}\alpha_{ik}\vec x_i^k\right)\middle\vert
 \bm\alpha_i\in \Lambda, \vec x_i^k\in X_i, k=1,\dots,m+1\right\}
 -\bm\lambda^\intercal\vec u
 \\
\label{eq:dual-convexified}&=\min_{\vec x_i\in X_i,i\in N}\left\{\sum_{i\in N}\pi_i(\vec u,\vec x_i)+\bm\lambda^\intercal(\ell(\vec x)-\vec u)\right\}\\
\notag&=\mu(\bm\lambda).
\end{align}
where~\eqref{eq:dual-convexified} follows
from the concavity of the inner objective functions w.r.t. $\bm\alpha_i, i\in N$. \footnote{For any compact set $\emptyset\neq S\subseteq \R^n$
and concave and lsc function $f:\R^n\rightarrow \R$, we have
$\min\{f(x)\vert x\in S\}=\min\{f(x)\vert x\in \conv(S)\}$.}
\end{proof}
Under a Slater-type constraint qualification and assuming that $g_i, i\in N$ are linear,
we get that~\ref{price-opt} for $G^{\min-\conv}(\vec u)$
always has zero duality gap leading to the following result.

\begin{theorem}\label{convexified:qualification}
Assume that  $g_i, i\in N$ are linear and there is
$\vec x^0 \in \relint{(X^{\conv})}\cap\{\vec x\vert\ell(\vec x)\leq \vec u\}$.
Then, under Assumption~\ref{ass:convex}, the following two statements hold.
\begin{enumerate}
\item\label{enum:convexified-slater1} Any minimal $\vec u$ for $X^{\conv}$  is  enforceable  for $G^{\min-\conv}(\vec u)$.
\item\label{enum:convexified-slater2}  $\vec u$ is enforceable  for $G^{\min}(\vec u)$ if and only if
Problem~\ref{price-opt} for $G^{\min-\conv}(\vec u)$
admits an optimal solution $\vec x^*\in X$ with $\ell(\vec x^*)=\vec u$.
\end{enumerate}
Moreover, the equivalence in~\ref{enum:convexified-slater2}. remains true by replacing the term ``enforceable''
by ``weakly enforceable with market prices'' and removing the condition
$\ell(\vec x^*)=\vec u$.
\end{theorem}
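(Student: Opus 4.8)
The plan is to exploit that the master problem~\ref{price-opt} for the convexified game $G^{\min-\conv}(\vec u)$ is a genuinely convex program and then transport every conclusion back to $G^{\min}(\vec u)$ through Theorem~\ref{thm:main-convexification}. By Assumption~\ref{ass:convex} each $\conv(X_i)$ is compact and convex, each convex envelope $\phi_i(\vec u,\cdot)$ is convex on $\conv(X_i)$ by construction, and since the $g_i$ are linear the load map $\ell$ is affine, so the constraints $\ell(\vec x)\le\vec u$ are affine. First I would record that this convex program has zero duality gap with an attained dual optimum. Because the inequality constraints are affine, the refined form of Slater's condition applies: it suffices to exhibit a feasible point lying in the relative interior of the domain, and the hypothesis supplies exactly such a point $\vec x^0\in\relint(X^{\conv})\cap\{\vec x\mid\ell(\vec x)\le\vec u\}$. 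Invoking standard convex duality (e.g.\ Boyd and Vandenberghe~\cite{Boyd:2004}) then yields strong duality and attainment of a dual optimal $\bm\lambda^*\in\R_+^m$, so that $\mu^{\conv}(\bm\lambda^*)$ equals the optimal primal value.

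For Statement~\ref{enum:convexified-slater1}, I would use that $\vec u$ is minimal for $X^{\conv}$: by Definition~\ref{def:minimal} every feasible $\vec x$ of the convexified master problem satisfies $\ell(\vec x)=\vec u$, so in particular any optimal solution is tight. Since $G^{\min-\conv}(\vec u)$ fits the framework of Section~\ref{sec:gap}, the zero-duality-gap pair $(\vec x^*,\bm\lambda^*)$ together with this automatic tightness makes Theorem~\ref{thm:main}, Statement~\ref{enum:main1}, directly applicable, giving enforceability of $\vec u$ for $G^{\min-\conv}(\vec u)$. This step is essentially Corollary~\ref{cor:convex} read off for the convexified game.

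For the equivalence in Statement~\ref{enum:convexified-slater2}, I would argue both directions via Theorem~\ref{thm:main-convexification}. For the forward direction, if $\vec u$ is enforceable for $G^{\min}(\vec u)$ via $(\vec x^*,\bm\lambda)$, then the equivalence of Statements~\ref{enum:main-convexification1}.\ and~\ref{enum:main-convexification4}.\ shows that the same pair is primal-dual optimal for~\ref{price-opt} of $G^{\min-\conv}(\vec u)$ with $\vec x^*\in X$ and $\ell(\vec x^*)=\vec u$; in particular $\vec x^*$ is an optimal solution of the convexified master problem lying in $X$ with the required tightness. For the converse, suppose that problem admits an optimal $\vec x^*\in X$ with $\ell(\vec x^*)=\vec u$. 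By the first paragraph the dual optimum is attained at some $\bm\lambda^*$ with zero duality gap, so $\vec x^*$ primal optimal and $\bm\lambda^*$ dual optimal forces $(\vec x^*,\bm\lambda^*)$ to be primal-dual optimal; since $\ell(\vec x^*)=\vec u$, the implication \ref{enum:main-convexification4}.~$\Rightarrow$~\ref{enum:main-convexification1}.\ of Theorem~\ref{thm:main-convexification} then yields enforceability of $\vec u$ for $G^{\min}(\vec u)$.

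The ``moreover'' clause follows by the identical two-step reduction, now invoking the market-price variants of Theorem~\ref{thm:main-convexification} (its final sentence) and dropping every appeal to $\ell(\vec x^*)=\vec u$; the Walrasian slackness is supplied automatically by the zero duality gap, exactly as in Theorem~\ref{thm:main}, Statement~\ref{enum:main2}. The one step that genuinely requires care is the first paragraph: one must make sure the weakened (affine-constraint) Slater condition is legitimately invoked to guarantee both strong duality and dual attainment, rather than the stronger strict-interior Slater condition, since $\vec x^0$ is only guaranteed to satisfy $\ell(\vec x^0)\le\vec u$ and may lie on the boundary of the constraint region. Everything after that is bookkeeping that transfers optimality between the two games through the already-established Theorem~\ref{thm:main-convexification}.
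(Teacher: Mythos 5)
Your proof is correct and takes essentially the same route the paper intends: the paper offers no separate proof of this theorem, but the sentence immediately preceding it states precisely that a Slater-type constraint qualification (in its refined, affine-constraint form, which is what the hypothesis $\vec x^0\in\relint(X^{\conv})\cap\{\vec x\mid\ell(\vec x)\leq\vec u\}$ supplies) yields zero duality gap with dual attainment for the convexified master problem, after which everything transfers through Theorem~\ref{thm:main} and Theorem~\ref{thm:main-convexification}. Your use of minimality to force tightness of the load constraint likewise matches the paper's own remark following Corollary~\ref{thm:main2}, so no further comparison is needed.
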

We now discuss two important special cases of Theorem~\ref{thm:main-convexification}.
\subsection{Finite Point Sets and Concave Extensions}

  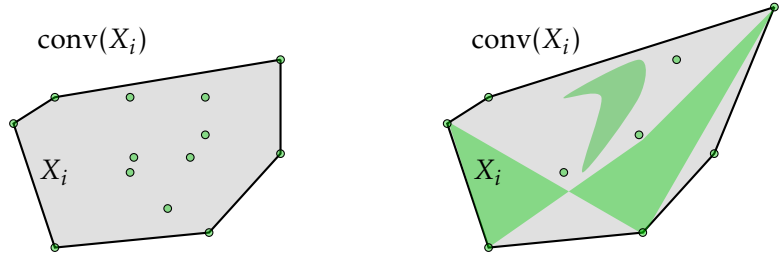
\begin{figure}[h!]
%  \begin{subfigure}[t]{0.4\textwidth}
 \begin{center}
           \begin{tikzpicture}
 \draw[thick, fill=gray!80!black,fill opacity=0.2]
(3.0,2.5) -- (3,1.25) -- (2.05,0.2) -- (0,0) -- (-0.55,1.65) -- (0,2) -- (3,2.5);

 \node at (0.5,2.75) {$\conv(X_i)$};
 \node at (0,1) {$X_i$};
   \draw[fill=green!80!black,fill opacity=0.4] (1,1) circle (.3ex);
      \draw[fill=green!80!black,fill opacity=0.4] (1,2) circle (.3ex);
        \draw[fill=green!80!black,fill opacity=0.4] (2,2) circle (.3ex);
     %     \draw[fill=green!80!black,fill opacity=0.4] (0,1) circle (.3ex);
       %     \draw[fill=green!80!black,fill opacity=0.4] (0,0) circle (.3ex);
         %     \draw[fill=green!80!black,fill opacity=0.4] (1,0) circle (.3ex);
              
               \draw[fill=green!80!black,fill opacity=0.4] (2,1.5) circle (.3ex);
            \draw[fill=green!80!black,fill opacity=0.4] (3,1.25) circle (.3ex);
   %           \draw[fill=green!80!black,fill opacity=0.4] (1,0) circle (.3ex);
      %          \draw[fill=green!80!black,fill opacity=0.4] (2,0) circle (.3ex);
      %        \draw[fill=green!80!black,fill opacity=0.4] (3,0) circle (.3ex);
               \draw[fill=green!80!black,fill opacity=0.4] (0,2) circle (.3ex);
             % \draw[fill=green!80!black,fill opacity=0.4] (0,3) circle (.3ex);
          
         %Convex Hull 
               \draw[fill=green!80!black,fill opacity=0.4] (3,2.5) circle (.3ex);
               
                  \draw[fill=green!80!black,fill opacity=0.4] coordinate(p1) circle (.3ex);
                  \draw[fill=green!80!black,fill opacity=0.4] (2.05,0.2) circle (.3ex);
                   \draw[fill=green!80!black,fill opacity=0.4] (1.05,1.2) circle (.3ex);
                    \draw[fill=green!80!black,fill opacity=0.4] (1.5,0.52) circle (.3ex);
                     \draw[fill=green!80!black,fill opacity=0.4] (1.8,1.2) circle (.3ex);
                    \draw[fill=green!80!black,fill opacity=0.4] (-0.55,1.65) circle (.3ex);
           \end{tikzpicture}\hspace{2cm}
   \begin{tikzpicture}
\path [fill=green!80!black,fill opacity=0.4] plot [smooth cycle] coordinates {(1,2) (1.5,2) (1.25,1) (2,2) (2,2.5)};
 \draw[thick, fill=gray!80!black,fill opacity=0.2]
(3.8,3.2) -- (3,1.25) -- (2.05,0.2) -- (0,0) -- (-0.55,1.65) -- (0,2) -- (3.8,3.2);
 \path[fill=green!80!black,fill opacity=0.4] (0,0) coordinate(p1) --  ++(35:2.5) coordinate(p2)
 -- ++(45:2.5) coordinate(p3) --
 ++(-120:3.5) coordinate(p4) --  ++(150:3) coordinate(p5);
 \node at (0.5,2.75) {$\conv(X_i)$};
 \node at (0,1) {$X_i$};
   \draw[fill=green!80!black,fill opacity=0.4] (1,1) circle (.3ex);
  %    \draw[fill=green!80!black,fill opacity=0.4] (1,2) circle (.3ex);
        \draw[fill=green!80!black,fill opacity=0.4] (2.5,2.5) circle (.3ex);
     %     \draw[fill=green!80!black,fill opacity=0.4] (0,1) circle (.3ex);
       %     \draw[fill=green!80!black,fill opacity=0.4] (0,0) circle (.3ex);
         %     \draw[fill=green!80!black,fill opacity=0.4] (1,0) circle (.3ex);
              
               \draw[fill=green!80!black,fill opacity=0.4] (2,1.5) circle (.3ex);
            \draw[fill=green!80!black,fill opacity=0.4] (3,1.25) circle (.3ex);
   %           \draw[fill=green!80!black,fill opacity=0.4] (1,0) circle (.3ex);
      %          \draw[fill=green!80!black,fill opacity=0.4] (2,0) circle (.3ex);
      %        \draw[fill=green!80!black,fill opacity=0.4] (3,0) circle (.3ex);
               \draw[fill=green!80!black,fill opacity=0.4] (0,2) circle (.3ex);
             % \draw[fill=green!80!black,fill opacity=0.4] (0,3) circle (.3ex);
          
         %Convex Hull 
               \draw[fill=green!80!black,fill opacity=0.4] (3.8,3.2) circle (.3ex);
               
                  \draw[fill=green!80!black,fill opacity=0.4] coordinate(p1) circle (.3ex);
                  \draw[fill=green!80!black,fill opacity=0.4] (2.05,0.2) circle (.3ex);
                    \draw[fill=green!80!black,fill opacity=0.4] (-0.55,1.65) circle (.3ex);
           \end{tikzpicture}
 
         \caption{Left is the scenario of $X_i$ consisting of a finite
point set. Right, $X_i$ may consist of connected components (in green) 
and isolated points but the convex hull is assumed to be finitely generated  and additionally $\pi_i(\vec u,\vec x_i)$ is assumed to be concave on $\conv(X_i)$. }\label{fig:convex-hull}
 %\end{subfigure}
 \end{center}
\end{figure}

We  now consider two special cases: in the first one, $X_i, i\in N$
consists of a finite collection of points (see Fig.~\ref{fig:convex-hull} left)
and in the second one, we assume that the convex hull of each  $X_i, i\in N$ is assumed to be finitely generated and  $\pi_i(\vec u,\vec x_i), i\in N$ is  concave on $\conv(X_i)$ (see Fig.~\ref{fig:convex-hull} right).
We start with the first model.
\begin{assumption}\label{ass:convex-hull}
For all $i\in N$, $X_i=\{\vec{{x}}^1_i,\dots, \vec{{x}}_i^{{k}_i}\}$
for some $k_i\in \N$. %\end{equation} 
\end{assumption}
With this assumption, the convex envelope
$\phi_i, i\in N$ has a simple form.
Let us define the following optimal value function of an
associated LP:
\begin{equation}\label{envelope:finite-point} \phi^{\LP}_i(\vec x_i):= 
\min\{ \bm \pi_i^\intercal \bm\alpha_i\vert \mathcal{X}_i \bm\alpha_i=\vec x_i, \bm\alpha_i\in\Lambda_i\},
\end{equation}
where 
$\bm \pi_{i}:=(\pi_i(\vec u,\vec{{x}}_i^k))_{k\in \{1,\dots,k_i\}}$, $\mathcal{X}_i := (\vec x_i^1,\cdots, \vec x_i^{k_i})$
is a $m\times k_i$ matrix with columns $\vec x_i^k, k=1,\dots, k_{i}$,
and $\Lambda_i:=\{\alpha_i\in \R^{k_i} \vert \vec 1^\intercal \bm\alpha_i=1, \bm\alpha_i\geq 0\}$.

\begin{lemma}\label{lem:finite}
Under Assumption~\ref{ass:convex} and Assumption~\ref{ass:convex-hull},
it holds that $\phi_i(\vec u,\vec x_i)=  \phi^{\LP}_i(\vec x_i)$
for all $\vec x_i\in \conv(X_i), i\in N$, where
$\phi_i, i\in N$ is the convex envelope as defined in~\eqref{envelope}.
\end{lemma}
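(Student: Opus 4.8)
The plan is to prove the two minimization problems defining $\phi_i(\vec u,\cdot)$ in~\eqref{envelope} and $\phi^{\LP}_i$ in~\eqref{envelope:finite-point} have the same optimal value for every $\vec x_i\in\conv(X_i)$, by relating their feasible regions rather than invoking any deeper property of envelopes. The only structural difference between them is that~\eqref{envelope} minimizes over convex combinations of \emph{at most} $m+1$ points chosen from the finite set $X_i$, whereas~\eqref{envelope:finite-point} minimizes over convex combinations of \emph{all} $k_i$ points with no support restriction. I would therefore show each value dominates the other.

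First I would establish $\phi^{\LP}_i(\vec x_i)\le\phi_i(\vec u,\vec x_i)$. Take any admissible choice in~\eqref{envelope}, i.e.\ points $\vec y^1,\dots,\vec y^{m+1}\in X_i$ and weights $\bm\alpha_i\in\Lambda$ with $\sum_j\alpha_{ij}\vec y^j=\vec x_i$. Collecting the weight mass that lands on each distinct point $\vec x_i^k$ into a coefficient $\beta_k$ produces a vector $\bm\beta\in\Lambda_i$ with $\mathcal{X}_i\bm\beta=\vec x_i$, $\supp(\bm\beta)\le m+1$, and objective value $\bm\pi_i^\intercal\bm\beta=\sum_j\alpha_{ij}\pi_i(\vec u,\vec y^j)$. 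Hence every feasible point of~\eqref{envelope} yields a feasible point of~\eqref{envelope:finite-point} of identical cost, giving the inequality.

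For the reverse inequality $\phi_i(\vec u,\vec x_i)\le\phi^{\LP}_i(\vec x_i)$ I would invoke basic LP theory, and this is the one step that needs care. The feasible set $\{\bm\alpha_i\in\Lambda_i\mid\mathcal{X}_i\bm\alpha_i=\vec x_i\}$ is nonempty (since $\vec x_i\in\conv(X_i)$), closed, and bounded (as $\vec1^\intercal\bm\alpha_i=1,\ \bm\alpha_i\ge0$ force each entry into $[0,1]$), hence a compact polytope; the linear objective therefore attains its minimum at a vertex, i.e.\ a basic feasible solution $\bm\alpha_i^\star$. The defining affine system consists of the $m$ rows of $\mathcal{X}_i\bm\alpha_i=\vec x_i$ together with the single normalization $\vec1^\intercal\bm\alpha_i=1$, so it has at most $m+1$ equality constraints; consequently any basic feasible solution has at most $m+1$ positive entries, i.e.\ $\supp(\bm\alpha_i^\star)\le m+1$. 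Reading off those $\le m+1$ points of $X_i$ with their weights (and padding unused slots with weight $0$) gives an admissible choice in~\eqref{envelope} of value $\bm\pi_i^\intercal\bm\alpha_i^\star=\phi^{\LP}_i(\vec x_i)$, which proves $\phi_i(\vec u,\vec x_i)\le\phi^{\LP}_i(\vec x_i)$. Combining the two inequalities yields the claim.

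The main obstacle is exactly the support bound in the second step: a naive appeal to Carath\'eodory in the lifted space $\R^{m+1}$ of points $(\vec x_i^k,\pi_i(\vec u,\vec x_i^k))$ would only give $m+2$ points, whereas~\eqref{envelope} allows just $m+1$; the basic-feasible-solution argument is what closes this gap, since minimizing a linear objective lets us pass to a vertex. As an alternative I could instead verify that $\phi^{\LP}_i$ satisfies the three defining properties of a convex envelope—convexity follows from the standard fact that the optimal value of an LP is convex in its right-hand side, and domination $\phi^{\LP}_i(\vec x_i)\le\pi_i(\vec u,\vec x_i)$ on $X_i$ from evaluating at $\bm\alpha_i=\vec e_k$—and then appeal to the uniqueness of the convex envelope noted after its definition; but the direct feasible-region comparison is shorter and uses only that~\eqref{envelope} is the true envelope, already granted in the excerpt.
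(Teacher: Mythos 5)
Your proposal is correct and follows essentially the same route as the paper: the direction $\phi^{\LP}_i\le\phi_i$ by mapping any admissible $(m+1)$-point combination to a feasible LP point of equal cost, and the reverse direction by observing that the LP attains its minimum at a vertex of the standard-form polytope, whose support is bounded by the number $m+1$ of equality constraints. The extra remarks (why naive Carath\'eodory in the lifted graph space is insufficient, and the alternative via uniqueness of the envelope) are sound but not needed.
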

\begin{proof}
The inequality $\phi_i^{\LP}(\vec x_i)\leq \phi_i(\vec u,\vec x_i) $ follows
directly as for any  $\vec x_i=\sum_{k=1}^{m+1} \alpha_{ik}  \tilde{x}_i^k$,
with $ \tilde{x}_i^k\in X_i, k= 1,\dots,m+1$,
 the corresponding $\bm\alpha_i$
is feasible for~\eqref{envelope:finite-point} with the same objective value.

For $\phi_i^{\LP}(\vec x_i)\geq \phi_i(\vec u,\vec x_i)$, 
we need to show that the LP has optimal solutions
with support less or equal than $m+1$.
The polytope $P_i:=\{\alpha_i\in \R^{k_i} \vert \mathcal{X}_i \bm\alpha_i=\vec x_i,  \vec 1^\intercal \bm\alpha_i=1, \bm\alpha_i\geq 0\}$
of the LP is non-empty and in standard form.
With the
theorem of linear programming we get that
an optimal solution of the LP is attained
at a vertex of $P_i$. Any vertex $\bm\alpha_i$ of $P_i$ has the property that
the columns of the defining matrix 
corresponding to indices $j$ with $\alpha_{ij}>0$ 
are linearly independent. Since this matrix has $m+1$ rows, its
rank is less than $m+1$ implying the wanted small support representation.
\end{proof}

We discuss now another class of \emph{concave problems}
for which we also get an LP
representation of the convex envelope.
\begin{assumption}\label{ass:concave}
The sets $X_i\subset\R^m, i\in N$ satisfy 
$\conv(X_i)=\conv\big(\{\vec{{x}}^1_i,\dots, \vec{{x}}_i^{{k}_i}\}\big)$
with  $\vec{{x}}^j_i\in X_i$ for $j=1,\dots, k_i, k_i\in \N,$
and
the functions $\pi_i(\vec u, \vec x_i), i\in N$ can be extended to
the domain $\conv(X_i)$ so that they are concave on $\conv(X_i)$.
\end{assumption}
With this assumption, the function $\phi_i^{\LP}(\vec x_i)$
defined in~\eqref{envelope:finite-point} is also equal to $\phi_i(\vec u,\vec x_i)$.

\begin{lemma}\label{lem:concave}
Under Assumption~\ref{ass:convex} and Assumption~\ref{ass:concave},
it holds that $\phi_i(\vec u,\vec x_i)=  \phi^{\LP}_i(\vec x_i)$
for all $\vec x_i\in \conv(X_i), i\in N$.
\end{lemma}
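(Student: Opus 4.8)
The plan is to establish the two inequalities $\phi_i(\vec u,\vec x_i)\le \phi_i^{\LP}(\vec x_i)$ and $\phi_i^{\LP}(\vec x_i)\le \phi_i(\vec u,\vec x_i)$ separately, mirroring the structure of the proof of Lemma~\ref{lem:finite} but invoking concavity precisely where the finiteness of $X_i$ was used there. For the first inequality I would argue exactly as before: by Assumption~\ref{ass:concave} the generators $\vec x_i^1,\dots,\vec x_i^{k_i}$ all lie in $X_i$, so an optimal solution of the linear program~\eqref{envelope:finite-point}, taken at a vertex of its (standard-form) feasible polytope, has support at most $m+1$. Such a vertex yields a representation of $\vec x_i$ as a convex combination of at most $m+1$ points of $X_i$ whose objective value equals $\phi_i^{\LP}(\vec x_i)$; this representation is feasible for~\eqref{envelope}, giving $\phi_i(\vec u,\vec x_i)\le \phi_i^{\LP}(\vec x_i)$.

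The reverse inequality is where Assumption~\ref{ass:concave} genuinely enters, and it is the step I expect to require the most care. I would start from an optimal representation $\vec x_i=\sum_{k=1}^{m+1}\alpha_{ik}\vec y_i^k$ with $\vec y_i^k\in X_i$ and $\bm\alpha_i\in\Lambda$ attaining $\phi_i(\vec u,\vec x_i)=\sum_{k}\alpha_{ik}\pi_i(\vec u,\vec y_i^k)$, whose existence is guaranteed by Assumption~\ref{ass:convex}. Because each $\vec y_i^k\in X_i\subseteq\conv(X_i)=\conv(\{\vec x_i^1,\dots,\vec x_i^{k_i}\})$, I can write $\vec y_i^k=\sum_{l=1}^{k_i}\beta_{kl}\vec x_i^l$ with $\bm\beta_k\in\Lambda_i$. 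Invoking the concave extension of $\pi_i(\vec u,\cdot)$ to $\conv(X_i)$ supplied by Assumption~\ref{ass:concave}, which agrees with $\pi_i$ on $X_i$, Jensen's inequality yields $\pi_i(\vec u,\vec y_i^k)\ge\sum_{l}\beta_{kl}\pi_i(\vec u,\vec x_i^l)$ for every $k$.

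Finally, I would aggregate the weights by setting $\gamma_l:=\sum_k\alpha_{ik}\beta_{kl}$. A short computation shows $\gamma_l\ge 0$, $\sum_l\gamma_l=\sum_k\alpha_{ik}=1$ and $\mathcal{X}_i\bm\gamma=\sum_k\alpha_{ik}\vec y_i^k=\vec x_i$, so $\bm\gamma$ is feasible for~\eqref{envelope:finite-point}; its LP objective satisfies $\sum_l\gamma_l\pi_i(\vec u,\vec x_i^l)=\sum_k\alpha_{ik}\sum_l\beta_{kl}\pi_i(\vec u,\vec x_i^l)\le\sum_k\alpha_{ik}\pi_i(\vec u,\vec y_i^k)=\phi_i(\vec u,\vec x_i)$, whence $\phi_i^{\LP}(\vec x_i)\le\phi_i(\vec u,\vec x_i)$. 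Combining the two inequalities gives the claim. The main obstacle here is conceptual rather than computational: unlike in Lemma~\ref{lem:finite}, an optimal representation for $\phi_i$ may use points of $X_i$ that are not generators, so it cannot be fed directly into the LP. Concavity of the extension is precisely the device that lets one replace each such point by its generator representation without increasing the cost, and one must be careful that this extension coincides with $\pi_i$ on $X_i$ so that the two objective values are genuinely comparable.
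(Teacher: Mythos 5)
Your proposal is correct and follows essentially the same route as the paper: the first inequality via the vertex/small-support argument recycled from Lemma~\ref{lem:finite}, and the reverse inequality by expanding each $\vec y_i^k$ in the generators, applying concavity of the extension, and aggregating the weights into a feasible LP solution for $\vec x_i$. The only cosmetic difference is that the paper routes the estimate through the intermediate quantities $\phi_i^{\LP}(\vec y_i^k)$ with LP-optimal weights $\beta_i^k$, whereas you aggregate the raw convex-combination weights directly; both yield the same bound.
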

\begin{proof}
$\phi_i(\vec u,\vec x_i)\leq  \phi^{\LP}_i(\vec x_i)$
follows from the second part of the proof of the previous Lemma~\ref{lem:finite}.

For the other direction, 
let $\vec x_i\in  \conv(X_i)$
with  $\phi_i(\vec u,\vec x_i)=\sum_{k=1}^{m+1} \alpha_{ik}\pi_i(\vec u,\vec{y}_i^k)$ for
  $\vec x_i=\sum_{k=1}^{m+1} \alpha_{ik}\vec y_i^k$
with $\vec y_i^k\in X_i$ for  $k=1,\dots, m+1$ and $\alpha_i\in\Lambda$.
We  first bound the cost of every summand $\pi_i(\vec u,\vec{y}_i^k)$ individually. With $\vec{y}_i^k\in \conv(X_i)$ we have
$ \vec{y}_i^k= \sum_{j=1}^{k_i} \kappa_{ij}^k \vec{{x}}_i^j$
for  some $\bm \kappa_{i}^k\in \Lambda_i$.
With the concavity of $\pi_i(\vec u, \vec x_i)$, we get
\begin{align}\label{eq:individual}
\pi_i(\vec u,\vec{y}_i^k) = \pi_i\Big(\vec u,\sum_{j=1}^{k_i} \kappa_{ij}^k \vec{{x}}_i^j\Big) \geq \sum_{j=1}^{k_i} \kappa_{ij}^k \pi_i\big(\vec u,\vec{{x}}_i^j\big) \geq \phi_i^{LP}(\vec y_i^k),
\end{align}
where we use that $ \kappa_{i}^k$ is feasible for the LP associated with $\phi_i^{LP}(\vec y_i^k)$.
Let us write
$\phi_i^{LP}(\vec y_i^k)=\pi_i^\intercal \beta_i^k $ for some 
$ \beta_i^k\in\arg\min\{ \bm \pi_i^\intercal \bm\alpha_i\vert \mathcal{X}_i \bm\alpha_i=\vec y_i^k, \bm\alpha_i\in\Lambda_i\}.$
We then get
\begin{align*}
\phi_i(\vec u,\vec x_i)&=\sum_{k=1}^{m+1} \alpha_{ik}\pi_i(\vec u,\vec{y}_i^k)
\underset{\eqref{eq:individual}}{\geq} \sum_{k=1}^{m+1} \alpha_{ik}\phi_i^{LP}(\vec y_i^k)
= \sum_{k=1}^{m+1} \alpha_{ik}\pi_i^\intercal \beta_i^k
=\pi_i^\intercal  \sum_{k=1}^{m+1} \alpha_{ik} \beta_i^k
\geq \phi_i^{LP}(\vec x_i),
\end{align*}
where we used for the last inequality that
the vector $\sum_{k=1}^{m+1} \alpha_{ik} \beta_i^k$
is feasible for the LP corresponding to $\phi_i^{LP}(\vec x_i)$.
To see this, observe
\[ \mathcal{X}_i \sum_{k=1}^{m+1} \alpha_{ik} \beta_i^k=
  \sum_{k=1}^{m+1} \alpha_{ik} \mathcal{X}_i \beta_i^k=
 \sum_{k=1}^{m+1} \alpha_{ik}  \vec y_i^k=\vec x_i,
 \text{ and } \vec 1^\intercal \sum_{k=1}^{m+1} \alpha_{ik} \beta_{i}^k=\sum_{k=1}^{m+1} \alpha_{ik} \vec 1^\intercal \beta_{i}^k=\sum_{k=1}^{m+1} \alpha_{ik}=1.\]
\end{proof}

\subsection{The Master LP and its Dual}

Now we will 
model~\ref{price-opt} for $G^{\min-\conv}(\vec u)$
for any of the two previous game classes
via the following LP in the
variables $\bm \alpha_i, i\in N$. We assume from
now on that $g_i, i\in N$ are linear, that is, $g_i(\vec x_i)=G_i\vec x_i$, where $G_i\in \R^{m\times m}$ is an $m\times m$  matrix.
\begin{framed}
\begin{align}\tag{LP$^{\min}$($\vec u$)}\label{LP-lambda-conv}
\min  \sum_{i\in N} \bm \pi_{i}^\intercal \bm \alpha_i & \\
%\bm \pi_{i}^\intercal \bm \alpha_i\\
%\sum_{h\in [k_i]}\pi_i(\vec u, \vec{{x}}_i^l) \alpha_{ih}\\
\label{ineq:u-convex-finite} \ell(\bm\alpha)&\leq \vec u,\\ \notag \bm\alpha_i&\in\Lambda_i \text{ for all } i\in N,
  %\sum_{h\in [k_i]} \alpha_{ih}&= 1, \text{ for all }i\in N \\
%\bm \alpha_i & \geq 0\text{ for all } i\in N.
\end{align}
where
$\ell(\bm \alpha):=\sum_{i\in N}  \sum_{k\in \{1,\dots,k_i\}} \alpha_{ik} g_i(\vec{{x}}_{i}^k)$ and $\Lambda_i:= \{\bm \alpha_i\in \R_+^{k_i} \vert \vec 1^\intercal\bm\alpha_i=1\}, i\in N.$
\end{framed}

\begin{theorem}\label{thm:convex-hull-finite}
Let $G^{\min}(\vec u)$ be a game for which 
Assumptions~\ref{ass:convex} and \ref{ass:convex-hull} hold
and assume that $g_i, i\in N$ are linear.
Then, the following statements are equivalent
for the respective \ref{LP-lambda-conv}.
\begin{enumerate}
\item\label{enum:integer1} The capacity vector $\vec u\in \R^m$ is enforceable for $G^{\min}(\vec u)$.
\item\label{enum:integer2}  \ref{LP-lambda-conv} admits
an integral optimal solution $\bm\alpha^*$ for which~\eqref{ineq:u-convex-finite} is tight.
\end{enumerate}
Let $G^{\min}(\vec u)$ be a game for which 
Assumptions~\ref{ass:convex} and \ref{ass:concave} hold true
and assume $g_i, i\in N$ to be linear.
Then, the following statements are equivalent
for the respective \ref{LP-lambda-conv}.
\begin{enumerate}
\setcounter{enumi}{2}
\item\label{enum:integer3}  The capacity vector $\vec u\in \R^m$ is enforceable for $G^{\min}(\vec u)$.
\item\label{enum:integer4}  \ref{LP-lambda-conv} admits an
optimal solution $\bm\alpha^*$ with $\sum_{j=1}^{k_i} \alpha_{ij}^* \vec{{x}}_i^j\in X_i, i\in N$ with~\eqref{ineq:u-convex-finite} being tight.
\end{enumerate}
Moreover, the equivalence remains true by replacing the term ``enforceable''
with ``weakly enforceable by market prices'' and removing the condition
$\ell(\bm\alpha^*)=\vec u$ in Statements~\ref{enum:integer2}. and~\ref{enum:integer4}.
\end{theorem}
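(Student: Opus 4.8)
The plan is to recognise~\ref{LP-lambda-conv} as the master problem~\ref{price-opt} for the convexified game $G^{\min-\conv}(\vec u)$ expressed in barycentric coordinates, and then to reduce everything to Theorem~\ref{thm:main} using the unconditional strong duality of linear programs. First I would set up the dictionary $\bm\alpha\mapsto\vec x$ with $\vec x_i:=\mathcal{X}_i\bm\alpha_i\in\conv(X_i)$; since $g_i$ is linear, $\ell(\bm\alpha)=\ell(\vec x)$, so constraint~\eqref{ineq:u-convex-finite} and the constraint $\ell(\vec x)\leq\vec u$ correspond. By Lemma~\ref{lem:finite} (finite case) and Lemma~\ref{lem:concave} (concave case), the convex envelope obeys $\phi_i(\vec u,\vec x_i)=\phi^{\LP}_i(\vec x_i)=\min\{\bm\pi_i^\intercal\bm\alpha_i:\mathcal{X}_i\bm\alpha_i=\vec x_i,\ \bm\alpha_i\in\Lambda_i\}$; collapsing this inner minimisation into the outer one shows that the optimal value $z^*$ of~\ref{LP-lambda-conv} equals that of~\ref{price-opt} for $G^{\min-\conv}(\vec u)$. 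Dualising only the coupling constraint~\eqref{ineq:u-convex-finite} reproduces the dual function $\mu^{\conv}$, which by~\eqref{eq:dual-convexified} equals the original dual $\mu$; hence, whenever~\ref{LP-lambda-conv} is feasible, LP strong duality yields an optimal multiplier $\bm\lambda^*\geq\vec 0$ with $\mu(\bm\lambda^*)=\mu^{\conv}(\bm\lambda^*)=z^*$, \emph{with no constraint qualification required}.

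With this identity I would run both $\Rightarrow$ directions through part~\ref{enum:main1} of Theorem~\ref{thm:main}. If $\vec u$ is enforceable via $(\vec x^*,\bm\lambda)$ then $\pi(\vec x^*)=\mu(\bm\lambda)$ and $\ell(\vec x^*)=\vec u$, so $z^*=\sup_{\bm\lambda\geq\vec 0}\mu(\bm\lambda)\geq\pi(\vec x^*)$. It then suffices to produce a feasible $\bm\alpha^*$ of objective at most $\pi(\vec x^*)$ with $\mathcal{X}_i\bm\alpha_i^*=\vec x^*_i\in X_i$: in the finite case take the unit vector $\bm\alpha_i^*=\vec e_{k(i)}$ with $\vec x_i^*=\vec x_i^{k(i)}$, whose objective is exactly $\pi(\vec x^*)$ and which is integral; in the concave case take any $\bm\alpha_i^*$ attaining $\phi^{\LP}_i(\vec x_i^*)$, whose objective $\sum_{i}\phi_i(\vec u,\vec x^*_i)$ is at most $\pi(\vec x^*)$ because $\phi_i\leq\pi_i$ on $X_i$. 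A squeeze forces $z^*=\pi(\vec x^*)$, so $\bm\alpha^*$ is optimal, and $\ell(\bm\alpha^*)=\ell(\vec x^*)=\vec u$ makes~\eqref{ineq:u-convex-finite} tight; this gives~\ref{enum:integer2}. and~\ref{enum:integer4}.

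For the two $\Leftarrow$ directions, feasibility of the given optimal $\bm\alpha^*$ activates LP strong duality, producing $\bm\lambda^*\geq\vec 0$ with $\mu(\bm\lambda^*)=z^*=\sum_{i}\bm\pi_i^\intercal\bm\alpha_i^*$. Put $\vec x_i^*:=\mathcal{X}_i\bm\alpha_i^*$; by hypothesis $\vec x^*\in X$ (in the finite case an integral point of $\Lambda_i$ is a unit vector, so $\vec x_i^*=\vec x_i^{k(i)}\in X_i$; in the concave case this is assumed) and $\ell(\vec x^*)=\vec u$. I then need $\pi(\vec x^*)=\mu(\bm\lambda^*)$: in the finite case $\sum_{i}\bm\pi_i^\intercal\bm\alpha_i^*=\pi(\vec x^*)$ directly; in the concave case concavity of $\pi_i$ on $\conv(X_i)$ gives $\pi_i(\vec u,\vec x_i^*)\geq\bm\pi_i^\intercal\bm\alpha_i^*$, hence $\pi(\vec x^*)\geq\mu(\bm\lambda^*)$, while $\mu(\bm\lambda^*)\leq L(\vec x^*,\bm\lambda^*)=\pi(\vec x^*)$ by tightness. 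Either way $\pi(\vec x^*)=\mu(\bm\lambda^*)$ with $\ell(\vec x^*)=\vec u$, so part~\ref{enum:main1} of Theorem~\ref{thm:main} certifies that $\vec u$ is enforceable via $(\vec x^*,\bm\lambda^*)$.

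Finally, the ``Moreover'' clause follows by replacing every appeal to part~\ref{enum:main1} of Theorem~\ref{thm:main} with part~\ref{enum:main2}, which characterises weak enforceability with market prices by zero duality gap \emph{alone}; dropping $\ell(\vec x^*)=\vec u$ simply removes the tightness requirement $\ell(\bm\alpha^*)=\vec u$ throughout and leaves every other step intact. I expect the concave converse to be the delicate step: one must check that the concavity inequality $\pi_i(\vec u,\mathcal{X}_i\bm\alpha_i^*)\geq\bm\pi_i^\intercal\bm\alpha_i^*$ is oriented correctly and that it is precisely the membership $\mathcal{X}_i\bm\alpha_i^*\in X_i$ (rather than in $\conv(X_i)$) that makes $\vec x^*$ a legitimate strategy profile in $X$. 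Everything else is the bookkeeping of the two squeeze arguments and the observation that the unconditional LP duality used here sidesteps the Slater-type hypothesis of Theorem~\ref{convexified:qualification}.
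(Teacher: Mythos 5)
Your overall strategy---identifying \ref{LP-lambda-conv} with the master problem~\ref{price-opt} of $G^{\min-\conv}(\vec u)$ via Lemmata~\ref{lem:finite} and~\ref{lem:concave}, invoking $\mu^{\conv}=\mu$ from~\eqref{eq:dual-convexified} together with unconditional LP strong duality for the partial dual, and routing everything through Theorem~\ref{thm:main}---is exactly the paper's route; the paper merely compresses these steps into a citation of Theorem~\ref{thm:main-convexification}. The finite-case equivalence \ref{enum:integer1}$\Leftrightarrow$\ref{enum:integer2}, both forward implications, and the ``Moreover'' adaptation via part~\ref{enum:main2} of Theorem~\ref{thm:main} are correct as you wrote them.

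The concave converse \ref{enum:integer4}$\Rightarrow$\ref{enum:integer3}, however, has a genuine gap, precisely at the step you flagged as delicate. Concavity gives $\pi_i(\vec u,\vec x_i^*)\geq\bm\pi_i^\intercal\bm\alpha_i^*$, hence $\pi(\vec x^*)\geq z^*=\mu(\bm\lambda^*)$; tightness gives $\mu(\bm\lambda^*)\leq L(\vec x^*,\bm\lambda^*)=\pi(\vec x^*)$. Both inequalities point the \emph{same} way, so ``either way $\pi(\vec x^*)=\mu(\bm\lambda^*)$'' is a non sequitur: the missing inequality $\pi(\vec x^*)\leq\mu(\bm\lambda^*)$ amounts to $\pi_i(\vec u,\vec x_i^*)=\phi_i(\vec u,\vec x_i^*)$, which can fail when $\vec x_i^*=\mathcal{X}_i\bm\alpha_i^*$ lies in $X_i$ but is a nontrivial convex combination of strictly cheaper points. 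Concretely, take $n=m=1$, $X_1=\{0,1,2\}$, $g_1(x)=x$, $\pi_1(\vec u,0)=\pi_1(\vec u,2)=0$, $\pi_1(\vec u,1)=10$ (the tent interpolation is a concave extension to $[0,2]$, so Assumption~\ref{ass:concave} holds), and $\vec u=1$. Then $\bm\alpha^*=(\tfrac12,0,\tfrac12)$ is optimal for \ref{LP-lambda-conv} with value $0$, satisfies $\mathcal{X}_1\bm\alpha^*=1\in X_1$, and makes~\eqref{ineq:u-convex-finite} tight, so Statement~\ref{enum:integer4} holds; yet $x=1$ is never a minimizer of $\pi_1(\vec u,x)+\lambda x$ over $\{0,1,2\}$ for any $\lambda\geq 0$ (its value $10+\lambda$ exceeds $0$ at $x=0$), so $\vec u=1$ is not enforceable and Statement~\ref{enum:integer3} fails. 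The gap is therefore not a repairable bookkeeping issue in your write-up: the implication requires an extra hypothesis (e.g., that $\bm\alpha_i^*$ is supported where $\pi_i$ coincides with its convex envelope, as happens automatically for integral $\bm\alpha^*$ in the finite case). Note that the paper's own one-line proof inherits the same difficulty through the direction \ref{enum:main-convexification4}$\Rightarrow$\ref{enum:main-convexification3} of Theorem~\ref{thm:main-convexification}, which likewise needs $\pi(\vec x^*)=\phi(\vec x^*)$; your explicit unpacking makes this visible.
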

 \begin{proof}
By Lemmata~\ref{lem:finite} and~\ref{lem:concave} and the linearity of $g_i, i\in N$, the \ref{LP-lambda-conv}
is a correct formulation of Problem~\ref{price-opt} for 
the respective convexified games $G^{\min-\conv}(\vec u)$. 
Hence, the result follows from Theorem~\ref{thm:main-convexification}.
\end{proof}
\begin{remark}
The equivalence between~\eqref{enum:integer1} and~\eqref{enum:integer2}
remains true even for concave $g_i, i\in N$, since in this
case~\ref{LP-lambda-conv} is a relaxation of~\ref{price-opt}
w.r.t. the convexified game $G^{\min-\conv}(\vec u)$ with the same dual.
\end{remark}
\begin{remark}
The game $G^{\min-\conv}(\vec u)$ (assuming Assumption~\ref{ass:convex-hull}) can be interpreted as the mixed extension of $G^{\min}(\vec u)$.
By Theorem~\ref{thm:main}, this implies that for any finite strategic game
of type $G^{\min}(\vec u)$, any minimal vector $\vec u$ can be enforced 
in mixed strategies for the game $G^{\min-\conv}(\vec u)$
(the LP has zero duality gap).
Theorem~\ref{thm:convex-hull-finite}  implies that whenever
\ref{LP-lambda-conv} admits
integral optimal solutions, $\vec u$ is also enforceable in pure strategies.
\end{remark}

\ref{LP-lambda-conv} may in general involve (exponentially) many variables $\bm\alpha_i, i\in N$ depending on the number $k:=\sum_{i\in N}k_i$.
A common approach is to dualize~\ref{LP-lambda-conv}
to yield an LP
with less variables at the cost of obtaining (exponentially) many constraints. 
In the following we dualize the primal problem in the form
$  - \max  \{ - \sum_{i\in N} \bm \pi_{i}^\intercal \bm \alpha_i\vert 
\ell(\bm\alpha)\leq \vec u,\; \bm\alpha_i\in\Lambda_i,\;i\in N\}
$.
The following
steps are reminiscent to the standard dual LP
of the Walrasian configuration LP (see e.g., Blumrosen and Nisan~\cite[$\S$ 11.3.1]{Nisan:2007}).
\begin{framed}
\begin{align}\tag{DP$^{\min}$($\vec u$)}\label{convex:DP}
\min&  \sum_{i\in N} \mu_i+\sum_{j\in E}\lambda_j u_j  ,\\ \notag
\sum_{j\in E} g_{ij}(\vec x_{i}^k)  \lambda_j+\mu_i&\geq - \pi_{ik} \text{ for all }i\in N, k=1,\dots, k_i\\
 \mu_i&\in \R, i\in N,\; \lambda_j \geq 0, j\in E. \notag
\end{align}
\end{framed}
%Note that in order to obtain~\ref{bilateral:DP}, we have relaxed
%$\Lambda_i$ to the set $\{\bm\alpha_i\in \R^{k_i}\vert \sum_{j\in[k_i]}%\alpha_{ij}\leq 1\}$ which is feasible since any primal optimal solution
%will satisfy $\sum_{j\in[k_i]}\alpha_{ij}= 1$.
Note that $\mu_i,i\in N$ is not sign-constrained
as it is the dual variable to $\sum_{k}\alpha_{ik}= 1, i\in N$.
Moreover, recall that
$g_{ij}(\vec x_{i}^k) \in \R$ are just parameters in \ref{convex:DP}.
The dual has $n+m$ many variables but exponentially
many constraints, but,  if we have a polynomial
time separation oracle, we can use the ellipsoid method
to obtain a polynomial time algorithm (cf. Groetschel et al.~\cite{GroetschelLovaszSchrijver1993}).
A standard way to obtain such an oracle is to assume an efficient \emph{demand oracle}.
\begin{definition}\label{def:oracle}
A demand oracle for player $i\in N$ gets as input prices $\bm \lambda\in \R_+^m$
and outputs a cost minimizing vector $\vec x_i\in X_i$, that is,
%\begin{equation}\label{oracle-concave} 
\[\vec x_i(\bm\lambda)\in \arg\min\left\{\pi_i(\vec u, \vec x_i)+\bm\lambda^\intercal  g_i(\vec x_i) \vert \vec x_i\in X_i\right\}.
\]
%\end{equation}
\end{definition}

We obtain the following result
for polynomial time computable demand oracles.
Let us remark here that we assume that 
there is a succinct representation of the game $G^{\min}(\vec u)$
and hence of the \ref{LP-lambda-conv}. 
%For instance,  
%assume we a polynomial computable oracle access to the values $g_{ij}(\vec x_{i}^k), \pi_{ik},j\in E,  k\in \{1,\dots,k_i\},i\in N$. 
\begin{theorem}\label{concave-main}
Let $G^{\min}(\vec u)$ be a game for which the assumptions of Theorem~\ref{thm:convex-hull-finite} are satisfied.
If for all $\bm\lambda\in \R^m_+$ and $i\in N$,
the demand oracle $\vec x_i(\bm\lambda)$ can be computed in polynomial time,
then, \ref{LP-lambda-conv} can be solved in polynomial time.
\end{theorem}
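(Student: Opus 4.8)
The plan is to bypass \ref{LP-lambda-conv} itself, whose number of variables $k=\sum_{i\in N}k_i$ may be exponential in the input size, and instead run the ellipsoid method on its dual \ref{convex:DP}, which has only $n+m$ variables but exponentially many constraints. By the equivalence of separation and optimization of Gr\"otschel, Lov\'asz and Schrijver~\cite{GroetschelLovaszSchrijver1993}, it suffices to exhibit a polynomial-time separation oracle for the feasible region of \ref{convex:DP}, and then to recover an optimal primal solution from the polynomially many inequalities generated during the run. By Lemmata~\ref{lem:finite} and~\ref{lem:concave}, \ref{LP-lambda-conv} and \ref{convex:DP} are a correct primal--dual pair for the convexified master problem, so both the optimal value and a primal optimum obtained this way are the ones asked for.

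The separation oracle I would build works as follows. The feasible region of \ref{convex:DP} is $\{(\mu,\lambda)\colon \lambda\geq 0,\ \mu_i+\lambda^\intercal g_i(\vec x_i^k)+\pi_{ik}\geq 0\ \text{for all } i\in N,\, k\}$. Given a candidate $(\bar\mu,\bar\lambda)$, any sign violation $\bar\lambda_j<0$ is separated by the halfspace $\lambda_j\geq 0$. Otherwise, for each $i\in N$ I call the demand oracle of Definition~\ref{def:oracle} at prices $\bar\lambda$, obtaining a minimizer $\vec x_i(\bar\lambda)\in X_i$ together with its value $d_i:=\pi_i(\vec u,\vec x_i(\bar\lambda))+\bar\lambda^\intercal g_i(\vec x_i(\bar\lambda))$. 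The point $(\bar\mu,\bar\lambda)$ is feasible precisely when $\bar\mu_i+d_i\geq 0$ for all $i$; if $\bar\mu_i+d_i<0$ for some $i$, I return the inequality $\mu_i+\lambda^\intercal g_i(\vec x_i(\bar\lambda))+\pi_i(\vec u,\vec x_i(\bar\lambda))\geq 0$, which is violated at $(\bar\mu,\bar\lambda)$. Each call makes $n$ demand queries plus $O(n+m)$ arithmetic, hence runs in polynomial time.

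I expect the crux of the argument to be the correctness of this oracle in the concave case of Assumption~\ref{ass:concave}, where the returned $\vec x_i(\bar\lambda)$ need not be one of the listed generators $\vec x_i^1,\dots,\vec x_i^{k_i}$. The key is that the returned inequality is nevertheless a \emph{valid} (implied) inequality for \ref{convex:DP}: writing $y:=\vec x_i(\bar\lambda)=\sum_j\kappa_j\vec x_i^j$ as a convex combination of generators and using linearity of $g_i$ together with concavity of $\pi_i(\vec u,\cdot)$ on $\conv(X_i)$, one gets for any feasible $(\mu,\lambda)$
\[ \mu_i+\lambda^\intercal g_i(y)+\pi_i(\vec u,y) \;\geq\; \sum_j\kappa_j\big(\mu_i+\lambda^\intercal g_i(\vec x_i^j)+\pi_{ij}\big) \;\geq\; 0. \]
The same computation shows $d_i=\min_k\{\pi_{ik}+\bar\lambda^\intercal g_i(\vec x_i^k)\}$, so no generator constraint is violated when $\bar\mu_i+d_i\geq 0$; hence the oracle correctly decides feasibility. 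In the finite-point case of Assumption~\ref{ass:convex-hull} the minimizer is automatically a generator and validity is immediate.

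Finally, to recover the primal I would invoke the standard machinery of~\cite{GroetschelLovaszSchrijver1993}: running the ellipsoid method on \ref{convex:DP} yields the optimal value $v^*$ together with a polynomially sized family $C$ of generated inequalities, each associated with a point of some $X_i$ and hence with a column $(\pi_i(\vec u,\cdot),g_i(\cdot))$. Restricting \ref{convex:DP} to the inequalities in $C$ gives the same optimum $v^*$, since the ellipsoid trajectory cannot distinguish the restricted from the full problem; by LP strong duality the dual of this restricted program---which is \ref{LP-lambda-conv} supported on the columns in $C$---is feasible with optimum $v^*$. This reduced primal has polynomially many variables and constraints, so I solve it directly and pad the solution with zeros to obtain an optimal solution of \ref{LP-lambda-conv}. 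The remaining ingredients---bounding encoding lengths and the ball radii required by the ellipsoid method, and ruling out unboundedness via the standing feasibility assumption on~\ref{price-opt}---are the routine technical conditions of~\cite{GroetschelLovaszSchrijver1993}.
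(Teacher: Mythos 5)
Your proof follows essentially the same route as the paper's: dualize to \ref{convex:DP}, run the ellipsoid method on the dual with its $n+m$ variables, and implement the separation oracle via the demand oracle of Definition~\ref{def:oracle}, treating the finite-point case (Assumption~\ref{ass:convex-hull}) and the concave case (Assumption~\ref{ass:concave}) separately. Your handling of the concave case is in fact slightly more careful than the paper's --- you observe that the demand oracle may return a minimizer that is not one of the generators and verify that the resulting cut is nonetheless a valid separating inequality with $d_i=\min_k\{\pi_{ik}+\bar\lambda^\intercal g_i(\vec x_i^k)\}$, whereas the paper only notes that the minimum of the concave objective is attained at an extreme point of $\conv(X_i)$ --- but this, together with your explicit primal-recovery step, is a refinement of the same argument rather than a different one.
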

\begin{proof}
In order to use the ellipsoid method, we need to check whether we get a polynomial time separation oracle
for the constraints:
\[ \sum_{j\in E} g_{ij}(\vec x_{i}^k) \lambda_j+\mu_i\geq -\pi_{ik}, i\in N, k=1,\dots, k_i. \]
With the demand oracle we can determine
the value $\pi_i^*(\bm\lambda):=\pi_i(\vec u, \vec x_i(\bm\lambda))+ \bm\lambda^\intercal g_i(\vec x_i(\bm\lambda))$.
Now,  if $\pi_i^*(\bm\lambda)\geq - \mu_i$ for all $i\in N$, the current point $(\bm\mu,\bm\lambda)$
is feasible. Otherwise, suppose
$\pi_i^*(\bm\lambda)< - \mu_i$.
If   $G^{\min}(\vec u)$ is a game with finite point sets $X_i$, 
we get of course $\vec x_i(\bm\lambda)=\vec x_i^k$
for some $k\in\{1,\dots, k_i\}$ and, hence,
\[ \pi_i^*(\bm\lambda)=\pi_i(\vec u, \vec x_{i}^k)+\sum_{j\in E} g_{ij}(\vec x_{i}^k) \lambda_j=\pi_{ik}+
\sum_{j\in E} g_{ij}(\vec x_{i}^k) \lambda_j<-\mu_i,\]
represents a violated inequality.

If $G^{\min}(\vec u)$ is a concave game, then, by the linearity of $g_{i}$, the function
$\vec x_i\mapsto \sum_{j\in E} g_{ij}(\vec x_i) \lambda_j+\pi_i(\vec u, \vec x_i)$
is concave over $\conv(X_i)$ as well and attains  its minimum over $\conv(X_i)$ at an extreme point of $\conv(X_i)$, hence, in $X_i$.
 \end{proof}
 
\begin{remark}
All results in this section carry directly over
to the case of maximization problems.
In this case, a convex envelope becomes
a concave envelope, the concave functions $g_i, i\in N$ become convex and the assumption
of having a concave extension changes to
a convex extension. 
\end{remark}
\subsection{Consequences and Impossibility Results}
The characterization result in Theorem~\ref{thm:convex-hull-finite}
together with the assumption of a polynomial time
demand oracle can be used to establish non-existence
results based on complexity-theoretic assumptions like $P\neq NP$.
If the master problem~\ref{price-opt} (which is also
called the welfare maximization problem in some applications) 
is NP-hard but there is a polynomial demand oracle, then, assuming $P\neq NP$, the 
guaranteed (weak) enforceability (with market prices) of $\vec u$ is ruled out since
otherwise, we can just compute an integral optimal solution of~\ref{LP-lambda-conv}
in polynomial time (by solving the dual~\ref{convex:DP})
which corresponds to an optimal solution of the master problem.
This approach has been pioneered by Talgam-Cohen and Roughgarden~\cite{Roughgarden:2015}
for the case of pricing equilibria for Walrasian market settings (and generalizations thereof).
\section{Integral Problems and Compact Linear Relaxations}\label{sec:integral}
 We assume in the following
 that for every $i \in N$, the set $X_i$ is of the form
$ X_i=\left\{\vec x_i\in \Z^m\middle\vert  A_i\vec x_i\geq b_i\right\},$
where $A_i$ is a rational $k_i\times m$ matrix
and $\vec b_i\in \IQ^{k_i}$ is a rational vector.
Thus,  the combined set is given by
$X:=\left\{\vec x\in \Z^{n\cdot m} \middle \vert A_i\vec x_i\geq \vec b_i, i\in N\right\}.$
 We further assume \emph{linear resource consumption},
that is,
$ g_i(\vec x_i)=\vec x_i \text{ for all }i\in N.$
This assumption implies $\ell_j(\vec x)=\sum_{i\in N} x_{ij}$ for all $j\in E$.
 The private cost function
of a player is assumed to be \emph{quasi-separable} over the resources 
and depends only on the aggregated
load vector and the own load on the resource:
$ \pi_i(\vec u, \vec x_i)= \sum_{j\in E} \pi_{ij}(\vec u)\cdot  x_{ij},$
where
$ \pi_{ij}: \Z^m \rightarrow \R $ denotes the player-specific per-unit
cost on resource $j$ mapping a vector $\vec u$ to the reals.
 Then, problem~\ref{price-opt}
can be reformulated as an \emph{integer linear optimization problem}.
If the polyhedron  $\relaxation(X):=\left\{\vec x\in \R^{n\cdot m} \middle \vert A_i\vec x_i\geq b_i, i\in N\right\}$
is integral\footnote{A polyhedron $P\subset \R^r$ is \emph{integral},
if all its vertices are integral.}, then, it follows directly that~\ref{price-opt} 
has zero duality gap.
\subsection{Aggregation Polytopes and Total Dual Integrality}\label{sec:aggregation}
A powerful tool to recognize integrality of polyhedra
 is the notion of
 \emph{total-dual-integrality} (TDI) of linear systems (see Edmonds and Giles~\cite{Edmonds1977}).
 A rational system of the form
$A\vec z\geq \vec b$ with $A\in \IQ^{r\times m}$ and $\vec b\in \IQ^r$  is TDI,
if for every integral $\vec c\in \Z^{m}$, the dual
of 
$ \min\{ \vec c^\intercal\vec z \vert A\vec z\geq \vec b\} $
given by
$ \max\{ \vec z^\intercal\vec b \vert A^\intercal \vec z=  \vec c, \vec z\geq 0\} $
has an integral optimal solution (if the problem admits a finite optimal solution). It is known that for TDI systems, the corresponding polyhedron
is integral. A system $A\vec z\geq \vec b$ is \emph{box-TDI},
if the system $A\vec z\geq \vec b, w\leq\vec z\leq  \vec u$ is TDI
for all rational $w,\vec u$. A polytope is called box-TDI,
if it can be described by a box-TDI system.

 Now we assume that for all $i\in N$,  the matrices $A_i$
 are equal to some matrix $A\in \IQ^{r\times m}$.
We further assume that the cost functions are linear
and \emph{homogenous},
that is, they have the form
$\pi_i(\vec u, \vec x_i)= \sum_{j\in E} \pi_{j}(\vec u)\cdot  x_{ij},$
where
$ \pi_{j}: \Z^m \rightarrow \R $ denotes the resource-specific per-unit
cost on resource $j$ mapping a load vector $\vec u$ to the reals.
Instead of taking the Cartesian product
 of the LP-relaxations $\relaxation(X_i)$, we define an \emph{aggregation polytope}:
 \begin{equation}\label{eq:aggregation-polytope}
 P_N=\{\vec z\in \R^m\vert A\vec z\geq \sum_{i\in N}\vec b_i, \vec z\geq 0\}.
 \end{equation}
 This aggregated polytope seems only useful, 
 if it is box-TDI and any solution $\vec z$ can be
 decomposed into feasible strategies.
 This latter property is called the \emph{integer decomposition property} 
 (IDP). Formally,  $P_N$ has the IDP, if any integral optimal solution $\vec z\in P_N$
 can be decomposed into feasible integral vectors, that is, $\vec z= \sum_{i\in N}\vec z_i$ with $\vec z_i\in X_i$
 for all $i\in N$.
 We remark  that Kleer and Sch\"afer~\cite{KleerS17}
showed - in a different context - that polytopal congestion
games (see Section~\ref{sec:congestion-games} for a definition) with box-integral and IDP aggregation polytopes have nice
properties in terms of equilibrium computation and equilibrium welfare properties.
  Now we have everything together to state the following result.
 \begin{theorem}\label{thm:box-integral}
Assume that $P_N$ is box-TDI and satisfies IDP. Then for homogeneous linear cost functions $\pi_j, j\in E$, every $u\in \Z^m$
for which  $P_N\cap\{\vec y \vert \vec y\leq \vec u\}\neq\emptyset$
is weakly enforceable by market prices. Moreover under the same
condition, every minimal $\vec u$ w.r.t. $P_N$ is enforceable.
  \end{theorem}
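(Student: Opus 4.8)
The plan is to exploit the homogeneity of the costs to collapse the master problem $P^{\min}(\vec u)$ onto the aggregation polytope $P_N$, solve a single box-constrained LP there, and then use the integer decomposition property (IDP) to pull an optimal aggregate back to a strategy distribution while reading the market-clearing prices off the LP dual. First I would record that, because the cost functions are homogeneous linear, $\pi(\vec x)=\sum_{i\in N}\sum_{j\in E}\pi_j(\vec u)x_{ij}=\sum_{j\in E}\pi_j(\vec u)\,\ell_j(\vec x)=\vec c^\intercal\ell(\vec x)$ with $\vec c:=(\pi_j(\vec u))_{j\in E}$, so the objective depends on $\vec x$ only through the aggregate $\vec z=\ell(\vec x)$. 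I would then establish the correspondence that the integral aggregate load vectors realizable by some $\vec x\in X$ are exactly the integral points of $P_N$: the forward direction holds because $A\ell(\vec x)=\sum_{i\in N}A\vec x_i\geq\sum_{i\in N}\vec b_i$ (the constraint $\vec z\geq 0$ being consistent with the aggregate loads considered), and the backward direction is precisely IDP. Intersecting with the capacity constraint $\vec z\leq\vec u$, the integer master problem is equivalent to minimizing $\vec c^\intercal\vec z$ over the integral points of $P_N\cap\{\vec z\leq\vec u\}$.

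Next I would invoke box-TDI. Since $P_N$ is box-TDI and $\vec u\in\Z^m$, the polytope $P_N\cap\{\vec z\leq\vec u\}$ is integral, so the LP $\min\{\vec c^\intercal\vec z\mid A\vec z\geq\sum_{i\in N}\vec b_i,\ 0\leq\vec z\leq\vec u\}$ attains its optimum at an integral vertex $\vec z^*$; feasibility is guaranteed by the hypothesis $P_N\cap\{\vec y\leq\vec u\}\neq\emptyset$. By IDP I would write $\vec z^*=\sum_{i\in N}\vec x_i^*$ with $\vec x_i^*\in X_i$, yielding a feasible $\vec x^*\in X$ with $\ell(\vec x^*)=\vec z^*\leq\vec u$ and $\pi(\vec x^*)=\vec c^\intercal\vec z^*$ equal to the LP optimum; by the correspondence above, $\vec x^*$ is then optimal for the integer master problem.

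To extract prices and close the gap, let $\bm\lambda^*\geq 0$ and $\vec y^*\geq 0$ be optimal dual multipliers of the aggregate LP for the constraints $\vec z\leq\vec u$ and $A\vec z\geq\sum_{i\in N}\vec b_i$, respectively. Complementary slackness gives $\ell_j(\vec x^*)<u_j\Rightarrow\lambda_j^*=0$, which is the Walrasian market-clearing law, and $(\bm\lambda^*)^\intercal(\ell(\vec x^*)-\vec u)=0$. Since $\pi_i(\vec u,\vec x_i)+(\bm\lambda^*)^\intercal g_i(\vec x_i)=(\vec c+\bm\lambda^*)^\intercal\vec x_i$, the decomposition Lemma~\ref{eq:decomposition} together with the aggregate correspondence gives $\mu(\bm\lambda^*)=\sum_{i\in N}\min_{\vec x_i\in X_i}(\vec c+\bm\lambda^*)^\intercal\vec x_i-(\bm\lambda^*)^\intercal\vec u=\min_{\vec z\in P_N}(\vec c+\bm\lambda^*)^\intercal\vec z-(\bm\lambda^*)^\intercal\vec u$, and complementary slackness (the same $\bm\lambda^*$ and $\vec y^*$) certifies that $\vec z^*$ attains this inner minimum. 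Substituting and using market clearing yields $\mu(\bm\lambda^*)=\vec c^\intercal\vec z^*=\pi(\vec x^*)$, i.e.\ zero duality gap, so by part~\ref{enum:main2} of Theorem~\ref{thm:main} the vector $\vec u$ is weakly enforceable by market prices via $(\vec x^*,\bm\lambda^*)$.

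For the final claim, if $\vec u$ is minimal w.r.t.\ $P_N$ then, exactly as in Corollary~\ref{thm:main2}, every feasible aggregate must satisfy $\ell(\vec x)=\vec u$ (otherwise a coordinate could be decreased while strictly lowering the associated $\sum_j h_j$), so $\ell(\vec x^*)=\vec u$ and part~\ref{enum:main1} of Theorem~\ref{thm:main} upgrades weak enforceability to full enforceability. I expect the main obstacle to be the bookkeeping in the third step: showing that the \emph{single} multiplier $\bm\lambda^*$ arising from the aggregate LP is simultaneously market-clearing and an optimal multiplier for the product-form, integer master problem. This is exactly where box-integrality --- and not merely integrality of $P_N$ alone --- is indispensable, since the capacity $\vec u$ enters as an arbitrary integral box bound; a secondary subtlety worth pinning down is the nonnegativity needed for the forward half of the aggregate correspondence.
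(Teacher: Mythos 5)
Your proposal is correct and follows essentially the same route as the paper's proof: homogeneity collapses the master problem onto the aggregate LP over $P_N\cap\{\vec z\le\vec u\}$, box-TDI yields an integral optimum, IDP decomposes it into strategies, and the framework (Theorem~\ref{thm:main}) converts zero duality gap into (weak) enforceability. The paper states this very tersely and leaves the dual-multiplier bookkeeping implicit; your expanded verification of the price extraction and the nonnegativity caveat for the aggregate correspondence are consistent with, not divergent from, that argument.
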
 
  \iffalse
  \begin{proof}
 Using the homogeneity of cost functions 
 and the IDP assumption of $P_N$,
 the LP-relaxation of~\ref{ip-price-opt-minimal} becomes
  \[ \min\left\{\sum_{j\in E}\pi_j(\vec u) y_j \;\middle\vert\;  \vec y\in P_N\cap\{\vec y \vert \vec y\leq \vec u\} \right\}.\]
 As  $P_N$ is box-TDI, there is an integral optimal solution
 which by IDP can be decomposed into the original sets $X_i,i\in N$.
 \end{proof}
\fi
\subsection{Integral Polymatroid Games}
We consider now a class of  games
based on \emph{polymatroids}
which rely on submodular functions defining structured capacity constraints
on subsets of resources.
An integral set function $f : 2^E \rightarrow \Z$ is \emph{submodular} if $f(U) + f(V) \geq f(U \cup V) + f(U \cap V)$ for all $U,V \in 2^E$; $f$ is \emph{monotone} if
$f(U)\le f(V)$ for all $U\subseteq V \subseteq E$; and $f$ is \emph{normalized} if $f(\emptyset)=0$.
We call an integral, submodular, monotone, and normalized function $f : 2^E \rightarrow \Z$ an \emph{integral polymatroid rank function}.
\iffalse
The associated \emph{integral polymatroid polyhedron} is defined as
$\P_f := \Bigl\{\vec y \in \Z^m \mid y(U) \leq f(U)\text{ for all } U\subseteq E \Bigr\}$,
where for a vector $\vec y = (y_j)_{j \in E}$ and $U \subseteq E$, we write $y(U)$ shorthand for $\sum_{j \in U} y_j$. For $\P_f$, the corresponding \emph{integral polymatroid base polyhedron} is defined as
$ \B_f := \Bigl\{\vec y\in \Z^m \mid y(U) \leq f(U)\text{ for all }U\subseteq E, \;y(E) = f(E)\Bigr\}.
$
We denote by 
$ \EB_f := \Bigl\{\vec y\in \R^m \mid y(U) \leq f(U)\text{ for all }U\subseteq E, \;y(E) = f(E)\Bigr\}.
$
the \emph{relaxed polymatroid base polyhedron}.

\fi

Suppose  there
is a finite set $N=\{1,\dots,n\}$ of players so that each player~$i$ is associated with an integral polymatroid rank function $f_i : 2^E \to \Z$ that defines an integral polymatroid $\P_{f_i}$ with base polyhedron $\B_{f_i}$. A strategy of
player~$i \in N$ is to choose a vector $\vec x_i  = (x_{ij})_{j \in E} \in \B_{f_i}$, i.e., player~$i$ chooses an integral resource consumption $x_{ij} \in \Z$ for each resource $e$
such that $f_i(E)$ units are distributed over the resources and for each $U \subseteq E$ not more than $f_i(U)$ units are distributed over the resources contained in $U$.
Formally, the set $X_i$ of feasible strategies of player~$i$ is
defined as
\begin{align*}
X_i = \B_{f_i}
= \Bigl\{\vec x_i \in \Z^m \mid x_{i}(U) \le f_i(U) \text{ for all } U \subseteq 
E,\; x_{i}(E) =f_i(E) \Bigr\},
\end{align*}
where, for a set $U \subseteq E$, we write $x_i(U)=\sum_{j \in U} x_{ij}$.
We show that the LP-relaxation
admits integral optimal solutions - by reformulating ~\ref{price-opt}
as a polymatroid intersection problem whose underlying intersection polytope is known
to admit integral optimal solutions.\footnote{The intersection of two
polymatroid base polytopes, however,  need not be a polymatroid.}
The proof can be found in the appendix.
\begin{theorem}\label{thm:polymatroid-main}
For polymatroid games, every $\vec u\in \Z^m$ 
for which $P^{\min}(\vec u)$ admits a finite optimal solution
 is weakly enforceable with market prices.
If $\vec u\in \Z^m$ is minimal for $X$, then $\vec u$ is  enforceable.
\end{theorem}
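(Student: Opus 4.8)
The plan is to reduce the (continuous) LP-relaxation of~\ref{price-opt} to a polymatroid intersection problem whose feasible polytope is integral, and then transfer the resulting integral primal optimum together with the LP-dual optimum back to the original integral game via Theorem~\ref{thm:main}. Since $\pi_i(\vec u,\vec x_i)=\sum_{j\in E}\pi_{ij}(\vec u)x_{ij}$ is linear in $\vec x_i$ and $g_i(\vec x_i)=\vec x_i$, the inner minimization in Lemma~\ref{eq:decomposition} is linear, so $\min_{\vec x_i\in X_i}\{\pi_i(\vec u,\vec x_i)+\bm\lambda^\intercal\vec x_i\}=\min_{\vec x_i\in\conv(X_i)}\{\pi_i(\vec u,\vec x_i)+\bm\lambda^\intercal\vec x_i\}$. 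Hence the dual function $\mu$ of~\ref{price-opt} coincides with that of its LP-relaxation obtained by replacing each $X_i=\B_{f_i}$ with $\conv(\B_{f_i})=\{\vec x_i\in\R^m\mid x_i(U)\le f_i(U)\ \forall U\subseteq E,\ x_i(E)=f_i(E)\}$, the latter equality being the integrality of polymatroids. It therefore suffices to exhibit an integral optimal solution $\vec x^*\in X$ of this LP-relaxation: LP strong duality then yields a dual optimum $\bm\lambda^*\ge 0$ with $\pi(\vec x^*)=\mu(\bm\lambda^*)$, i.e.\ zero duality gap for the original~\ref{price-opt}.

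The heart of the argument is the integrality of the relaxed feasible region. Put $V=N\times E$, write $x(S)=\sum_{(i,j)\in S}x_{ij}$ for $S\subseteq V$, and set $S_i=\{j\in E\mid (i,j)\in S\}$ and $B_j=N\times\{j\}$. I would introduce two integral polymatroid rank functions on $V$: the direct-sum function $\rho_1(S)=\sum_{i\in N}f_i(S_i)$ and the partition function $\rho_2(S)=\sum_{j:\,S\cap B_j\neq\emptyset}u_j$. Both are integral, normalized, monotone and submodular (the second because each $S\mapsto u_j\,\min\{1,|S\cap B_j|\}$ is). One checks that the base polyhedron $\B_{\rho_1}$ equals the direct sum $\times_{i\in N}\conv(\B_{f_i})$: the equality $x(V)=\rho_1(V)=\sum_{i\in N} f_i(E)$ together with $x(\{i\}\times E)\le f_i(E)$ forces $x_i(E)=f_i(E)$ for every $i$. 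Likewise the polymatroid polyhedron $\P_{\rho_2}=\{\vec x\ge 0\mid x(S)\le\rho_2(S)\ \forall S\subseteq V\}$ equals $\{\vec x\ge 0\mid \sum_{i\in N}x_{ij}\le u_j\ \forall j\in E\}$. Consequently the relaxed feasible region is exactly $\B_{\rho_1}\cap\P_{\rho_2}$.

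By Edmonds' polymatroid intersection theorem the polyhedron $\P_{\rho_1}\cap\P_{\rho_2}$ is integral. The relaxed region $\B_{\rho_1}\cap\P_{\rho_2}=\{\vec x\in\P_{\rho_1}\cap\P_{\rho_2}\mid x(V)=\rho_1(V)\}$ is precisely the face of $\P_{\rho_1}\cap\P_{\rho_2}$ maximizing $x(V)$; it is nonempty because~\ref{price-opt} is feasible, whence $\max\{x(V)\mid \vec x\in\P_{\rho_1}\cap\P_{\rho_2}\}=\rho_1(V)\in\Z$, so this face is a face of an integral polytope and is itself integral. Minimizing the linear objective over this bounded integral polytope therefore attains its optimum at an integral vertex $\vec x^*$; since the integral points of $\B_{\rho_1}$ are exactly $\times_{i\in N}\B_{f_i}=X$, we have $\vec x^*\in X$. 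This establishes zero duality gap for~\ref{price-opt} witnessed by $(\vec x^*,\bm\lambda^*)$ with $\vec x^*\in X$, so part~\ref{enum:main2} of Theorem~\ref{thm:main} gives weak enforceability of $\vec u$ with market prices. If in addition $\vec u$ is minimal for $X$, then every feasible point---in particular $\vec x^*$---satisfies $\ell(\vec x^*)=\vec u$, and part~\ref{enum:main1} of Theorem~\ref{thm:main} (equivalently Corollary~\ref{thm:main2}) upgrades this to enforceability.

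The main obstacle is the second step: casting the two heterogeneous constraint families---the player base polyhedra and the shared resource capacities---as two genuine polymatroids on the common ground set $N\times E$, and, because (as the footnote notes) the intersection of two base polytopes need not itself be a polymatroid, deriving integrality from the intersection theorem together with the face argument rather than from any structural claim about the intersection.
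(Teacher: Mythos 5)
Your proof is correct and follows essentially the same route as the paper's: both lift the problem to the ground set $N\times E$, encode the players' base polyhedra and the shared capacity constraints as two integral polymatroids there, and derive integrality of the LP relaxation from Edmonds' polymatroid intersection theorem before transferring the primal--dual pair back via Theorem~\ref{thm:main}. Your variant is in fact slightly tidier in two places --- you use the independence polyhedron $\P_{\rho_2}$ rather than a base polyhedron for the capacity side, which handles the weakly-enforceable case without invoking minimality, and you spell out the face argument and the identity of the Lagrangian dual with the LP dual, both of which the paper leaves implicit.
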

\iffalse
Before we prove the theorem, we state
some observations. It is known that the fractional relaxation of any individual integral 
polymatroid base polyhedron  $\EB_{f_i}$ is box-integral -- however this
does not imply that $\relaxation(X) \cap \{\vec x\in \R^{nm} \vert \ell(\vec x)\leq \vec u\} $ is also integral. Take for instance the path packing problem in a capacitated graph, where one needs to find
$k$ disjoint paths for source-terminal pairs $s_i$,$t_i, i=1,\dots,k$
in a digraph.
 For any individual $s_i$,$t_i$ pair,  the
flow polytope is integral and a feasible path can be computed by shortest path computations.
However, computing a path packing for multiple  $s_i$,$t_i$ pairs is strongly NP-hard
and does not admit a polynomially sized integral LP formulation (unless $P=NP$).

Nevertheless, polymatroids carry enough structure so that
the combined polytope described in $LP^{\min-\polymatroid}(\vec u)$ remains
integral. 
\fi

%The above proof is  that Bornd\"orfer~\cite{Borndoerfer2004} proved a %similar result for the special case of matroids and unit integral capacity %vectors $\vec u$.

For the maximization variant, we refer to Appendix~\ref{sec:polym-max}.

\section{Monotone Aggregative Games}\label{sec:monotone}

In the previous sections, we assumed that the private
cost function $\pi_i(\vec u, \vec x_i)$ of every player $i\in N$
is parameterized in $\vec u$ and, otherwise,
only depends on $\vec x_i\in X_i$.
This separability condition allowed to decompose
the Lagrangian of Problem~\ref{price-opt}
 leading to the subsequent characterizations.
Several games of interest, however, do not fulfil
this assumption. A prime example are
atomic congestion games, where the private
cost  depends on the aggregated load vector $\ell(\vec x)$ of
all players and changes, if player $i$ changes her
strategy to some $\vec y_i\neq\vec x_i\in X_i$.
We introduce a class of \emph{monotone aggregative
games (mag)}, where the cost function of a player $i\in N$ is allowed to
depend on both, $\vec x_i\in X_i$ and the aggregate $\ell(\vec x), \vec x\in X$. Throughout this section, we assume that $X_i\subset\R^m_+$
and $g_i:X_i\rightarrow\R^m_+$
for all $i\in N$. For $\vec x_i\in X_i$, we use the notation $E(\vec x_i ):=\{j\in E\vert x_{ij}>0\}$. 

\begin{assumption}\label{ass:mag}
Let $X_i\subseteq\R^m_+$
for all $i\in N$. We assume that cost/utility functions
depend on the vector of loads and on the
own strategy vector only -- this structure is
known in the literature as \emph{aggregative games}, see Harks and Klimm~\cite{HarksK15}, Jensen~\cite{Jensen10} and Paccagnan et al~\cite{PaccagnanGPKL19}.
\begin{enumerate}
\item\label{agg:cost} For minimization games $G^{\min-\moag}$, the total cost of a player $i\in N$ under strategy distribution $\vec x\in X$ is defined by a function $\cost_{i}^{\moag}:X\rightarrow \R,$
which satisfies 
\[ \cost_i^{\moag}(\vec x):= \pi_i^{\moag}(\ell(\vec x),\vec x_i) \text{ for all }\vec x\in X\]
for some function $\pi_i^{\moag}:\R_+^m\times X_i\rightarrow \R$.
For maximization games $G^{\max-\moag}$, we denote the utility function 
for $i\in N$ by 
$\utility_{i}^{\moag}:X\rightarrow \R$ and we assume that it satisfies 
\[ \utility_i^{\moag}(\vec x):= v_i^{\moag}(\ell(\vec x),\vec x_i) \text{ for all }\vec x\in X\]
for some function $v_i^{\moag}:\R^m_+\times X_i\rightarrow \R$.
\item\label{agg:mon} We further assume that the indirect cost/utility functions
exhibit \emph{negative externalities} in the load vector, that is, 
the following \emph{monotonicity condition} holds: \begin{align}\label{eq:monotonicty}
\pi_i^{\moag}(\vec u,\vec x_i)\leq \pi_i^{\moag}(\vec w,\vec x_i) \text{ for all }\vec u\leq \vec w, \vec u,\vec w\in \R_+^m\\ \notag
v_i^{\moag}(\vec u,\vec x_i)\geq v_i^{\moag}(\vec w,\vec x_i) \text{ for all }\vec u\leq \vec w, \vec u,\vec w\in \R_+^m.
\end{align}
\item The indirect cost/utility functions satisfy the condition of
 \emph{independence of irrelevant alternatives} defined as follows: 
\begin{align}\label{eq:independent}  \pi_i^{\moag}(\vec u,\vec x_i)&= \pi_i^{\moag}(\vec w,\vec x_i)
\text{ for all } \vec u,\vec w\in \R_+^m\text{ with } u_{j}=w_{j} \text{ for all }j\in E(\vec x_i), \vec x_i\in X_i,i\in N  \\ \notag
v_i^{\moag}(\vec u,\vec x_i)&= v_i^{\moag}(\vec w,\vec x_i)
\text{ for all } \vec u,\vec w\in \R_+^m\text{ with } u_{j}=w_{j} \text{ for all }j\in E(\vec x_i), \vec x_i\in X_i,i\in N.
\end{align}
\item\label{ass:mag-strategy} The strategy spaces $X_i\subseteq\R^m_+, i\in N$ exhibit an \emph{overlapping structure}, that is, for all $i\in N$:
\begin{equation}\label{eq:laminar}
g_{ij}(\vec x_i)=g_{ij}(\vec y_i) \text { for all } j\in E(\vec x_i)\cap E(\vec y_i),
\vec x_i,\vec y_i \in X_i.
%E  \text { with } x_{ij}>0, y_{ij}>0.
\end{equation}
\end{enumerate}
We call a game satisfying~\ref{agg:cost}.-\ref{ass:mag-strategy}.
a \emph{monotone aggregative game}.
\end{assumption}
Condition~\ref{ass:mag-strategy}. implies that for any 
resource $j\in E$ with $j\in E(\vec x_i)\cap E(\vec y_i)$
for at least two $\vec x_i\neq\vec y_i\in X_i$, the  resource consumption level of 
player $i$ on resource $j$ is fixed to some $\kappa_{ij}\geq 0$
for every strategy $\vec x_i\in X_i$ with $j\in E(\vec x_i)$.
On every other resource $j\in E$, however, the resource
usage level may be arbitrary.
We need to redefine the concept of enforceability
for this more general class of games.

\begin{definition}[Enforceability]\label{def:enforceable-ant}
A vector $\vec u\in \R^m$ is enforceable by prices $\bm\lambda\in\R^m_+$, if
there is $\vec x^*\in X$ satisfying~\ref{cond1-ant}. and~\ref{cond2-ant}.
for minimization games $G^{\min-\moag}$ or~\ref{cond1-ant}. and~\ref{cond3-ant}.
for maximization games $G^{\max-\moag}$:
\begin{enumerate}
\item\label{cond1-ant}  $\ell_j(\vec x^*)= u_j$ for all $j\in E$.
\item Minimization: \label{cond2-ant} $ \vec x^*_i\in\arg\min_{\vec x_i\in X_i}\left\{ \pi_{i}(\ell(\vec x_i,\vec x_{-i}^*),\vec x_i)+\bm\lambda^\intercal g_i(\vec x_i)\right\} \text{ for all }i\in N.$
\item Maximization: \label{cond3-ant} $ \vec x^*_i\in\arg\max_{\vec x_i\in X_i}\left\{ v_i(\ell(\vec x_i,\vec x_{-i}^*),\vec x_i)-\bm\lambda^\intercal g_i(\vec x_i)\right\} \text{ for all }i\in N.$
\end{enumerate}
 \end{definition}

We obtain the following result for monotone aggregative games. 

\begin{theorem}\label{thm:mag}
Let $G^{\min-\moag}=(N,X,(\pi_i^{\moag})_{i\in N})$ be a monotone aggregative minimization game and let  $G^{\min}(\vec u)=(N,X,(\pi_i)_{i\in N})$ be an associated game with $\pi_i(\vec u,\vec x_i):=\pi_i^{\moag}(\vec u,\vec x_i), \vec x_i\in X_i, i\in N$. Denote the same assumption for a maximization game by exchanging $\min$
with $\max$ and $\pi$ with $v$.  Then, the following holds true.
\begin{enumerate}
\item\label{mag:min} If  $(\vec x^*, \bm\lambda)\in X\times \R_+^m$ enforces $\vec u$ for $G^{\min}(\vec u)$, then $(\vec x^*, \bm\lambda)$
also enforces $\vec u$ for $G^{\min-\moag}$.
\item\label{mag:max} If  $(\vec x^*, \bm\lambda)\in X\times \R_+^m$ enforces $\vec u$ for $G^{\max}(\vec u)$, then $(\vec x^*, \bm\lambda)$
also enforces $\vec u$ for $G^{\max-\moag}$.
\end{enumerate}
\end{theorem}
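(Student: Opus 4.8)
The plan is to show that the very pair $(\vec x^*,\bm\lambda)$ that enforces $\vec u$ for the load-taking game $G^{\min}(\vec u)$ still satisfies both conditions of Definition~\ref{def:enforceable-ant} for the monotone aggregative game. Condition~\ref{cond1-ant}. ($\ell(\vec x^*)=\vec u$) is literally the same requirement in both definitions, so it transfers for free, and the entire argument reduces to verifying the best-response condition~\ref{cond2-ant}. for every player $i\in N$ and every candidate deviation $\vec y_i\in X_i$. I will prove part~\ref{mag:min}. in detail and then obtain part~\ref{mag:max}. by a symmetric sign reversal.

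First I would record that the two cost models agree at the profile $\vec x^*$ itself: since $\ell(\vec x^*)=\vec u$, we have $\pi_i^{\moag}(\ell(\vec x^*),\vec x_i^*)=\pi_i^{\moag}(\vec u,\vec x_i^*)=\pi_i(\vec u,\vec x_i^*)$, so the ``incumbent'' side of the best-response inequality is the same number in both games. The task is therefore to compare the deviation cost $\pi_i^{\moag}(\ell(\vec y_i,\vec x_{-i}^*),\vec y_i)$ of the $\moag$-game against its load-taking surrogate $\pi_i(\vec u,\vec y_i)=\pi_i^{\moag}(\vec u,\vec y_i)$ and to show the former dominates the latter. Writing $\vec w:=\ell(\vec y_i,\vec x_{-i}^*)$ and using that every other player keeps her strategy fixed, one gets the coordinatewise identity $w_j-u_j=g_{ij}(\vec y_i)-g_{ij}(\vec x_i^*)$ for all $j\in E$.

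The key step, and the one I expect to be the main obstacle, is to argue that $w_j\ge u_j$ holds on every resource $j\in E(\vec y_i)$ that is actually used under the deviation, and then to convert this \emph{partial} domination into a cost inequality. For the domination I would split on membership in $E(\vec x_i^*)$: if $j\in E(\vec x_i^*)\cap E(\vec y_i)$, the overlapping-structure condition~\eqref{eq:laminar} forces $g_{ij}(\vec y_i)=g_{ij}(\vec x_i^*)$, so $w_j=u_j$; if instead $j\in E(\vec y_i)\setminus E(\vec x_i^*)$, then $x_{ij}^*=0$, and since $g_i$ records no consumption on an unused resource we have $g_{ij}(\vec x_i^*)=0$, whence $w_j-u_j=g_{ij}(\vec y_i)\ge 0$ because $g_i$ maps into $\R_+^m$. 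The delicate point is that $\vec w$ and $\vec u$ need \emph{not} be comparable on coordinates outside $E(\vec y_i)$, so the monotonicity condition~\eqref{eq:monotonicty} cannot be invoked directly; this is exactly where independence of irrelevant alternatives~\eqref{eq:independent} is essential. I would use it to replace $\vec w$ by the vector agreeing with $\vec u$ off $E(\vec y_i)$ and with $\vec w$ on $E(\vec y_i)$: by~\eqref{eq:independent} this does not change the cost of playing $\vec y_i$, and the modified vector now genuinely dominates $\vec u$, so~\eqref{eq:monotonicty} yields $\pi_i^{\moag}(\vec w,\vec y_i)\ge\pi_i^{\moag}(\vec u,\vec y_i)$.

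Assembling the pieces, for every $\vec y_i\in X_i$ I obtain the chain $\pi_i^{\moag}(\vec w,\vec y_i)+\bm\lambda^\intercal g_i(\vec y_i)\ge\pi_i(\vec u,\vec y_i)+\bm\lambda^\intercal g_i(\vec y_i)\ge\pi_i(\vec u,\vec x_i^*)+\bm\lambda^\intercal g_i(\vec x_i^*)=\pi_i^{\moag}(\vec u,\vec x_i^*)+\bm\lambda^\intercal g_i(\vec x_i^*)$, where the middle inequality is precisely the best-response property of $\vec x_i^*$ in $G^{\min}(\vec u)$ supplied by the hypothesis. This is condition~\ref{cond2-ant}., proving part~\ref{mag:min}. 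For part~\ref{mag:max}. the same reasoning runs with the inequalities in~\eqref{eq:monotonicty} and~\eqref{eq:independent} read for $v_i^{\moag}$, so that the deviation utility is bounded \emph{above} by its load-taking surrogate and the best-response chain reverses direction, giving condition~\ref{cond3-ant}.
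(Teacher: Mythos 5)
Your proposal is correct and follows essentially the same route as the paper's proof: both reduce the claim to the best-response condition for an arbitrary deviation $\vec y_i$, use independence of irrelevant alternatives to restrict attention to the coordinates in $E(\vec y_i)$, the overlapping condition to see that the load is unchanged on $E(\vec y_i)\cap E(\vec x_i^*)$, and monotonicity to absorb the increase on $E(\vec y_i)\setminus E(\vec x_i^*)$. The only (cosmetic) difference is the choice of intermediate comparison vectors — you patch $\vec w$ with $\vec u$ off $E(\vec y_i)$, while the paper uses auxiliary vectors $\vec w^1,\vec w^2$ that vanish off $E(\vec y_i)$ — and your explicit remark that $g_{ij}(\vec x_i^*)=0$ for $j\notin E(\vec x_i^*)$ makes the same implicit step in the paper transparent.
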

\begin{proof}
We only prove~\ref{mag:min}. as ~\ref{mag:max}. follows by the same arguments.
For~\ref{mag:min}., we only need to verify the Nash equilibrium conditions
of a given tuple $(\vec x^*, \bm\lambda)\in X\times \R_+^m$
for the game $G^{\min-\moag}$.
We obtain for any $\vec y_i\in X_i, i\in N$:
\begin{align}\notag
\pi_i^{\moag}(\ell(\vec x^*),\vec x_i^*)+\bm\lambda^\intercal g_i(\vec x_i^*) &= 
\pi_i(\vec u,\vec x_i^*)+\bm\lambda^\intercal g_i(\vec x_i^*)\\\label{ineq:x-opt-ant}
&\leq \pi_i(\vec u,\vec y_i)+\bm\lambda^\intercal g_i(\vec y_i)\\ \label{ineq:irr1}
& = \pi_i^{\moag}(\vec w^1,\vec y_i)+\bm\lambda^\intercal g_i(\vec y_i) \text{ for }
w^1_j:=\begin{cases}u_j,& \text{ for }j\in E(\vec y_i)\\
 0,& \text{ otherwise}\end{cases} \\ \label{ineq:monotone-ant}
 &\leq \pi_i^{\moag}(\vec w^2,\vec y_i)+\bm\lambda^\intercal g_i(\vec y_i) \text{ for }
w^2_j:=\begin{cases}w^1_j+g_{ij}(\vec y_i),& \text{ for }j\in E(\vec y_i)\setminus E(\vec x_i^*) \\
 w^1_j,& \text{ otherwise}\end{cases} \\ \label{ineq:irr2}
 &= \pi_i^{\moag}(\ell((\vec x_{-i}^*,\vec y_i)),\vec y_i)+\bm\lambda^\intercal g_i(\vec y_i),
\end{align}
where~\eqref{ineq:x-opt-ant} follows, because $u$ is enforced by $(\vec x^*, \bm\lambda)$ for $G^{\min}(u)$.
Equality~\eqref{ineq:irr1}
 follows from the independence of
irrelevant choice condition~\eqref{eq:independent}.
Inequality~\eqref{ineq:monotone-ant} follows
from the monotonicity condition~\eqref{eq:monotonicty}.
Equality~\eqref{ineq:irr2} follows from the overlapping condition~\eqref{eq:laminar}.
\end{proof}

\section{Applications in Congestion Games}\label{sec:congestion-games}
We now demonstrate the applicability
of our framework by deriving
new existence results of  tolls enforcing certain load vectors in congestion games.
Moreover, we show 
how several known results in the literature
follow directly.
\subsection{Nonatomic Congestion Games}
We first present results for the case
that the strategy spaces of players are convex subsets of $\R^m_+$.
%\paragraph{Nonatomic Routing.}
We are given a directed graph $G=(V,E)$
and a set of populations $N:= \{1, \dots,
n\}$, where each population $i \in N$ has a demand $d_i> 0$ that has
to be routed from a source $s_i \in V$ to a destination $t_i \in V$.
In the \emph{nonatomic} model, the demand interval $[0,d_i]$ represents a continuum of infinitesimally small agents each acting independently
choosing a cost minimal $s_i$,$t_i$ path. 
There are continuous cost  functions $c_{ij}: \R_+^m \rightarrow \R_+, i\in N, j\in E$  which may depend on the population identity
and also on the aggregate load vector -- thus allowing
for modeling \emph{non-separable} latency functions.
A \emph{flow} for population~$i\in N$ is a nonnegative vector
$\vec x_i \in \R^{|E|}_+$ that  lives in the flow polytope: 
\begin{align*}
X_i=\left\{\vec x_i\in \R_+^m\middle \vert \sum_{j\in \delta^+(v)} x_{ij} - \sum_{j\in \delta^-(v)} x_{ij} = \gamma_i(v), \text{ for all } v\in V\right\},
\end{align*}
where $\delta^+(v)$ and $\delta^-(v)$ are the arcs leaving and
entering~$v$, and $\gamma_i(v) = d_i$, if $v =
s_i$, $\gamma_i(v) = -d_i$, if $v = t_i$, and $\gamma_i(v)=0$, otherwise.
We assume that every $t_i$ is reachable in $G$ from $s_i$
for all $i\in N$, thus, $X_i\neq \emptyset$ for all $i\in N$.
Given a combined flow $\vec x\in X$, the cost of a path $P\in\P_i$,
where $\P_i$ denotes the set of simple $s_i,t_i$ paths in $G$, 
 is defined as
\[c_{i,P}(\ell(\vec x)):=\sum_{j\in P}c_{ij}(\ell(\vec x)).\]
A \emph{Wardrop equilibrium} $\vec x$ with path-decomposition $(x_{i,P})_{i\in N, P\in \P_i}$ is defined as follows:
\[ c_{i,P}(\ell(\vec x))\leq c_{i,Q}(\ell(\vec x)) \text{ for all }P,Q\in \P_i \text{ with }x_{i,P}>0.\]
The interpretation here is that any 
agent is traveling along a shortest path given the overall load vector $\ell(\vec x)$.
One can reformulate the Wardrop equilibrium conditions
using load vectors $\vec u$ stating
that - given the load vector of a Wardrop equilibrium -
every agent is traveling along a shortest path.
\begin{lemma}[Dafermos~\cite{dafermos1980traffic,dafermos71}]
A strategy distribution $\vec x^*\in X$ with overall load vector $\vec u:=\ell(\vec x^*)$ is a Wardrop equilibrium if and only if

\[ \vec x_i^*\in\arg\min\left\{\sum_{j\in E}c_{ij}(\vec u) x_{ij} \middle\vert\; \vec x_i\in X_i\right\} \text{ for all }i\in N.\]

%\[  c_{i,P}(\vec u)\leq c_{i,Q}(\vec u) \text{ for all }P,Q\in \P_i \text{ with }x_{i,P}>0.\]
\end{lemma}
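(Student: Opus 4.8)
The plan is to exploit that, with $\vec u:=\ell(\vec x^*)$ held fixed, every cost coefficient $c_{ij}(\vec u)$ becomes a constant, so both sides of the claimed equivalence refer to the \emph{same} linear costs. In particular the condition on the right is a linear program over the flow polytope, and since $X=\times_{i\in N}X_i$ is a product and both the Wardrop condition and the $\arg\min$ condition are stated per population, I would reduce the whole statement to proving, separately for each $i\in N$, that a flow $\vec x_i^*\in X_i$ satisfies the path inequality $c_{i,P}(\vec u)\le c_{i,Q}(\vec u)$ for all $P,Q\in\P_i$ with $x_{i,P}^*>0$ if and only if $\vec x_i^*$ minimises $\sum_{j\in E}c_{ij}(\vec u)x_{ij}$ over $X_i$. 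This is exactly the textbook optimality characterisation of a minimum-cost $s_i$--$t_i$ flow with fixed nonnegative arc costs, and the only novelty is the bookkeeping that the costs are frozen at $\ell(\vec x^*)$ rather than re-evaluated as the flow varies.

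For the direction ``minimiser $\Rightarrow$ Wardrop'' I would argue by contradiction using a rerouting. Fix a path decomposition $(x_{i,P}^*)_{P\in\P_i}$ of $\vec x_i^*$ and suppose some used path $P$ (with $x_{i,P}^*>0$) is strictly more expensive than some path $Q$, i.e.\ $c_{i,P}(\vec u)>c_{i,Q}(\vec u)$. Shifting $\epsilon\in(0,x_{i,P}^*]$ units of flow from $P$ to $Q$ yields a new vector $\vec y_i:=\vec x_i^*-\epsilon\,\chi_P+\epsilon\,\chi_Q$, where $\chi_P$ denotes the unit flow along $P$; this $\vec y_i$ still satisfies flow conservation (both $P,Q$ are $s_i$--$t_i$ paths) and stays nonnegative because each edge of $P$ carries at least $x_{i,P}^*\ge\epsilon$ units. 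Its cost differs from that of $\vec x_i^*$ by $\epsilon\bigl(c_{i,Q}(\vec u)-c_{i,P}(\vec u)\bigr)<0$, contradicting optimality. Crucially, $\vec u$ is held fixed throughout, so the cost change is genuinely linear in $\epsilon$.

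For the converse ``Wardrop $\Rightarrow$ minimiser'', the path condition says every used path attains the minimum path cost $\kappa_i:=\min_{Q\in\P_i}c_{i,Q}(\vec u)$, so that $\sum_{j\in E}c_{ij}(\vec u)x_{ij}^*=\sum_{P}x_{i,P}^*\,c_{i,P}(\vec u)=d_i\kappa_i$. For any competitor $\vec y_i\in X_i$ I would decompose $\vec y_i$ into $s_i$--$t_i$ paths carrying $d_i$ units in total together with a (possibly empty) collection of cycles; each path costs at least $\kappa_i$ and, because $c_{ij}(\vec u)\ge0$, the cyclic part contributes nonnegatively, giving $\sum_{j\in E}c_{ij}(\vec u)y_{ij}\ge d_i\kappa_i=\sum_{j\in E}c_{ij}(\vec u)x_{ij}^*$. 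Hence $\vec x_i^*$ is a minimiser, completing the equivalence.

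The main obstacle is conceptual rather than computational: one must notice that freezing $\vec u=\ell(\vec x^*)$ \emph{linearises both conditions simultaneously}, after which the result is the classical shortest-path optimality of minimum-cost flow. The remaining care is purely technical -- checking nonnegativity of the rerouted flow and the path/cycle decomposition bookkeeping in the converse (in particular that nonnegative arc costs render the cyclic part harmless and that the path decomposition reproduces the edge cost $\sum_{j\in E}c_{ij}(\vec u)x_{ij}$). A clean self-contained alternative I would keep in reserve is LP duality: introduce node potentials equal to the shortest-path distances from $s_i$ under the frozen costs and observe that the path-equilibrium condition is precisely complementary slackness for the min-cost flow primal--dual pair, which sidesteps explicit decompositions altogether.
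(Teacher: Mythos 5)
The paper does not actually prove this lemma --- it is stated as a citation to Dafermos and used as a black box to cast the nonatomic routing model into the $G^{\min}(\vec u)$ framework --- so there is no in-paper argument to compare against; your proof therefore has to stand on its own, and it does. Your two directions are the classical shortest-path optimality characterisation of min-cost flow with frozen nonnegative arc costs: the rerouting step is sound (flow conservation is preserved because $-\epsilon\chi_P+\epsilon\chi_Q$ is a circulation, and nonnegativity holds since $x_{ij}^*\ge x_{i,P}^*$ for every $j\in P$ under the fixed path decomposition), and the converse correctly uses $c_{ij}(\vec u)\ge 0$ (which the paper does assume, as $c_{ij}:\R_+^m\to\R_+$) to discard the cyclic part of a competitor. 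The one point worth making explicit is the identity $\sum_{j\in E}c_{ij}(\vec u)x_{ij}^*=\sum_{P}x_{i,P}^*\,c_{i,P}(\vec u)$ in the converse: it requires that the edge flow $x_{ij}^*$ equal the path-aggregated flow $\sum_{P\ni j}x_{i,P}^*$, i.e.\ that $\vec x_i^*$ carry no residual circulation beyond its path decomposition. The paper's definition of a Wardrop equilibrium is stated \emph{together with} a path decomposition, so this identification is built in, but since the $\arg\min$ side of the lemma ranges over the full flow polytope $X_i$ (which contains flows with cycles), you should say in one sentence that a minimiser can only carry zero-cost cycles and that the Wardrop condition is understood relative to the given path decomposition. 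Your LP-duality fallback via node potentials and complementary slackness is a perfectly good alternative and is in fact closer in spirit to how the paper treats everything else (Theorem~\ref{thm:main} is exactly a complementary-slackness/strong-duality statement), but it is not needed here.
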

With this characterization, the model fits in our framework and we can apply our general existence result.

\begin{corollary}[Yang and Huang~\cite{Yang04}, Fleischer et al.~\cite{Fleischer04}, Karakostas and Kolliopoulos~\cite{Karakostas04}]\label{cor:nonatomic}
Every minimal capacity vector $\vec u$ is enforceable.
\end{corollary}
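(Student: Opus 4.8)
The plan is to verify that the nonatomic routing model satisfies the hypotheses of the general enforceability machinery developed in Section~\ref{sec:gap}, and then invoke Corollary~\ref{thm:main2} directly. The key observation is that, thanks to Dafermos's lemma, a Wardrop equilibrium with load vector $\vec u=\ell(\vec x^*)$ is \emph{exactly} a Nash equilibrium of the game $G^{\min}(\vec u)$ in which player $i$ has the linear cost $\pi_i(\vec u,\vec x_i):=\sum_{j\in E}c_{ij}(\vec u)\,x_{ij}$ and $g_i(\vec x_i)=\vec x_i$. Once this identification is made, enforceability of $\vec u$ by tolls $\bm\lambda$ is precisely Condition~\ref{cond2}. together with $\ell(\vec x^*)=\vec u$, which is the left-hand side of the equivalence in Corollary~\ref{thm:main2}.

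First I would set up the parameterized game. Fix a candidate load vector $\vec u$ and define the per-player cost $\pi_i(\vec u,\vec x_i)=\sum_{j\in E}c_{ij}(\vec u)x_{ij}$; note that for fixed $\vec u$ this depends only on $\vec x_i$, so Assumption~\ref{ass:aggregate} holds and the master problem~\ref{price-opt} takes the form $\min\{\sum_{i\in N}\sum_{j\in E}c_{ij}(\vec u)x_{ij}\mid \ell(\vec x)\le\vec u,\ \vec x_i\in X_i\}$. Since each flow polytope $X_i$ is a nonempty compact convex set and the objective is linear (hence convex) in $\vec x$, this is a linear program; I would remark that a Slater-type point or simply LP-duality guarantees zero duality gap, so the hypothesis needed for Corollary~\ref{thm:main2} is in place.

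Next I would apply the minimality assumption. Because $\vec u$ is minimal for $X$ in the sense of Definition~\ref{def:minimal}, every feasible $\vec x$ of~\ref{price-opt} satisfies the capacity constraint~\eqref{eq:inequality} with equality, i.e.\ $\ell(\vec x)=\vec u$; this is exactly the content of the remark following Corollary~\ref{thm:main2}. Consequently, the tightness requirement that distinguishes ``enforceable'' from ``weakly enforceable'' is automatic, and Corollary~\ref{thm:main2} yields the clean equivalence: $\vec u$ is enforceable via $(\vec x^*,\bm\lambda^*)$ if and only if $(\vec x^*,\bm\lambda^*)$ achieves $\pi(\vec x^*)=\mu(\bm\lambda^*)$. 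I would then conclude by exhibiting a primal-dual optimal pair: take $\vec x^*$ an optimal flow of the LP and $\bm\lambda^*$ an optimal dual solution; zero duality gap supplies the matching objective values, and Lemma~\ref{eq:decomposition} re-expresses primal optimality as the per-player best-response condition~\ref{cond2}., which by Dafermos's lemma is the Wardrop equilibrium condition. Hence $\vec u$ is enforceable.

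I do not anticipate a serious obstacle here, since the heavy lifting is done by Corollary~\ref{thm:main2} and Dafermos's characterization; the proof is essentially a verification that the routing model is an instance of $G^{\min}(\vec u)$ with a convex, minimal master problem. \textbf{The one point requiring mild care} is the interaction between the \emph{non-separable} latency functions $c_{ij}(\ell(\vec x))$ of the original Wardrop model and the \emph{separable} parameterized costs $c_{ij}(\vec u)$ used in $G^{\min}(\vec u)$: the framework treats $\vec u$ as an exogenous constant inside $c_{ij}$, and it is precisely Dafermos's lemma that licenses freezing the load argument at the equilibrium value $\vec u$ while still recovering the genuine Wardrop condition. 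I would make this substitution explicit so that the reader sees why the general separable theory applies despite the apparent load-dependence of the latencies.
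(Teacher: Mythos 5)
Your proposal is correct and follows essentially the same route as the paper: the paper's proof simply defines $\pi_i(\vec u,\vec x_i)=\sum_{j\in E}c_{ij}(\vec u)x_{ij}$, $g_i(\vec x_i)=\vec x_i$, and cites Corollary~\ref{cor:convex}, whose own argument is precisely your chain (linear objective over convex flow polytopes, Slater/LP strong duality gives zero duality gap, minimality of $\vec u$ forces tightness, Corollary~\ref{thm:main2} plus Lemma~\ref{eq:decomposition} and Dafermos's characterization finish it). Your explicit remark on why freezing the load argument at $\vec u$ is licensed by Dafermos's lemma is a point the paper handles implicitly by stating that lemma immediately beforehand, so there is no substantive difference.
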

\begin{proof}
Define 
$\pi_i(\vec u,\vec x_i):=\sum_{j\in E} c_{ij}(\vec u) x_{ij}$ 
and $g_i(\vec x_i):=\vec x_i$
for every $i\in N$.
By the linearity of the objective in~\ref{price-opt}
and the fact that Slater's constraint qualification condition is satisfied,
the result follows by Corollary~\ref{cor:convex}.
%\footnote{
%$X$ consists of the Cartesian product of the $X_i$'s that are convex non-empty flow polytopes. This set is intersected with a polyhedron defined by affine inequalities.}
\end{proof}
Note that the above result
is more general than that of~\cite{Fleischer04,Karakostas04,Yang04} as we allow for \emph{arbitrary} player-specific cost functions $c_{ij}, i\in N, j\in E$.
%However, for enforcing $\vec u$ via a \emph{Nash equilibrium},
%one needs to assume that latency functions are nondecreasing.
Previous works assumed less general
\emph{heterogeneous cost functions} of the form
\[ c_{i,P}(\vec x)=\sum_{j\in P } \alpha_i c_j(\ell_j(\vec x))+ \lambda_j,\]
where $\alpha_i>0$ represents a tradeoff parameter 
weighting the impact of money versus travel time.
Fleischer et al.~\cite[Sec. 6]{Fleischer04}
also mention that their existence result
holds for the more general case of non-separable
latency functions. 
\subsection{Atomic Congestion Games}\label{sec:atomic-congestion-games}
Now we turn to atomic congestion
games as introduced in Rosenthal~\cite{Rosenthal73}.
This setting arises by assuming that $d_i=1$
for all $i\in N$ and requiring
that all strategy vectors $\vec x_i$ need
to be integral, that is, $\vec x_i\in \{0,1\}^m$.
As is standard in the congestion games
literature, instead of considering \emph{network games},
where the strategies are paths
in graphs, we can 
associated a set $\Sc_i\subset 2^m$
of allowable subsets of $E$ with every $i\in N$
and the incidence vectors of a set $S\in \Sc_i$
represent a feasible $\vec x_i$ and vice versa. 
The cost functions on resources are given by \emph{player-specific  functions} $c_{ij}(\ell_j(\vec x)), j\in E, i\in N$.
If the cost functions only depend on the resource identity, that is, 
$c_{j}(\ell_j(\vec x)), j\in E$, we speak of an atomic
congestion game with \emph{homogeneous cost functions}.
One can easily verify that for monotonically nondecreasing (player-specific) cost functions,
the atomic congestion game is in fact a monotone aggregative
game.
\begin{lemma}\label{congestion-mag}
Every atomic congestion game with monotonically nondecreasing (player-specific) cost functions
is a monotone aggregative game.
\end{lemma}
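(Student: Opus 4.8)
The plan is to verify directly that an atomic congestion game with nondecreasing (player-specific) cost functions satisfies all four requirements of Assumption~\ref{ass:mag}. Since strategy vectors are binary, $\vec x_i\in\{0,1\}^m$, resource consumption is linear, $g_i(\vec x_i)=\vec x_i$, so $\ell_j(\vec x)=\sum_{i\in N}x_{ij}$ counts the number of players using resource $j$. The natural candidate for the indirect cost function is
\[
\pi_i^{\moag}(\vec u,\vec x_i):=\sum_{j\in E}x_{ij}\,c_{ij}(u_j)=\sum_{j\in E(\vec x_i)}c_{ij}(u_j),
\]
and I would check each of the four properties against this definition.

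First I would confirm the aggregative structure~\ref{agg:cost}: evaluating $\pi_i^{\moag}$ at the load vector $\ell(\vec x)$ gives $\pi_i^{\moag}(\ell(\vec x),\vec x_i)=\sum_{j\in E(\vec x_i)}c_{ij}(\ell_j(\vec x))$, which is exactly the total cost $\cost_i(\vec x)$ incurred by player $i$, so the cost depends on $\vec x$ only through $\ell(\vec x)$ and $\vec x_i$. Next, the monotonicity condition~\eqref{eq:monotonicty} follows termwise: if $\vec u\leq\vec w$ then $c_{ij}(u_j)\leq c_{ij}(w_j)$ for every $j$ because each $c_{ij}$ is nondecreasing, and summing over $j\in E(\vec x_i)$ yields $\pi_i^{\moag}(\vec u,\vec x_i)\leq\pi_i^{\moag}(\vec w,\vec x_i)$. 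The independence of irrelevant alternatives~\eqref{eq:independent} is immediate, since $\pi_i^{\moag}(\vec u,\vec x_i)$ reads off only the coordinates $u_j$ with $j\in E(\vec x_i)$; if $\vec u$ and $\vec w$ agree on all such coordinates, the two values coincide.

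The one condition that genuinely exploits the atomic (integral) nature of the game is the overlapping structure~\eqref{eq:laminar}, and I expect this to be the conceptual crux, even though it is short. With $g_i(\vec x_i)=\vec x_i$ and $\vec x_i,\vec y_i\in\{0,1\}^m$, any resource $j\in E(\vec x_i)\cap E(\vec y_i)$ satisfies $x_{ij}>0$ and $y_{ij}>0$, hence $x_{ij}=y_{ij}=1$; therefore $g_{ij}(\vec x_i)=x_{ij}=y_{ij}=g_{ij}(\vec y_i)$, which is precisely~\eqref{eq:laminar}. Intuitively, a binary strategy can use a resource only at the single level $1$, so the per-player consumption of any commonly used resource is automatically fixed across strategies. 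Assembling the four verified properties shows that the game is a monotone aggregative game, which completes the proof.
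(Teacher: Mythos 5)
Your verification is correct and complete: the paper itself leaves this proof to the reader, and your direct check of the four conditions of Assumption~\ref{ass:mag} (aggregative structure, monotonicity via nondecreasing $c_{ij}$, independence of irrelevant alternatives from the support-restricted sum, and the overlapping condition from binary strategies with $g_i(\vec x_i)=\vec x_i$) is exactly the intended argument. Your observation that the overlapping structure~\eqref{eq:laminar} is the one place where the $\{0,1\}$ nature of atomic strategies is genuinely used is also the right emphasis.
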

We leave the proof to the reader.
With this result at hand,
by Theorem~\ref{thm:mag} it suffices 
to analyze cases for which
$\vec u$ is enforceable for $G^{\min}(\vec u)$,
which in turn is equivalent to the property
that Problem~\ref{price-opt} has zero duality gap.
Note that Problem~\ref{price-opt} has the following structure:
\begin{framed}
\[\min\left\{\sum_{i\in N}\sum_{j\in E}\pi_{ij}(\vec u) x_{ij} \middle\vert \ell(\vec x)\leq \vec u, \vec x\in X\right\},\]
where we assume linear resource consumption $g_i(\vec x_i)=\vec x_i$ for all $\vec x_i\in X_i, i\in N$.
\end{framed}

\paragraph{Homogeneous Cost Functions.}
We first assume that cost functions 
are homogeneous, thus, 
the private cost of player $i\in N$ has the form
$ \pi_i(\ell(\vec x),\vec x_i):=\sum_{j\in E}c_{j}(\ell_j(\vec x))x_{ij}.$
We use in the following the more general model of
so-called \emph{polytopal
congestion games} introduced by
Del Pia et al.~\cite{PiaFM17} and further studied by Kleer and Sch\"afer~\cite{KleerS17}. In this model, 
 the strategy spaces are defined as
\[ X_i := P_i\cap \{0,1\}^m, i\in N,\]
where $P_i$ are polyhedrons of the form
$P_i=\{\vec x_i\in \R_+^m\vert A \vec x_i\geq \vec b_i\}$
for some rational matrix $A$ and integral vector $\vec b_i$
of appropriate dimension.
We remark here that all characterizations regarding box-TDI
and IPD also work for systems  $ A \vec x_i= \vec b_i$
or $A \vec x_i\leq \vec b_i$ assuming that $A$ and  $\vec b_i$
carry the desired structure (see for instance Kleer and Sch\"afer~\cite[Prop. 2.1]{KleerS17}). 
For homogeneous cost functions and polytopal strategy spaces,
we can use an LP formulation of Problem~\ref{price-opt} using the aggregation polyhedron $P_N$ as defined in Section~\ref{sec:aggregation}. Thus,  Theorem~\ref{thm:box-integral} implies the following result.
\begin{corollary}\label{cor:consequences}
Let $(N,X,(\pi_i)_{i\in N})$ be a congestion game with homogeneous nondecreasing
cost functions and polytopal strategy spaces with aggregation polyhedron
$P_N$. Let $\vec u$ be minimal for  $X$.
If $P_N$ is box-TDI and satisfies IDP, $\vec u$
is enforceable.
In particular, box-TDI and IDP holds for:
\begin{enumerate}
\item Network games with a common source and multiple sinks,
\item $r$-arborescence congestion games (see Harks et al.~\cite{Harks13mp} and Kleer and Sch\"afer~\cite{KleerS17} for a definition)
\item Intersection of strongly base-orderable matroids (see Kleer and Sch\"afer~\cite{KleerS17} for a definition)
\item symmetric totally-unimodular games (see Del Pia et al.~\cite{PiaFM17}
for a definition) including matching games,
\item Asymmetric matroid games (see Ackermann et al.~\cite{Ackermann08}).
\end{enumerate}
 \end{corollary}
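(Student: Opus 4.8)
The plan is to obtain the general statement by reduction to the separable machinery, and then to verify the two polyhedral hypotheses (box-TDI and IDP) for each of the five families separately. For the general claim I would start from Lemma~\ref{congestion-mag}, which identifies an atomic congestion game with nondecreasing homogeneous cost functions as a monotone aggregative game. The minimization part of Theorem~\ref{thm:mag} then reduces enforceability for the congestion game to enforceability of $\vec u$ for the associated separable game $G^{\min}(\vec u)$ in which $\pi_i(\vec u,\vec x_i)=\sum_{j\in E}c_j(u_j)\,x_{ij}$. Here the per-unit cost on resource $j$ is $c_j(u_j)$, which depends only on $j$ and not on the player $i$; hence the coefficients $\pi_j(\vec u):=c_j(u_j)$ are homogeneous and the objective is linear in $\vec x$. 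Since $P_N$ is box-TDI with the IDP, Theorem~\ref{thm:box-integral} yields enforceability of every minimal $\vec u$ for $G^{\min}(\vec u)$, and the reduction transfers this back to the congestion game. The one compatibility check is that minimality of $\vec u$ for $X$ agrees with minimality for $P_N$; this holds because $P_N$ aggregates the $X_i$ and, by IDP, integral points of $P_N$ lying below $\vec u$ decompose into feasible single-player strategies.

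It then remains to establish, for each listed family, that the aggregation polyhedron $P_N$ is box-TDI and enjoys the IDP. I would treat these as two independent verifications and lean on the cited structural results rather than reprove them. For the families governed by a totally unimodular constraint matrix --- single-source, multiple-sink network games, and the symmetric totally-unimodular games of Del Pia et al.~\cite{PiaFM17} (matching games in particular) --- box-TDI is immediate, since every totally unimodular system is box-TDI, while the IDP follows from the corresponding flow/path decomposition for networks and from the integer decomposition of totally unimodular polytopes in general. For $r$-arborescence games I would invoke the TDI description of the arborescence polytope together with an arborescence-packing decomposition argument, as in Harks et al.~\cite{Harks13mp} and Kleer and Sch\"afer~\cite{KleerS17}.

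For the matroidal families I would rely on classical matroid polyhedral theory: matroid base polyhedra and their relevant intersections are described by box-TDI systems, so box-TDI of the associated aggregation polyhedron is inherited, while the IDP follows from matroid union in the asymmetric case of Ackermann et al.~\cite{Ackermann08} and from strong base-orderability in the intersection case. The latter property is precisely what guarantees that an aggregated integral base can be split into single-player bases, and for the detailed verification I would cite Kleer and Sch\"afer~\cite{KleerS17}.

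I expect the main obstacle to be the IDP rather than box-TDI. Box-TDI is comparatively routine here, reducing either to total unimodularity or to a known TDI description of the polytope, so no genuinely new work is needed. The IDP, by contrast, is a combinatorial packing statement that must be argued family by family, and the most delicate case is the intersection of strongly base-orderable matroids: box-TDI holds for matroid-intersection systems, but an \emph{integer} decomposition of an aggregated base into per-player bases is not automatic, and strong base-orderability is exactly the hypothesis that rescues it. For that step I would defer to the structural argument in~\cite{KleerS17} rather than reconstruct the exchange sequences explicitly.
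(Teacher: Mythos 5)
Your proposal follows exactly the paper's route: Lemma~\ref{congestion-mag} plus Theorem~\ref{thm:mag} reduce the problem to enforceability for the separable game $G^{\min}(\vec u)$ with homogeneous linear per-unit costs $c_j(u_j)$, the master problem is then formulated over the aggregation polytope $P_N$, and Theorem~\ref{thm:box-integral} together with the cited structural results for the five families finishes the argument. Your extra remark reconciling minimality for $X$ with minimality for $P_N$ is a point the paper leaves implicit, but otherwise the two proofs coincide.
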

 Note that by box-integrality and IDP of $P_N$,
any $\vec u\in \Z^m_+$ that minimizes
a strictly component-wise monotonically increasing
function is enforceable. In particular, for monotonically increasing functions $c_j(\ell_j(\vec x)), j\in E$, a vector $\vec u\in \Z^m_+$ 
corresponding to a minimum cost solution, that is, $\vec u=\ell(\vec x^*)$ for some
\[ \vec x^*\in \arg\min\left\{\sum_{j\in N} c_j(\ell_j(\vec x))\ell_j(\vec x)\middle\vert \vec x\in X\right\}\]
is enforceable. The congestion vector $\vec u$
minimizing the (weakly convex) social cost  can be computed in polynomial
time  (see Del Pia et al.~\cite{PiaFM17} and Kleer and Sch\"afer~\cite{KleerS17}) and additionally the space of enforcing
prices can be described by a compact linear formulation.
This allows for optimizing arbitrary linear objective functions 
(like maximum or minimum revenue) over the price/allocation space.
To the best of our knowledge, the only
previous results for the existence of (optimal) tolls
are due to Marden et al.~\cite{Marden09}, Fotakis and Spirakis~\cite{FotakisS08} and Fotakis et al.~\cite{FotakisKK10}.
Marden et al.~\cite{Marden09} proved that marginal cost tolls enforce
the minimum cost solution by charing the difference between
the social cost and the cost of Rosenthals' potential.
With this approach, there is no control on the magnitude of
price and no structure for optimizing secondary objectives over prices.
In addition, for other (non-optimal) vectors $\vec u$, this approach does not work.
Fotakis and Spirakis~\cite{FotakisS08} proved that any acyclic integral flow
in an $s$,$t$ digraph can be enforced.
It is not hard to see that the notion of minimality of $\vec u$
for $X$ exactly corresponds to the set of acyclic integral $s$,$t$ flows.
Fotakis et al.~\cite{FotakisKK10} generalized this result
to  single source multi-sink network games
allowing even for heterogeneous players.
\iffalse
For an $s_i,t_i$ path $P$
the combined cost for a heterogeneous player $i$ is defined as
\[ c_i(\vec x)=\sum_{j\in P} \alpha_i c_{j}(\ell_j(\vec x))\cdot x_{ij}+\lambda_j
\text{ for some }\alpha_i>0.\]
We note that the proof of Fotakis et al.~\cite{FotakisKK10}
seems incomplete because they 
only show that any edge load vector $\ell(\vec x^*)$
induced by an arbitrary acyclic integral flow $\vec x^*$ can be enforced as
a Wardrop equilibrium.  This property, however,  seems not
enough since the equilibrium definition in the atomic
game relies on a path decomposition which is not fully elaborated.

\fi

\iffalse
 (see also 
and Caragiannis et al.~\cite{CaragiannisKK10}).

They showed that network games with a single source
and heterogeneous users admit tolls
enforcing an cost minimal solution.
We briefly recall the setting heterogeneous users:
For an $s_i,t_i$ path $P$
the combined cost of player $i$ is defined as
\[ c_i(\vec x)=\sum_{j\in P} \alpha_i c_{j}(\ell_j(\vec x))\cdot x_{ij}+\lambda_j.\]
Clearly, this appears as a special case of the general formulation.

\fi
 
 \paragraph{Congestion Games with Player-Specific Cost Functions.}
 Now we turn to the general model
 of player-specific non-decreasing separable cost functions $c_{ij}(\ell_j(\vec x)), i\in N, j\in E$
 and consider the case of \emph{integral polymatroid congestion games},
 see Harks et al.~\cite{HarksKP18}.
In this model, for every $i\in N$, the strategy space $X_i$ is the integral base polyhedron $\B_{f_i}\subset \{0,1\}^m$ of a polymatroid $\P_{f_i}$.
Theorem~\ref{thm:mag} together with Theorem~\ref{thm:polymatroid-main} imply the following result.
  \begin{corollary}\label{cor:congestion-polymatroid}
  Consider an integral polymatroid congestion game with  
  $X_i=\B_{f_i}\subset \{0,1\}^m, i\in N$
  and nondecreasing player-specific separable cost functions.
  Let  $\vec u\in \Z^m$ be minimal for $X$.  Then, $\vec u$ is enforceable.
 \end{corollary}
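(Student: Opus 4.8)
The plan is to derive the corollary as a direct composition of three results already established for this exact setting: Lemma~\ref{congestion-mag}, Theorem~\ref{thm:mag}, and Theorem~\ref{thm:polymatroid-main}. The reduction chain I would follow is: first identify the integral polymatroid congestion game as a monotone aggregative game; then use Theorem~\ref{thm:mag} to reduce enforceability for this (non-separable) game to enforceability for the associated separable game $G^{\min}(\vec u)$; and finally apply Theorem~\ref{thm:polymatroid-main} to that separable game, whose strategy spaces are precisely the integral polymatroid bases $X_i=\B_{f_i}$.

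First I would verify that all conditions of Assumption~\ref{ass:mag} hold, so that Lemma~\ref{congestion-mag} applies. The aggregative form is immediate because $\pi_i^{\moag}(\ell(\vec x),\vec x_i)=\sum_{j\in E}c_{ij}(\ell_j(\vec x))x_{ij}$ depends only on the load vector and the own strategy; the monotonicity condition~\eqref{eq:monotonicty} is exactly the nondecreasing assumption on each $c_{ij}$. Both the independence-of-irrelevant-alternatives condition~\eqref{eq:independent} and the overlapping structure~\eqref{eq:laminar} follow from the $0$/$1$ nature of the strategies together with $g_i(\vec x_i)=\vec x_i$: each summand $c_{ij}(\cdot)x_{ij}$ contributes only for $j\in E(\vec x_i)$, and any resource used by two strategies $\vec x_i,\vec y_i\in X_i$ satisfies $x_{ij}=y_{ij}=1$. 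This is the content of Lemma~\ref{congestion-mag}, so the game is a monotone aggregative game $G^{\min-\moag}$.

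The decisive step is then to invoke Theorem~\ref{thm:mag}: it suffices to enforce $\vec u$ for the separable game $G^{\min}(\vec u)$ with $\pi_i(\vec u,\vec x_i)=\sum_{j\in E}c_{ij}(u_j)x_{ij}$. With $\vec u$ held fixed, every coefficient $c_{ij}(u_j)$ is a constant, so the objective is player-specific additive linear over the polymatroid base polyhedra $X_i=\B_{f_i}$ — exactly the polymatroid games covered by Theorem~\ref{thm:polymatroid-main}. Since $\vec u$ is minimal for $X$, that theorem furnishes a pair $(\vec x^*,\bm\lambda)$ enforcing $\vec u$ for $G^{\min}(\vec u)$, and Theorem~\ref{thm:mag} transfers the same pair to the original game. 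I do not expect a genuine obstacle here: all the combinatorial difficulty, namely the integrality of the underlying polymatroid intersection polytope, is already absorbed into Theorem~\ref{thm:polymatroid-main}. The only care needed is the bookkeeping of the first step — confirming that freezing the load turns the per-unit costs $c_{ij}(u_j)$ into a linear polymatroid game and that the four structural conditions are met — which for $0$/$1$ strategies is routine.
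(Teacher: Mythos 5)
Your proposal is correct and follows exactly the route the paper takes: the paper derives this corollary by noting (via Lemma~\ref{congestion-mag}) that the game is a monotone aggregative game and then composing Theorem~\ref{thm:mag} with Theorem~\ref{thm:polymatroid-main}, which is precisely your reduction chain. Your verification that freezing the load at $\vec u$ yields player-specific additive linear costs $c_{ij}(u_j)$ over the base polyhedra $\B_{f_i}$, and that the four conditions of Assumption~\ref{ass:mag} hold for $0$/$1$ strategies, fills in the same routine bookkeeping the paper leaves implicit.
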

 To the best of our knowledge, this is the first
 existence result of enforceable tolls in
 congestion games with player-specific cost functions.
 
One might be tempted to think that - as in Corollary~\ref{cor:consequences} -
one can use the aggregation polytope $P_N$ and
also show integrality of optimal solutions to~$LP^{\min}(\vec u)$ for 
natural classes such as $s$,$t$ network games with player-specific cost functions. Note that this approach does not work,  because the objective function depends on the
player's identities on the elements and not on an aggregate.
One can show that, unless $P=NP$, 
even for $s$,$t$ network games with homogeneous cost functions
but heterogeneous players, $LP^{\min}(\vec u)$ is not integral in general.
The proof is a straightforward reduction from the disjoint
path problem and omitted here.
\section{Application to Market Equilibria}\label{sec:market-equilibria}
We now consider market games and first present a classical model of Walrasian market equilibria with indivisible items. Then, we study a class of valuations for multi-item settings
that allows for some degree of externalities of allocations.

\subsection{Linear Pricing without Externalities}\label{sec:markets}
We are given a finite set  $E=\{1,\dots,m\}$ of \emph{items}
%and every item may be available at a certain multiplicity.
and there is a finite set of players $N=\{1,\dots,n\}$
interested in buying some of the items.
For every subset $S\subseteq E$ of items, 
player $i$ derives value $w_i(S)\in \R$
giving rise to a \emph{valuation function}
$w_i:2^m\rightarrow \R ,i\in N,$
where $2^m$ represents the set of all subsets of $E$.
The seller wants to determine
a price vector $ \bm\lambda\in \R_+^m$ so that
%all items are sold to the players and
 every 
player $i\in N$ gets a subset $S_i$ of items that are maximizers of her 
quasi-linear utility, that is, $S_i\in \arg\max_{S\subseteq E}\{ w_i(S)-\sum_{j\in S} \lambda_j \}$ and every item is sold to at most one player.
Such a tuple $((S_i)_{i\in N},\bm\lambda)$ is known as a \emph{competitive Walrasian equilibrium}.
A frequently assumed condition on valuations is normalization and monotonicity
as stated below.
\begin{enumerate}
\item $w_i(\emptyset) =0 $ for all $i\in N.$
\item $w_i(S) \leq w_i(T)$ for any $S\subseteq T\subseteq E$ and all $i\in N.$
\end{enumerate}

%\paragraph{Characterizing the Existence of Market Equilibria.}
We can derive an equivalent  game $G^{\max}(\vec 1)$ as follows.
Let $X_i= \{0,1\}^m, i\in N$ be the set of incidence vectors of the set $E$. 
The valuation
function is given by $v_i:\R^m\times X_i\rightarrow \R, (\vec 1,\vec x_i)\mapsto w_i(E(\vec x_i)),$ where $E(\vec x_i):=\{j\in E: x_{ij}=1\}$.
The resource consumption function are linear $g_i(\vec x_i)=\vec x_i, i\in N$.
We get the following characterization.
\begin{lemma}
The tuple  $((S_i)_{i\in N},\bm\lambda)$ is  a competitive Walrasian equilibrium   if and only if
the tuple $(\vec x^*,\bm\lambda)$ with $E(\vec x_i)=S_i, i\in N$ weakly enforces $\vec 1$ with market prices $\bm\lambda$ for $G^{\max}(\vec 1)$.
\end{lemma}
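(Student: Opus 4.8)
The plan is to prove that the two formalisms coincide by matching each defining condition, the engine being the bijection $\vec x_i \leftrightarrow S_i := E(\vec x_i)$ between the strategy set $X_i = \{0,1\}^m$ and the subsets of $E$. First I would record the identity that makes everything work: for $\vec x_i \in X_i$ with $S_i = E(\vec x_i)$ we have $v_i(\vec 1, \vec x_i) = w_i(S_i)$ and, since $g_i(\vec x_i) = \vec x_i$,
\[
  v_i(\vec 1, \vec x_i) - \bm\lambda^\intercal g_i(\vec x_i) = w_i(S_i) - \sum_{j\in S_i}\lambda_j .
\]
Because the incidence map is a bijection onto $2^E$, maximizing over $\vec x_i \in X_i$ is the same as maximizing over $S\subseteq E$; hence the per-player optimality~\eqref{cond3} holds at $\vec x_i^*$ if and only if $S_i = E(\vec x_i^*)$ lies in $\arg\max_{S\subseteq E}\{w_i(S) - \sum_{j\in S}\lambda_j\}$, which is exactly the demand-maximization clause of a Walrasian equilibrium. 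This settles the optimality conditions for every $i\in N$ simultaneously and in both directions.

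Second, I would translate the feasibility and pricing clauses. The load satisfies $\ell_j(\vec x^*) = \sum_{i\in N} x_{ij}^* = |\{i : j\in S_i\}| \in \Z_{\geq 0}$, so the constraint $\ell(\vec x^*)\leq \vec 1$ is precisely the statement that the $S_i$ are pairwise disjoint, i.e. every item is sold to at most one player. Moreover, since each $\ell_j(\vec x^*)$ is a nonnegative integer bounded by $1$, having slack (that is $\ell_j(\vec x^*) < u_j = 1$) is equivalent to $\ell_j(\vec x^*) = 0$, i.e. item $j$ being unsold; thus the Walrasian law $\ell_j(\vec x^*)<u_j \Rightarrow \lambda_j = 0$ becomes exactly the market-clearing requirement that unsold items carry price zero. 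Combining the two steps, $(\vec x^*,\bm\lambda)$ weakly enforces $\vec 1$ with market prices if and only if $((S_i)_{i\in N},\bm\lambda)$ is feasible, demand-maximizing and market-clearing, which is the claimed equivalence. Note that this is a direct dictionary between the two descriptions and needs neither Theorem~\ref{thm:main} nor the duality machinery.

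The single point requiring care --- and the only real obstacle --- is the market-clearing clause. Demand-maximization together with feasibility does not by itself force unsold items to be priced zero: even under monotone valuations, an item whose marginal value falls below its price for every buyer is simply left unbought at a positive price, so the Walrasian law is a genuine extra condition. I would therefore make explicit that ``competitive Walrasian equilibrium'' is taken in its standard free-disposal sense, in which market clearing is part of the definition and coincides with the Walrasian law above. With that reading the correspondence is exact, and the remaining verifications are the routine identifications listed.
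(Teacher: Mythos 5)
Your proposal is correct and follows essentially the same route as the paper: a direct dictionary between bundles $S_i$ and incidence vectors $\vec x_i$, using $v_i(\vec 1,\vec x_i)=w_i(E(\vec x_i))$ and $g_i(\vec x_i)=\vec x_i$ to identify the quasi-linear objectives, so that per-player optimality over $X_i=\{0,1\}^m$ coincides with demand maximization over $2^E$. In fact you are somewhat more thorough than the paper's own proof, which consists only of the chain of equivalences for the demand condition plus a one-line remark that $\ell(\vec x^*)\leq \vec 1$ encodes disjointness; it never explicitly verifies the market-clearing clause $\ell_j(\vec x^*)<1\Rightarrow\lambda_j=0$, and indeed the paper's stated definition of a competitive Walrasian equilibrium in Section~\ref{sec:markets} omits the requirement that unsold items carry price zero. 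Your observation that this clause is a genuine extra condition, and your explicit choice to read ``Walrasian equilibrium'' in the standard sense that includes it, is exactly the right way to make the stated equivalence literally true.
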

\begin{proof}
Note that the condition $\ell(\vec x^*)\leq 1$
ensures that every item goes to at most one player.
Thus, we get
\begin{align}\notag
&\hspace{0.5cm} \vec x_i^*  \in\arg\max\{v_i(\vec 1, \vec x_i) - \bm\lambda^\intercal \vec x_i\vert  \vec x_i \in X_i\} \\
\tag{By definition of $v_i$}\Leftrightarrow &\hspace{0.5cm}
\vec x_i^*  \in\arg\max\left\{w_{i}(E(\vec x_i)) - \bm\lambda^\intercal \vec x_i\middle\vert  \vec x_i\in X_i\right\}\\\notag
\Leftrightarrow & \hspace{0.5cm}
E(\vec x^*_i) \in\arg\max\left\{w_{i}(E(\vec x_i)) - 
\sum_{e\in E(\vec x_i)}\lambda_e
\middle\vert E(\vec x_i)\subseteq E\right\}\\ \notag
\Leftrightarrow & \hspace{0.5cm}
S_i \in\arg\max\left\{w_{i}(S_i) - 
\sum_{e\in S_i}\lambda_e
\middle\vert S_i\subseteq E\right\}.
\end{align}
\end{proof}
With this analogy to $G^{\max}(\vec 1)$ we can analyze Problem~\ref{price-opt-max} in more detail:
\[ \max\left\{\sum_{i\in N} v_i(\vec 1,\vec x_i) \middle\vert \vec x_i\in \{0,1\}^m, i\in N,\; \ell(\vec x)\leq \vec 1\right\}.\]
Clearly,  $X_i, i\in N$ consists of finitely many ($k_i=2^m$)
points and thus 
we can apply Theorem~\ref{thm:convex-hull-finite} to obtain a full characterization
of the existence of Walras market equilibria (which leads
precisely to the characterization of Bikchandani and Mamer~\cite{Bikchandani1997}).
\begin{corollary}[Bikchandani and Mamer~\cite{Bikchandani1997}]
Competitive Walrasian equilibria 
exist  if and only if the following LP admits integral optimal solutions:
\begin{framed}
\begin{align}\tag{LP$^{\max}$($\vec 1$)}
\max  \sum_{i\in N} \vec v_{i}^\intercal \bm \alpha_i,\;
 \ell(\bm\alpha)\leq \vec 1,\; \bm\alpha_i\in\Lambda_i \text{ for all } i\in N,
  %\sum_{h\in [k_i]} \alpha_{ih}&= 1, \text{ for all }i\in N \\
%\bm \alpha_i & \geq 0\text{ for all } i\in N.
\end{align}
where $\vec v_{i}:=(v_i(\vec 1,\vec x_i))_{\vec x_i\in X_i}$ and
$\ell(\bm \alpha):=\sum_{i\in N}  \sum_{j\in \{1,\dots,k_i\}} \alpha_{ij}$.
\end{framed}
\end{corollary}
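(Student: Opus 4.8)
The plan is to read this statement as a direct instantiation of Theorem~\ref{thm:convex-hull-finite} in its maximization, weak-enforceability form, using the dictionary between Walrasian equilibria and weak enforceability established in the lemma immediately above. First I would verify that the game $G^{\max}(\vec 1)$ at hand meets the hypotheses needed to apply that theorem: each strategy space $X_i=\{0,1\}^m$ is a finite point set, so Assumption~\ref{ass:convex-hull} holds with $k_i=2^m$; each $X_i$ is compact and $v_i(\vec 1,\cdot)$ is (trivially, on a finite set) semicontinuous, so Assumption~\ref{ass:convex} is satisfied; and the resource-consumption maps $g_i(\vec x_i)=\vec x_i$ are linear, as required. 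Thus no nontrivial checking of regularity is needed.

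Next I would invoke the maximization analog of Theorem~\ref{thm:convex-hull-finite}, which is valid by the remark that all results of that section transfer to the maximization case with the convex envelope replaced by a concave envelope and concave $g_i$ by convex $g_i$ (here $g_i$ is linear, hence both). In the weak-enforceability-with-market-prices form this yields the equivalence: $\vec 1$ is weakly enforceable with market prices for $G^{\max}(\vec 1)$ if and only if LP$^{\max}(\vec 1)$ admits an integral optimal solution $\bm\alpha^*$. At this point I would spell out that an integral point of $\Lambda_i=\{\bm\alpha_i\geq 0 \mid \vec 1^\intercal\bm\alpha_i=1\}$ is precisely a unit vector, i.e.\ a pure choice $\vec x_i^*\in X_i$, so that an integral optimal $\bm\alpha^*$ corresponds exactly to a welfare-maximizing pure allocation $\vec x^*\in X$ with $\ell(\vec x^*)\leq \vec 1$, its objective value $\sum_{i\in N}\vec v_i^\intercal\bm\alpha_i^*$ being the social welfare $\sum_{i\in N} w_i(E(\vec x_i^*))$.

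Finally I would chain the two equivalences through the lemma above, which states that a tuple $((S_i)_{i\in N},\bm\lambda)$ is a competitive Walrasian equilibrium exactly when the associated $(\vec x^*,\bm\lambda)$ with $E(\vec x_i^*)=S_i$ weakly enforces $\vec 1$ with market prices. Composing gives: a competitive Walrasian equilibrium exists if and only if $\vec 1$ is weakly enforceable with market prices, if and only if LP$^{\max}(\vec 1)$ admits an integral optimal solution, which is the claim. Both directions of the final biconditional come for free, since each intermediate link is itself a biconditional.

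I would not expect a genuine obstacle here, as the corollary is a transcription rather than a new argument; the single point demanding care is the bookkeeping of the maximization conversion, namely ensuring that the inequality constraint $\ell(\bm\alpha)\leq\vec 1$ (rather than an equality) is the correct one. This is exactly why the \emph{weak-enforceability with market prices} version of Theorem~\ref{thm:convex-hull-finite} (the variant obtained by dropping the tightness condition $\ell(\vec x^*)=\vec u$) must be used, rather than the plain enforceability version: unsold items are permitted but, by the Walrasian market-clearing condition built into weak enforceability, must be priced at zero. I would also remark that the normalization and monotonicity assumptions on the valuations $w_i$ are not needed for this equivalence.
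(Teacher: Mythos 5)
Your proposal is correct and follows exactly the route the paper intends: the paper derives this corollary by observing that $X_i=\{0,1\}^m$ is a finite point set and applying the maximization, weak-enforceability-with-market-prices variant of Theorem~\ref{thm:convex-hull-finite}, chained through the preceding lemma identifying Walrasian equilibria with weak enforcement of $\vec 1$. Your additional care about why the tightness condition $\ell(\bm\alpha^*)=\vec u$ must be dropped (unsold items get zero price by the market-clearing condition) is exactly the right bookkeeping and matches the paper's formulation.
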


%\paragraph{Gross-Substitute Valuations.}
A fundamental property of valuations $w_i, i\in N$
is the so-called \emph{gross-substitutes (GS) condition},
 requiring that whenever the prices of some items increase and the prices of other items remain constant, the agent's optimal demand for the items whose price remain constant only  increases.
Let us recall an existence theorem by Kelso and Crawford~\cite{Kelso82}.
\iffalse
defined below. 
Denote by $D_i(\bm \lambda)=\arg\max_{S\subseteq E}\{ w_i(S)-\sum_{j\in S} \lambda_j \}$ the set of maximizers for $i\in N$ given price vector $\bm\lambda$.
\begin{definition}
A valuation $w_i$ defined on $E$ satisfies the \emph{gross substitutes (GS)} condition if and only if 
for every price vector $\bm\lambda$, every set $S\in D_i(\bm \lambda)$, and every other price vector $\bm \mu\geq \bm \lambda$, there is a set $T\subseteq E$ with
$(S \setminus U)\cup T \in D_i(\bm \mu)$,
where $U := \{j\in E : \mu_j > \lambda_j\}$ is the of items whose prices have increased under $\bm \mu$ compared to $\bm \lambda$.
\end{definition}
\fi

\begin{theorem}[Kelso and Crawford~\cite{Kelso82}]\label{thm:kelso}
For GS valuations, there exists a competitive Walrasian equilibrium.\footnote{Gul and Stachetti~\cite{Gul99} even showed that in some sense
GS is the maximal condition on valuations for which equilibria exist.}
\end{theorem}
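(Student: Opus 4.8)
The plan is to reduce the claim to an integrality statement about the master LP and then discharge that statement with discrete convex analysis. First I would note that the market game is precisely the game $G^{\max}(\vec 1)$ identified in the preceding lemma, whose strategy spaces $X_i=\{0,1\}^m$ are finite point sets, so Assumption~\ref{ass:convex-hull} is in force and the maximization analogue of Theorem~\ref{thm:convex-hull-finite} applies. Consequently a competitive Walrasian equilibrium exists if and only if the master LP $\LP^{\max}(\vec 1)$ (the Bikchandani and Mamer LP of the preceding corollary) admits an integral optimal solution. Thus it suffices to prove that, under the GS condition, $\LP^{\max}(\vec 1)$ has an optimum $\bm\alpha^*$ that is integral; by Theorem~\ref{thm:convex-hull-finite} such an integral optimum decomposes into the incidence vectors of a disjoint family $(S_i)_{i\in N}$ together with a supporting price vector $\bm\lambda$ read off from the optimal dual, which is exactly the desired equilibrium.

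To establish this integrality I would pass to the language of discrete convexity. The key input is the equivalence (Murota and Tamura~\cite{MurotaT03}; see also Murota~\cite{Murota:2003}) that a monotone normalized valuation $w_i:\{0,1\}^m\to\R$ satisfies the gross-substitutes condition if and only if it is $M^\natural$-concave. The objective of $\LP^{\max}(\vec 1)$ is $\sum_{i\in N}\hat v_i(\vec x_i)$, where $\hat v_i$ is the concave envelope of $w_i$ over $\conv(X_i)=[0,1]^m$, maximized over the packing polytope $\{\vec x:\sum_{i\in N}\vec x_i\le\vec 1\}$. The aggregate welfare, viewed as a function of the total consumption $\vec y$, is the integral supremal convolution of the $w_i$, and $M^\natural$-concavity is preserved under supremal convolution; moreover the maximizers of an $M^\natural$-concave function over an integral box form an integral ($M^\natural$-convex) set, and the discrete-separation / $M$-convex intersection theorem guarantees that an integral aggregate maximizer splits into integral per-player allocations. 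Feeding these facts in, the envelope $\hat v_i$ agrees with $w_i$ at the relevant integer points, so the optimum of $\LP^{\max}(\vec 1)$ is attained at an integral $\bm\alpha^*$, as required.

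The main obstacle is precisely this integrality step, and it is not elementary: it rests on the two substantive theorems of discrete convex analysis, namely the characterization GS $\Leftrightarrow$ $M^\natural$-concavity and the integrality of $M^\natural$-concave maximization under a packing constraint. A subtlety worth flagging is that the constraint here is $\sum_{i\in N}\vec x_i\le\vec 1$ rather than an equality, so one must work with the $M^\natural$ (free-disposal) variant rather than plain $M$-concavity; monotonicity of the $w_i$ ensures that discarding unsold items is harmless and keeps the problem inside the $M^\natural$-concave class. Once integrality is secured, the remaining bookkeeping — recovering the allocation $(S_i)_{i\in N}$ and the supporting prices $\bm\lambda$ — is immediate from Theorem~\ref{thm:convex-hull-finite}.
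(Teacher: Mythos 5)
Your proposal is correct in substance, but it closes the argument with a different key lemma than the paper, so a comparison is worth recording. The paper never passes to the configuration LP: it works directly with the integer master problem $P^{\max}(\vec 1)$, converts GS into $M^{\natural}$-concavity via Fujishige and Yang~\cite{FujishigeY03}, observes that the packing constraint $\ell(\vec x)\leq \vec 1$ is a laminar system, and then cites Yokote's strong duality theorem~\cite{Yokote2018} to get a zero duality gap, after which Theorem~\ref{thm:main-max} (weak enforceability with market prices $\Leftrightarrow$ zero duality gap) finishes the proof. You instead reduce to integrality of $\LP^{\max}(\vec 1)$ via Theorem~\ref{thm:convex-hull-finite} and discharge that with the supremal-convolution and $M^{\natural}$-intersection machinery of Murota~\cite{Murota:2003}. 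By Theorem~\ref{thm:main-convexification} the two targets are literally equivalent (the LP and the integer master problem share the same Lagrangian dual), so the proofs differ only in which off-the-shelf discrete-convexity theorem is invoked; your route makes the Bikchandani--Mamer LP~\cite{Bikchandani1997} explicit, the paper's is shorter because Yokote's theorem is tailored to exactly this laminar setting. Two small imprecisions to fix in your write-up: the splitting of an integral aggregate maximizer into per-player integral allocations is automatic from the definition of the integral supremal convolution, whereas what the intersection/discrete-separation theorem actually supplies is the supporting price vector (equivalently, the absence of a duality gap); and the claim that the concave envelope $\hat v_i$ agrees with $w_i$ at the integer points is not true for arbitrary valuations --- it is the concave-extensibility of $M^{\natural}$-concave functions and should be invoked as such.
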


We can use Theorem~\ref{thm:main-max}
 by showing that problem~\ref{price-opt-max}
 has zero duality gap for the supply vector $\vec u=(1,\dots,1)^\intercal \in \R^m$.
 One can show this property by using insights from discrete
convexity and the special form of $M^{\natural}$-concave functions, see Murota~\cite[Sec. 11.3]{Murota:2003} for a definition and an exhaustive overview of the topic.
Let us now recall a characterization of Fujishige and Yang~\cite{FujishigeY03}.
\begin{theorem}[Fujishige and Yang~\cite{FujishigeY03}]
A normalized and monotone valuation function is GS if and only if it is $M^{\natural}$-concave.
\end{theorem}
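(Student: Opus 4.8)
The plan is to prove the two implications separately, identifying each valuation $w:2^E\to\R$ with a function on $\{0,1\}^m$ and writing $w_{\bm\lambda}(S):=w(S)-\sum_{j\in S}\lambda_j$ for the quasi-linear payoff and $D(\bm\lambda):=\arg\max_{S\subseteq E}w_{\bm\lambda}(S)$ for the demand correspondence. I would work throughout with the exchange-property form of $M^{\natural}$-concavity: $w$ is $M^{\natural}$-concave iff for all $S,T\subseteq E$ and all $i\in S\setminus T$, either
\[ w(S)+w(T)\le w(S-i)+w(T+i), \]
or there is some $j\in T\setminus S$ with
\[ w(S)+w(T)\le w(S-i+j)+w(T+i-j), \]
where $S-i$, $T+i$ abbreviate removal/addition of an element. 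A first observation, verified by cancelling the linear terms on each side, is that $w_{\bm\lambda}$ is $M^{\natural}$-concave precisely when $w$ is; this lets me apply the exchange property to $w_{\bm\lambda}$ at any prices.

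For the direction \emph{$M^{\natural}$-concave $\Rightarrow$ GS}, fix prices $\bm\lambda$, a demanded set $S\in D(\bm\lambda)$, an increase $\bm\mu\ge\bm\lambda$, and the raised set $U:=\{j:\mu_j>\lambda_j\}$. Among all $T\in D(\bm\mu)$ I would pick one maximizing the overlap $|T\cap(S\setminus U)|$ and show $S\setminus U\subseteq T$ by contradiction. If some $i\in(S\setminus U)\setminus T$ existed, I apply the exchange property of $w_{\bm\lambda}$ to $(S,T,i)$. In the first alternative, optimality of $S$ gives $w_{\bm\lambda}(S-i)\le w_{\bm\lambda}(S)$, hence $w_{\bm\lambda}(T+i)\ge w_{\bm\lambda}(T)$; since $\mu_i=\lambda_i$ this transfers to $w_{\bm\mu}(T+i)\ge w_{\bm\mu}(T)$, so $T+i\in D(\bm\mu)$ has strictly larger overlap, a contradiction. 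In the second alternative I obtain $j\in T\setminus S$ with $w_{\bm\lambda}(T+i-j)\ge w_{\bm\lambda}(T)$, and using $\mu_i=\lambda_i$, $\mu_j\ge\lambda_j$ (so $\mu_i-\mu_j\le\lambda_i-\lambda_j$) this yields $w_{\bm\mu}(T+i-j)\ge w_{\bm\mu}(T)$, i.e.\ $T+i-j\in D(\bm\mu)$; as $j\notin S$ the overlap again strictly increases, a contradiction. Thus $S\setminus U\subseteq T$, which is exactly the GS condition. This direction is essentially self-contained and uses neither normalization nor monotonicity.

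For the converse \emph{GS $\Rightarrow M^{\natural}$-concave} I would route through the single-improvement property (SI) of Gul and Stachetti~\cite{Gul99}: GS implies that for every $\bm\lambda$ and every non-optimal $S$ there is $R$ with $|S\setminus R|\le 1$, $|R\setminus S|\le 1$ and $w_{\bm\lambda}(R)>w_{\bm\lambda}(S)$. To verify the exchange property for a triple $(S,T,i)$ with $i\in S\setminus T$, the plan is to choose prices $\bm\lambda$ tuned so that both $S$ and $T$ lie in $D(\bm\lambda)$ and then read off the required local move from SI: an add/remove at $S$ produces the first exchange inequality, a swap against an element of $T\setminus S$ produces the second, after translating price-normalized gains back to $w$. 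The main obstacle is precisely this converse: neither GS$\Rightarrow$SI nor SI$\Rightarrow$exchange is a mere computation, since both hinge on a careful construction of witnessing price vectors (here monotonicity and normalization are used to keep the relevant bundles in the demand set), and one must separately handle the degenerate cases in which the improving local move is a pure addition or deletion rather than a swap. I would therefore isolate two lemmas, GS$\Rightarrow$SI and SI$\Rightarrow$exchange property, and treat the price construction underlying the second as the crux of the whole argument.
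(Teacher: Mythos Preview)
The paper does not supply its own proof of this theorem; it is quoted as a result of Fujishige and Yang and used as a black box. So there is nothing in the paper to compare your argument against, and your write-up should be judged on its own.

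Your forward direction ($M^{\natural}$-concave $\Rightarrow$ GS) is the standard maximal-overlap argument and is correct as written; it indeed needs neither normalization nor monotonicity.

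For the converse you correctly route through the single-improvement property (SI), but the concrete plan you describe for SI $\Rightarrow$ exchange property is off. You propose to ``choose prices $\bm\lambda$ tuned so that both $S$ and $T$ lie in $D(\bm\lambda)$ and then read off the required local move from SI''. This is backwards: SI is a statement about \emph{non}-optimal bundles, so if $S\in D(\bm\lambda)$ there is no SI-move to read off from $S$. Moreover, for arbitrary $S,T$ one cannot in general find linear prices making both simultaneously optimal. The actual argument sets prices so that elements outside $S\cup T$ are prohibitively expensive and elements of $S\cap T$ are free, and then tunes the remaining prices on $S\mathbin{\triangle}T$ so that $S$ is optimal while a slight perturbation at the coordinate $i$ destroys optimality; SI then produces a one-step improvement whose only admissible forms, given the price structure, are $S-i$, $S-i+j$ with $j\in T\setminus S$, or $S+j$, and translating back yields the exchange inequality (monotonicity is used here to rule out spurious cases). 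This is the genuine crux you flag, but the mechanism you sketch would not work as stated.
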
 
It is known that if all $v_i, i\in N$ are $M^{\natural}$-concave,
so is $\sum_{i\in N}v_i$. Altogether, problem~\ref{price-opt-max}
is a very special discrete optimization problem
with an $M^{\natural}$-concave function over $\{0,1\}^{n\cdot m}$
involving the special constraint $\ell(\vec x)\leq \vec 1$ which
constitues a \emph{laminar system} or \emph{hierarchy}, see Yokote~\cite{Yokote2018} and Budish et al.~\cite{Budish13}
for further details.
Yokote~\cite{Yokote2018} proved that
such a  problem has zero duality gap.
Thus, with the zero duality gap property and the monotonicity
of $v(\vec x)$, Theorem~\ref{thm:main-max} implies  Theorem~\ref{thm:kelso}.
We remark here that Yokote~\cite[Sec.~4]{Yokote2018}
described in his paper the connection
of his strong duality theorem (in the realm of discrete convexity) with the existence of market equilibria for GS valuations.

\subsection{Nonlinear Package Pricing}
The auction model so far assumes a single seller that uses
linear anonymous pricing functions, that is,
every item $j$ comes with a price $\lambda_j\geq 0$
and the price of every subset of items $S$ is linear in item prices, that is,  $\lambda(S)=\sum_{j\in S}\lambda_j$.  Bikhchandani and Ostroy~\cite{BikhchandaniO02} studied a model  in which \emph{packages of items} are sold and the sellers may use
nonlinear (non-anonymous) pricing functions (see also Parkes and Ungar~\cite{ParkesU00}).  
Pricing functions assign prices to packages (instead of prices for individual items) that may depend on the package type 
only, or on the package and buyer (or seller) identity,
or on the identity of all parties, i.e., the package, buyer and seller  (see  Bikhchandani and Ostroy~\cite{BikhchandaniO02}). 
\iffalse
There is a set of items $I=\{1,\dots, k\}$ and
a player set $N$ that can be
partitioned into a set of sellers $A$ and buyers $B$. 
Each seller $s\in A$ can offer a set of packages $z_{sj}\subset I, j\in A_s$ of items for sale and each buyer $b$ buys a set of packages from the sellers. Buyers (sellers) have valuation functions $v_b$ (or $v_s$) defined over the bought (sold) packages, respectively.
Nonlinear pricing functions assign prices to packages (instead of prices for individual items) that may depend on the package type 
only, or on the package and buyer (or seller) identity,
or on the identity of all parties, i.e., the package, buyer and seller  (see  Bikhchandani and Ostroy~\cite{BikhchandaniO02}). 
\fi
A competitive equilibrium arises, if there are package prices and package allocations
so that the the allocation maximizes the  overall quasi-linear utility of every buyer and seller, respectively.

One can incorporate this model into the current framework by defining
an appropriate game $G^{\max}(\vec 0)$ as follows. 
The resource set $E$ is constructed according
to the qualitatively different packages traded on the market.
Packages of the same type correspond to a 
resource (yielding a price function per type)
but   packages with 
dependencies on the buyers/sellers
would correspond to individual resources.

Buyers $b\in B$ have as strategy space $X_b\subseteq \Z_+^m$, where  $x_{bj}$ is the number of packages of type $j$  (or from seller $j$ if the
seller identity matters)
while the strategy space  of every
seller $s\in A$ is given by some set  $X_s\subseteq\Z_+^p$, where  $x_{sj}$ represents the number of packages of type $j$ produced by seller $s$.
By assigning
resource consumption functions $g_b(\vec x_b)=\vec x_b\in \Z^m_+, b\in B$
for buyers and $g_s(\vec x_s)=-\vec x_s\in \Z^m_-, s\in A$
for sellers, respectively,
a competitive equilibrium then corresponds
to a pair $(\vec x^*,\bm\lambda)\in X\times \R_+^m$
that weakly enforces $0$ for $G^{\max}(\vec 0)$ with market prices.
Note that the condition $\ell(\vec x)\leq 0$
ensures that supply exceeds demand and the dual variable $\lambda$ corresponds to the
market clearing equilibrium prices (which may be nonlinear in terms
of item prices).
In this construction, the level of price differentiation
depends the constructed set $E$, that is, $E$ might model
the number of anonymous package types (leading to anonymous
package prices), or packages prices depending on 
the buyer and seller (leading to an increased number $|E|$).
 The LP characterization of Bikhchandani and Ostroy~\cite{BikhchandaniO02} regarding the existence
of competitive equilibria can be deduced from Theorem~\ref{thm:convex-hull-finite}, because the game $G^{\max}(\vec 0)$  exhibits a finite
strategy space for every player (buyer or seller) and besides the demand-supply condition the utilities are separable over players.
  In fact, looking at Theorem~\ref{thm:convex-hull-finite}, one can generalize the characterization of Bikhchandani and Ostroy~\cite{BikhchandaniO02} along several directions.
One direction is to allow that players may be buyers and
sellers at the same time. 
In Section~\ref{sec:trading}, we consider so-called trading networks
that exhibit this property and the reduction we present
is very similar to the one sketched here.

\subsection{Item Multiplicity, Additive Linear Valuations with Externalities and Polymatroids}
Now we turn to a multi-item model that allows for several items
of the same type and some degree of externalities of allocations. 
There is a finite set  $E=\{1,\dots,m\}$ of \emph{item types}
and every item may be available at a certain multiplicity.
Assume further that $X_i\subset \{0,1\}^m, i\in N$.  This implies that every player
wants to receive at most one item per type - however $X_i$
may still carry some combinatorial restrictions for feasible
item sets for player $i\in N$.
Suppose that valuations of players
are additive  over items, that is, 
\begin{equation}\label{val:add} v_i(\ell(\vec x),\vec x_i):= \sum_{j \in E} v_{ij}(\ell_j(\vec x)) x_{ij},\end{equation}
where $v_{ij}:\Z_+\rightarrow \R_+$ is the nonnegative value player $i$ gets
from receiving  an item of type $j$ assuming that item $j$
is sold to $\ell_j(\vec x)$ many players.
This formulation is not directly comparable
to the one before. On the one hand side,
additivity of valuations over items
is less general. On the other hand, several items of the same type can be sold and we allow for a functional
dependency of the valuations with respect to the load $\ell_j(\vec x)$. Such dependency may be interesting
for situations, where the value $v_{ij}(\cdot)$ of receiving item type $j$ drops
as other players also receive the same type -- 
this is referred to as a \emph{setting with negative externalities}.
\begin{assumption}\label{ass:value-seprable}
For every $i\in N, j\in E$, the functions $v_{ij}$
are nonnegative and
exhibit \emph{negative externalities}, that is, 
$v_{ij}(z)\geq v_{ij}(z+1) \text{ for all } z\in \Z_+.$
\end{assumption}
The model so far does not satisfy the assumption of a maximization
game $G^{\max}(\vec u)$ augmented with prices, as
the utility of a player is allowed
to depend on the load vector $\ell(\vec x)$.
However, one can easily verify that with Assumption~\ref{ass:value-seprable},
the  game is in fact a monotone aggregative
game as defined in Section~\ref{sec:monotone}.
With this insight at hand,
by Theorem~\ref{thm:mag} it suffices 
to analyze cases for which
$\vec u$ is enforceable for $G^{\max}(\vec u)$,
which in turn is equivalent to the property
that Problem~\ref{price-opt} has zero duality gap and 
and satisfies $\ell(\vec x^*)=\vec u$ for a primal-dual optimal pair $(\vec x^*\bm\lambda)$.
We get the following result using Theorem~\ref{thm:max-polymatroid}.
\begin{corollary}\label{cor:market-polymatroid}
Let $X_i=\P_{f_i}\subset \{0,1\}^m, i\in N$ be integral polymatroid polyhedra
and assume that valuation functions satisfy 
Assumption~\ref{ass:value-seprable}.
Then, any supply vector $\vec u\in \Z^m_+$ for which there is $\vec x\in X$ with $\ell(\vec x)=\vec u$ is enforceable.
\end{corollary}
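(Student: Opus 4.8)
The plan is to chain three results already available: the reduction Theorem~\ref{thm:mag} from a monotone aggregative game to its load-taking counterpart, the duality characterization of enforceability in the maximization case (Theorem~\ref{thm:main-max}), and the polymatroid integrality result Theorem~\ref{thm:max-polymatroid}.

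First I would confirm the claim, made in the paragraph preceding the statement, that the market game is a monotone aggregative game. Writing $g_i(\vec x_i)=\vec x_i$, the utility $v_i(\ell(\vec x),\vec x_i)=\sum_{j\in E}v_{ij}(\ell_j(\vec x))x_{ij}=\sum_{j\in E(\vec x_i)}v_{ij}(\ell_j(\vec x))$ depends only on the load and on $\vec x_i$, which is condition~\ref{agg:cost}. Since each $v_{ij}$ is non-increasing by Assumption~\ref{ass:value-seprable}, $\vec u\le\vec w$ yields $v_{ij}(u_j)\ge v_{ij}(w_j)$ and hence the maximization monotonicity in~\eqref{eq:monotonicty}. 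Independence of irrelevant alternatives~\eqref{eq:independent} is immediate because only coordinates $u_j$ with $j\in E(\vec x_i)$ enter the sum. Finally, as $X_i\subset\{0,1\}^m$ every coordinate in the support of a strategy equals $1$, so $g_{ij}(\vec x_i)=1$ whenever $j\in E(\vec x_i)$ and the overlapping condition~\eqref{eq:laminar} holds with constant $\kappa_{ij}=1$.

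Having established this, I would invoke the maximization part~\ref{mag:max} of Theorem~\ref{thm:mag}: it suffices to enforce $\vec u$ for the load-taking game $G^{\max}(\vec u)$ with frozen valuations $v_i(\vec u,\vec x_i)=\sum_{j\in E}v_{ij}(u_j)x_{ij}$. The decisive point is that, once $\vec u$ is fixed, the coefficients $a_{ij}:=v_{ij}(u_j)$ are constants, so $G^{\max}(\vec u)$ is precisely a polymatroid game with player-specific additive linear utilities on the strategy spaces $X_i=\P_{f_i}$ -- exactly the hypothesis of Theorem~\ref{thm:max-polymatroid}. Since $\vec u$ is realizable by assumption, that theorem gives enforceability of $\vec u$ for $G^{\max}(\vec u)$, and combining it with Theorem~\ref{thm:mag} yields enforceability for the original externality game.

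The substance of the argument -- and what Theorem~\ref{thm:max-polymatroid} supplies -- is the integrality of the feasible region of the maximization master problem~\ref{price-opt-max}, which is the intersection of the direct sum $\bigoplus_{i\in N}\P_{f_i}$ (a polymatroid on the ground set $N\times E$ with rank $\sum_{i\in N}f_i$) with the supply polymatroid $\{\vec x:\sum_{i\in N}x_{ij}\le u_j,\ j\in E\}$. I expect the only genuinely delicate step to be recognizing this coupled system as a polymatroid intersection: neither factor alone is an obstacle, and although the intersection need not itself be a polymatroid, its vertices are integral by the polymatroid intersection theorem. This integrality forces zero duality gap for~\ref{price-opt-max}, while realizability of $\vec u$ upgrades the resulting weak enforceability with market prices to genuine enforceability with $\ell(\vec x^*)=\vec u$ tight. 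The remaining verifications -- that the frozen objective is linear and that the primal optimum can be taken to saturate the supply -- are routine.
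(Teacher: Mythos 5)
Your proposal is correct and follows exactly the paper's own route: verify that the market game is a monotone aggregative game, reduce via Theorem~\ref{thm:mag}\,(\ref{mag:max}) to the load-taking game $G^{\max}(\vec u)$ with frozen linear coefficients $v_{ij}(u_j)\geq 0$, and conclude with the second statement of Theorem~\ref{thm:max-polymatroid}. The only difference is that you spell out the verification of the four monotone-aggregative conditions, which the paper leaves as ``follows directly.''
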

\begin{proof}
By Assumption~\ref{ass:value-seprable} and the structure
of the valuation functions (see~\eqref{val:add})
it follows directly that the game is a monotone aggregative
game. Thus, for any supply vector $\vec u\in \Z^m_+$ with $\ell(\vec x)=\vec u$ for some $\vec x\in X$, by  Theorem~\ref{thm:mag} it suffices 
to show that $\vec u$ is enforceable for $G^{\max}(\vec u)$. 
This, however, follows directly from Theorem~\ref{thm:max-polymatroid}.
\end{proof}

\section{Application to Trading Networks}\label{sec:trading}
A bilateral trading network is represented by a directed multigraph $G = (N, E)$, where $N$ is the set of vertices and $E=\{e_1,\dots,e_m\}$ the set of edges. Each vertex corresponds to a player and each edge $e=(s,b)$ represents a bilateral trade that can take place between the  pair of incident vertices $s,b\in N$. For each $e=(s,b)\in E$, the source vertex $s$ corresponds to the seller and the sink vertex $b$ corresponds to the buyer in the trade. For $i\in N$, let $\delta^+(i)$ and $\delta^{-}(i)$ be the set
of outgoing and incoming edges of vertex $i\in N$
and as usual we denote the set of all edges incident  to $i$
by $\delta(i)=\delta^+(i)\cup \delta^-(i)$. 
For a set of edge prices $\lambda_e\geq 0, e\in E$, we can associate
with each possible trade $e=(s,b)\in E$ 
a price $\lambda_e\geq 0$ with the understanding
that the buyer $b$ pays $\lambda_e$ to the seller $s$.
An outcome of the market is a set of 
\emph{realized trades} $S\subseteq E$ and a vector of prices $\bm\lambda\in \R_+^m$.
Given an outcome, the quasi-linear utility of a player $i \in N$  is defined as
the  
sum of the utility gained from trades plus the income
minus the cost of trades, respectively.
The utility of
realized trades is given by a function 
$\bar w_i:2^{\delta(i)}\rightarrow \R.$
We  extend $\bar w_i$ to $2^m$ by taking
$ w_i:2^{m}\rightarrow \R,  S\mapsto \bar w_i(S\cap\delta(i)).$
The overall utility for given $S\subseteq E$ and $\bm\lambda\in \R_+^m$
is defined as
\begin{equation}\label{eq:trade-utility}
w_i(S)+\sum_{e\in \delta^+(i)\cap S}\lambda_e-
\sum_{e\in \delta^-(i)\cap S}\lambda_e
\end{equation}
For the function $w_i,i\in N$
we only assume monotonicity on the buyer side, that
is, $w_i(S)\geq w_i(T)$ for all $T\subseteq S\subseteq \delta^-(i)$.
Free disposal at the buyer side is a sufficient condition for this assumption.
The market maker wants to determine a price vector $\bm\lambda\in \R_+^m$ and a set of realized trades $S^*\subseteq E$ such that
\[S^*  \in \arg\max_{S\subseteq E}\left\{w_i(S)+\sum_{e\in \delta^+(i)\cap S}\lambda_e-
\sum_{e\in \delta^-(i)\cap S}\lambda_e\right\} \text{ holds for all }i\in N.\]
Such a tuple $(S^*,\bm\lambda)$ constitutes a competitive equilibrium.
 The main difference to the Walrasian market equilibrium
model is that players
 can simultaneously act as buyers and sellers in different trades.
 
We will cast this problem in the framework by constructing
an equivalent game $G^{\max}(\vec 0)$.
For each player $i$, we have a vector $\vec x_i\in \{-1,0,1\}^m$
with the understanding that 
\[ x_{ie}=\begin{cases}-1,& \text{ if }e\in \delta^+(i) \text{ and trade $e$ is realized as seller}\\
1, &\text{ if } e\in \delta^-(i) \text{ and trade $e$ is realized as buyer}\\
0, &\text{ if } \text{$e\notin \delta(i)$ or $e$ is not realized.}
\end{cases}\]
We thus define $X_i= \{ \vec x_i\in \{-1,0,1\}^{m}\vert x_{ie}=0, e\notin \delta(i)\}, i\in N$.
To complete the description of $G^{\max}(\vec 0)$, we assume
that the resource consumption functions are given as $g_i(\vec x_i)=\vec x_i$ for all $i\in N$ and we define the valuation function of player $i\in N$ on $X_i$ by
 \begin{equation}\label{eq:valuation-trade} v_i(\vec 0, \vec x_i) := w_{i}(\{e\in \delta(i):\; |x_{ie}|=1\}) .
 \end{equation}
 With this construction, we have a one-to-one
 correspondence between $\vec x_i\in X_i$
 and sets $S_i\subseteq \delta(i)$ via
 $E(\vec x_i):=\{e\in \delta(i):\; |x_{ie}|=1\}$.
We obtain the following characterizations
on the existence of competitive equilibria
using the notation $E(\vec x):=\cup_{i\in N} E(\vec x_i)$.
\begin{lemma}
Consider a bilateral trading game and
let $G^{\max}(\vec 0)$ be an associated pricing game.
Then, the following statements are equivalent.
\begin{enumerate}
\item\label{enum:trade1} There exists a competitive
equilibrium $(E(\vec x^*),\bm\lambda)\in E\times \R_+^m$ for the bilateral trading game.
\item\label{enum:trade2} The vector $\vec u=\vec 0$ is enforceable via
$(\vec x^*,\bm\lambda)\in X\times \R_+^m$ for the game $G^{\max}(\vec 0)$.
\item\label{enum:trade3} $P^{\max}(\vec 0)$  has zero duality gap and $\vec x^*\in X$ is 
 an optimal solution $\vec x^*\in X$ that satisfies $\ell(\vec x^*)=0$.
\end{enumerate}
\end{lemma}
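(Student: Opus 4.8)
The plan is to establish the two equivalences \ref{enum:trade2}~$\Leftrightarrow$~\ref{enum:trade3} and \ref{enum:trade1}~$\Leftrightarrow$~\ref{enum:trade2} separately, since the first is essentially free while the second carries the actual content. For \ref{enum:trade2}~$\Leftrightarrow$~\ref{enum:trade3} I would simply invoke the framework: the game $G^{\max}(\vec 0)$ has separable valuations $v_i(\vec 0,\vec x_i)$ and linear resource consumption $g_i(\vec x_i)=\vec x_i$, so it fits Assumption~\ref{ass:aggregate}, and the claimed equivalence is exactly the $\vec u=\vec 0$ instance of the maximization analogue of Theorem~\ref{thm:main} (Theorem~\ref{thm:main-max}). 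Indeed, enforcing $\vec 0$ means $\ell(\vec x^*)=\vec 0$, which is precisely the constraint $\ell(\vec x)\leq\vec 0$ of $P^{\max}(\vec 0)$ holding with equality, together with zero duality gap of $P^{\max}(\vec 0)$ at $(\vec x^*,\bm\lambda)$; and ``zero duality gap at $(\vec x^*,\bm\lambda)$'' is the same as ``$P^{\max}(\vec 0)$ has zero duality gap and $\vec x^*$ is primal-optimal (with $\bm\lambda$ dual-optimal)'', which is statement~\ref{enum:trade3}.

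The substance lies in \ref{enum:trade1}~$\Leftrightarrow$~\ref{enum:trade2}, which I would carry out through the one-to-one correspondence $\vec x_i\leftrightarrow S_i:=E(\vec x_i)$ between strategies in $X_i$ and subsets $S_i\subseteq\delta(i)$. The central computation is to rewrite the $G^{\max}(\vec 0)$ payoff of player $i$ in terms of the trade set $S_i$. Using the buyer/seller sign convention ($x_{ie}=-1$ for $e\in\delta^+(i)\cap S_i$, $x_{ie}=+1$ for $e\in\delta^-(i)\cap S_i$, and $x_{ie}=0$ otherwise), the price term splits as
\[
\bm\lambda^\intercal g_i(\vec x_i)=\sum_{e\in E}\lambda_e x_{ie}= -\sum_{e\in\delta^+(i)\cap S_i}\lambda_e+\sum_{e\in\delta^-(i)\cap S_i}\lambda_e,
\]
so that, since $v_i(\vec 0,\vec x_i)=w_i(S_i)$ by~\eqref{eq:valuation-trade},
\[
v_i(\vec 0,\vec x_i)-\bm\lambda^\intercal g_i(\vec x_i)= w_i(S_i)+\sum_{e\in\delta^+(i)\cap S_i}\lambda_e-\sum_{e\in\delta^-(i)\cap S_i}\lambda_e,
\]
which is exactly the trading-network utility~\eqref{eq:trade-utility} of player $i$ for the realized set $S_i$. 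Because that objective depends on $S$ only through $S\cap\delta(i)$, maximizing over $S\subseteq E$ coincides with maximizing over $S\subseteq\delta(i)$, and the bijection then turns the incentive (arg-max) condition of the enforceability definition for $G^{\max}(\vec 0)$ into the competitive-equilibrium arg-max for player $i$, and conversely.

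It remains to match the feasibility condition $\ell(\vec x^*)=\vec 0$ with the requirement that the realized trades form a single globally consistent set $S^*$. For each edge $e=(s,b)$ only the seller $s$ and buyer $b$ carry $e$ in their incidence set, with $x_{se}^*\in\{-1,0\}$ and $x_{be}^*\in\{0,1\}$, so $\ell_e(\vec x^*)=x_{se}^*+x_{be}^*=0$ forces either $x_{se}^*=x_{be}^*=0$ or $x_{se}^*=-1,\,x_{be}^*=+1$; that is, $e\in E(\vec x_s^*)\Leftrightarrow e\in E(\vec x_b^*)$. Hence $S^*:=\bigcup_{i\in N}E(\vec x_i^*)$ satisfies $E(\vec x_i^*)=S^*\cap\delta(i)$ for every $i$, so the per-player optima assemble into one realized set and \ref{enum:trade2}~$\Rightarrow$~\ref{enum:trade1} follows from the identity above. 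For \ref{enum:trade1}~$\Rightarrow$~\ref{enum:trade2} I would run the construction in reverse: from a competitive equilibrium $(S^*,\bm\lambda)$ define $\vec x^*$ by the sign convention applied to $S^*$, which makes $\ell(\vec x^*)=\vec 0$ automatic and yields the arg-max conditions via the same identity. The main obstacle here is bookkeeping rather than depth: one must verify that the bijection $\vec x_i\leftrightarrow S_i$ genuinely respects the seller/buyer sign convention (so that $w_i(S_i)$ and the price term line up as written) and that $\ell(\vec x^*)=\vec 0$ is the precise algebraic encoding of trade consistency across the two endpoints of each edge. I note that the buyer-side monotonicity of $w_i$ is not needed for this equivalence, which is a pure reformulation; it enters only in the subsequent LP-characterization results.
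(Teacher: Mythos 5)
Your proposal is correct and follows essentially the same route as the paper's proof: invoking Theorem~\ref{thm:main-max} for the equivalence of~\ref{enum:trade2} and~\ref{enum:trade3}, and then establishing~\ref{enum:trade1}~$\Leftrightarrow$~\ref{enum:trade2} via the payoff identity $v_i(\vec 0,\vec x_i)-\bm\lambda^\intercal \vec x_i = w_i(E(\vec x_i))+\sum_{e\in\delta^+(i)\cap E(\vec x_i)}\lambda_e-\sum_{e\in\delta^-(i)\cap E(\vec x_i)}\lambda_e$ together with the observation that $\ell(\vec x^*)=\vec 0$ is exactly the per-edge consistency of the seller's and buyer's decisions. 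Your additional remarks on assembling the consistent global set $S^*$ and on the irrelevance of buyer-monotonicity here are accurate but do not change the argument.
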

\begin{proof}
By Theorem~\ref{thm:main-max}
we have already that \eqref{enum:trade2}$\Leftrightarrow$\eqref{enum:trade3} holds, so we only need to show
\eqref{enum:trade1}$\Leftrightarrow$\eqref{enum:trade2}.
For any $\vec x\in X$ 
 with $\ell(\vec x)=\vec 0$,
we have   \begin{equation}\label{eq:consistent} 
\forall e=(s,b)\in E:\;\; e=(s,b)\in E(\vec x_s) \Leftrightarrow e=(s,b)\in E(\vec x_b).\end{equation}
For $i\in N$ arbitrary, we get
\begin{align}\notag
&\hspace{0.5cm} \vec x_i^*  \in\arg\max\{v_i(\vec 0, \vec x_i) - \bm\lambda^\intercal \vec x_i\vert  \vec x_i \in X_i\} \\
\tag{By definition of $v_i$}\Leftrightarrow &\hspace{0.5cm}
\vec x_i^*  \in\arg\max\left\{w_{i}(E(\vec x_i)) - \bm\lambda^\intercal \vec x_i\middle\vert  \vec x_i\in X_i\right\}\\\notag
\Leftrightarrow & \hspace{0.5cm}
E(\vec x^*_i) \in\arg\max\left\{w_{i}(E(\vec x_i)) + 
\sum_{e\in E(\vec x_i)\cap \delta^+(i)}\lambda_e-
\sum_{e\in E(\vec x_i)\cap \delta^-(i)}\lambda_e
\middle\vert E(\vec x_i)\subseteq \delta(i)\right\}\\ \tag{By def. of $w_{i}$ and~\eqref{eq:consistent}}
  \Leftrightarrow &\hspace{0.5cm} E(\vec x^*) \in\arg\max\left\{w_{i}(S) +\sum_{e\in \delta^+(i)\cap S}\lambda_e-
\sum_{e\in \delta^-(i)\cap S}\lambda_e\middle\vert S\subseteq E\right\} \end{align}
\end{proof}
 With this characterization, we can use the results obtained so far
 for the enforceability of $\vec 0$
for $G^{\max}(\vec 0)$. Note that every $X_i,i\in N$
consists of  $k_i:=3^{|\delta(i)|}$ many points, thus, Theorem~\ref{thm:convex-hull-finite}
gives a complete characterization 
of the existence of competitive equilibria.
 \begin{corollary}\label{cor:trading-LP}
 %$(E(\vec x^*),\bm\lambda)\in X\times \R^m_+$
Competitive equilibria 
for bilateral trading networks exist if and only if the following LP admits integral optimal solutions
$\bm\alpha$ with $\ell(\bm \alpha)=\vec 0$:
\begin{framed}
\begin{align}\tag{LP$^{\max}$($\vec 0$)}\label{bilateral:LP}
\max  \sum_{i\in N} \vec v_{i}^\intercal \bm \alpha_i,\;
 \ell(\bm\alpha)\leq \vec 0,\; \bm\alpha_i\in\Lambda_i \text{ for all } i\in N,
  %\sum_{h\in [k_i]} \alpha_{ih}&= 1, \text{ for all }i\in N \\
%\bm \alpha_i & \geq 0\text{ for all } i\in N.
\end{align}
where $\vec v_{i}:=(v_i(\vec 0,\vec x_i))_{\vec x_i\in X_i}$ and
$\ell(\bm \alpha):=\sum_{i\in N}  \sum_{j\in \{1,\dots,k_i\}} \vec x_i^j\alpha_{ij}$.
\end{framed}
\end{corollary}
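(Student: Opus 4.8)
The plan is to combine the equivalence established in the preceding Lemma with the maximization analog of Theorem~\ref{thm:convex-hull-finite}, whose hypotheses are met because every strategy space here is a finite point set. First I would invoke the Lemma to reduce the claim to an enforceability statement: a competitive equilibrium for the bilateral trading network exists if and only if $\vec 0$ is enforceable via some $(\vec x^*,\bm\lambda)\in X\times\R_+^m$ for the associated game $G^{\max}(\vec 0)$, where enforceability already carries the requirement $\ell(\vec x^*)=\vec 0$ recorded in statement~\eqref{enum:trade3}.

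Next I would verify that $G^{\max}(\vec 0)$ falls under Assumption~\ref{ass:convex-hull}. Each $X_i=\{\vec x_i\in\{-1,0,1\}^m\mid x_{ie}=0 \text{ for } e\notin\delta(i)\}$ is a finite set of $k_i=3^{|\delta(i)|}$ points, so Assumption~\ref{ass:convex-hull} holds, while compactness and semicontinuity from Assumption~\ref{ass:convex} are automatic on a finite domain and $g_i(\vec x_i)=\vec x_i$ is linear. I would then apply the maximization version of Theorem~\ref{thm:convex-hull-finite}---valid by the remark following Theorem~\ref{concave-main}, under which convex envelopes become concave envelopes---to obtain that $\vec 0$ is enforceable for $G^{\max}(\vec 0)$ if and only if the master LP of the convexified game $G^{\max-\conv}(\vec 0)$ admits an integral optimal solution $\bm\alpha$ for which the resource constraint $\ell(\bm\alpha)\leq\vec 0$ is tight, i.e.\ $\ell(\bm\alpha)=\vec 0$. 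By (the maximization analog of) Lemma~\ref{lem:finite}, this master LP is precisely \ref{bilateral:LP}: the coefficients $\vec v_i=(v_i(\vec 0,\vec x_i))_{\vec x_i\in X_i}$ encode the concave envelope of $v_i$, and $\ell(\bm\alpha)=\sum_{i\in N}\sum_{j}\vec x_i^j\alpha_{ij}$ is the image of the convex combination under the linear maps $g_i$.

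Finally I would record the dictionary between integral LP solutions and pure profiles: since $\bm\alpha_i\in\Lambda_i$ is integral, exactly one coordinate satisfies $\alpha_{ij}=1$, which selects $\vec x_i^*\in X_i$ and gives $\ell(\bm\alpha)=\ell(\vec x^*)$; here $\ell(\vec x^*)=\sum_{i\in N}\vec x_i^*=\vec 0$ is exactly the trade-consistency condition~\eqref{eq:consistent} built into the $\{-1,0,1\}$ encoding. Chaining the two equivalences then yields the claim. I do not expect a genuine obstacle: the only care needed is bookkeeping the maximization direction---that the concave rather than convex envelope is used, and that the tightness ``$\ell(\bm\alpha)\leq\vec 0$ with equality'' reads as ``$\ell(\bm\alpha)=\vec 0$''---both of which are inherited verbatim from the already-established results.
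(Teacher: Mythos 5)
Your proposal is correct and follows exactly the route the paper intends: the preceding lemma reduces existence of a competitive equilibrium to enforceability of $\vec 0$ for $G^{\max}(\vec 0)$, and since each $X_i$ is a finite point set ($k_i=3^{|\delta(i)|}$) with linear $g_i$, the maximization analog of Theorem~\ref{thm:convex-hull-finite} (via Lemma~\ref{lem:finite}) identifies enforceability with integral optimal solutions of \ref{bilateral:LP} having $\ell(\bm\alpha)=\vec 0$. The paper in fact states the corollary without a written proof, relying on precisely this chain, so your write-up is if anything more complete than the original.
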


\iffalse
The above LP involves exponentially many variables $\bm\alpha_i, i\in N$
so let us dualize~\ref{bilateral:LP}. The following
steps are reminiscent to the standard dual LP
of the Walrasian configuration LP (see e.g., Blumrosen and Nisan~\cite[$\S$ 11.3.1]{Nisan:2007}).
\begin{framed}
\begin{align}\tag{DP$^{\min}$($\vec 0$)}\label{bilateral:DP}
\min&  \sum_{i\in N} \mu_i,\\ \notag
\sum_{e\in E} x_{ie}^j p_e+\mu_i&\geq v_{ij} \text{ for all }i\in N, j\in [k_i]\\
 \mu_i&\in \R, i\in N,\; p_e \geq 0, e\in E. \notag
\end{align}
\end{framed}
%Note that in order to obtain~\ref{bilateral:DP}, we have relaxed
%$\Lambda_i$ to the set $\{\bm\alpha_i\in \R^{k_i}\vert \sum_{j\in[k_i]}%\alpha_{ij}\leq 1\}$ which is feasible since any primal optimal solution
%will satisfy $\sum_{j\in[k_i]}\alpha_{ij}= 1$.
Note that $\mu_i,i\in N$ is not sign-constrained
as it is the dual variable to $\sum_{j\in[k_i]}\alpha_{ij}= 1, i\in N$.
Moreover, recall that
$x_{ie}^j\in \{-1,0,1\}$ are just parameters in \ref{bilateral:DP}.
The dual has $n+m$ many variables but exponentially
many constraints, hence,  if we have a polynomial
time separation oracle, we can use the ellipsoid method
to obtain a polynomial time algorithm (cf. Groetschel et al.~\cite{GroetschelLovaszSchrijver1993}).
A standard way to obtain such oracle is the use of an efficient \emph{demand oracle}.
\begin{definition}
A demand oracle for player $i\in N$ gets as input prices $\vec p\in \R_+^m$
and outputs a profit maximizing vector $\vec x_i\in X_i$, that is,
\[ \vec x_i(\vec p)\in \arg\max\left\{v_i(\vec 0, \vec x_i)- \vec p^\intercal \vec x_i \vert \vec x_i\in X_i\right\}.\]
\end{definition}
\fi
We obtain the following result as a direct corollary of Theorem~\ref{concave-main}.
\begin{corollary}
\ref{bilateral:LP} can be solved in polynomial time,
if there is a polynomial time demand oracle.
\end{corollary}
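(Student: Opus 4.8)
The plan is to recognize that the bilateral trading game $G^{\max}(\vec 0)$ is precisely an instance of the finite-point-set class treated in Theorem~\ref{thm:convex-hull-finite}, and then to invoke the maximization analogue of Theorem~\ref{concave-main} (as licensed by the Remark following it). First I would verify the structural hypotheses. Each strategy space $X_i = \{\vec x_i\in\{-1,0,1\}^m : x_{ie}=0, e\notin\delta(i)\}$ is a finite set of exactly $k_i = 3^{|\delta(i)|}$ points, so Assumption~\ref{ass:convex-hull} holds; since finite sets are compact and every function on a finite set is lower-semicontinuous, Assumption~\ref{ass:convex} is automatic. The resource-consumption functions $g_i(\vec x_i)=\vec x_i$ are linear. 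Hence the game satisfies exactly the requirements of Theorem~\ref{thm:convex-hull-finite}, and \ref{bilateral:LP} is the corresponding \ref{LP-lambda-conv} read in the maximization sense.

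Next I would identify the separation input. In the maximization reading, the demand oracle of Definition~\ref{def:oracle} asks for a profit-maximizing strategy, i.e.\ $\vec x_i(\bm\lambda)\in\arg\max\{v_i(\vec 0,\vec x_i)-\bm\lambda^\intercal \vec x_i : \vec x_i\in X_i\}$, which is precisely the per-player demand problem of the trading network. By hypothesis this is computable in polynomial time for every $\bm\lambda\in\R^m_+$ and every $i\in N$.

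Finally I would run the ellipsoid method on the dual of \ref{bilateral:LP}, which (as with \ref{convex:DP}) has only $n+m$ variables but one constraint per point of each $X_i$, hence exponentially many. Exactly as in the proof of Theorem~\ref{concave-main}, the demand oracle furnishes a polynomial-time separation oracle: given a candidate dual point $(\bm\mu,\bm\lambda)$, compute $\vec x_i(\bm\lambda)$ together with its value $\pi_i^*(\bm\lambda)$; if this certifies feasibility for all $i$ we accept, and otherwise the returned vertex $\vec x_i(\bm\lambda)=\vec x_i^k$ exhibits a violated inequality. By the equivalence of separation and optimization of Groetschel, Lov\'asz and Schrijver~\cite{GroetschelLovaszSchrijver1993}, the dual---and therefore \ref{bilateral:LP} itself---is solvable in polynomial time.

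There is no genuine obstacle here, and the statement is indeed a direct corollary; the only care needed is in the bookkeeping of the minimization-to-maximization translation (the convex envelope becomes a concave envelope, $-\pi_{ik}$ is replaced by the valuation $v_{ik}$, and the oracle maximizes profit rather than minimizing cost), so that the finite-point-set separation argument carries over verbatim to the $\{-1,0,1\}$-valued strategy vectors of the trading network.
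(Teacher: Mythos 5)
Your proposal is correct and follows essentially the same route as the paper, which simply invokes Theorem~\ref{concave-main} (via the maximization remark) as a direct corollary; you have merely unpacked that invocation by verifying Assumptions~\ref{ass:convex} and~\ref{ass:convex-hull}, the linearity of $g_i$, and re-running the ellipsoid/separation argument with the trading-network demand oracle. No gap.
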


In order to obtain existence results, one needs
to enforce some assumptions on the valuation
functions $w_i, i\in N$. Hatfield et al.~\cite{Hatfield13}
introduced the concept of \emph{fully substitutable}
valuations. We omit here the precise definition
but it is important to know that this concept is in fact
equivalent to the known concept of GS
valuations or $M^\natural$-concave valuations (see Hatfield et al.~\cite{HatfieldKNOW15}). Instead of the monotonicity
property of valuations (as e.g. in Fujishige and Yang~\cite{FujishigeY03}), we assume that
valuations are \emph{buyer-monotone},
that is, for every $i\in N$ and $S\subset T \subseteq \delta^-(i)$, we have $w_i(S)\leq w_i(T)$. Free disposal for the buyer is a sufficient
condition.
We obtain the following result.
\begin{theorem}[Hatfield et al.~\cite{Hatfield13}]
For fully substitutable buyer-monotone valuations, there exists a competitive
equilibrium.
\end{theorem}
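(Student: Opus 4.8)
The plan is to reproduce, in the signed trade setting, the discrete-convexity argument already used for Theorem~\ref{thm:kelso}, and to feed it into the reduction supplied by the Lemma immediately preceding this theorem. That Lemma establishes that a competitive equilibrium for the trading network exists if and only if the master problem $P^{\max}(\vec 0)$ has zero duality gap and admits an optimal $\vec x^*$ with $\ell(\vec x^*)=\vec 0$ (this is exactly the maximization characterization of Theorem~\ref{thm:main-max} specialized to $G^{\max}(\vec 0)$). Hence it suffices to exhibit a price vector $\bm\lambda\in\R^m_+$ certifying zero duality gap together with a market-clearing optimal allocation, and the entire burden falls on the welfare-maximization problem $\max\{\sum_{i\in N} v_i(\vec 0,\vec x_i)\mid \sum_{i\in N}\vec x_i\le\vec 0,\ \vec x_i\in X_i\}$.

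For this I would first invoke the equivalence, due to Hatfield et al.~\cite{HatfieldKNOW15}, that full substitutability of the trade valuations $w_i$ coincides with $M^{\natural}$-concavity of the associated signed valuations $v_i(\vec 0,\cdot):X_i\to\R$ on $\{-1,0,1\}^m$, after the standard reorientation that flips the sign on each player's seller (outgoing) coordinates so that every realized trade is encoded uniformly. With the $v_i(\vec 0,\cdot)$ being $M^{\natural}$-concave, the key structural input is that their sup-convolution $V(\vec z):=\sup\{\sum_{i\in N} v_i(\vec 0,\vec x_i)\mid \sum_{i\in N}\vec x_i=\vec z,\ \vec x_i\in X_i\}$ is again $M^{\natural}$-concave (Murota~\cite{Murota:2003}) and is finite at $\vec z=\vec 0$, since the empty set of trades is feasible. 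By the conjugacy theory of $M^{\natural}$-concave functions, $V$ then admits an integral supergradient $\bm\lambda$ at $\vec 0$, i.e.\ $V(\vec z)\le V(\vec 0)+\bm\lambda^\intercal\vec z$ for all $\vec z$. Evaluating this at $\vec z=\sum_{i\in N}\vec x_i$ for any feasible profile yields $\sum_{i\in N} v_i(\vec 0,\vec x_i)\le V(\vec 0)+\bm\lambda^\intercal\sum_{i\in N}\vec x_i$, hence $\sum_{i\in N}\bigl(v_i(\vec 0,\vec x_i)-\bm\lambda^\intercal \vec x_i\bigr)\le V(\vec 0)$ with equality at a welfare-maximizing profile; this is precisely the Lagrangian-decomposition identity giving zero duality gap.

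It remains to apply buyer-monotonicity twice. First, it guarantees a market-clearing optimizer: any optimal allocation with $\ell_e(\vec x^*)<0$ on some edge $e$ (the seller realizes $e$ but the incident buyer abstains) can be modified by switching the buyer's trade on; by $w_i(S)\le w_i(T)$ for $S\subset T\subseteq\delta^-(i)$ this weakly increases welfare and keeps $\ell_e\le 0$, so after finitely many such moves we obtain an optimal $\vec x^*$ with $\ell(\vec x^*)=\vec 0$. Second, together with the fact that $\bm\lambda$ is the multiplier of an inequality constraint in a maximization, it lets us choose the integral supergradient in the nonnegative orthant, so $\bm\lambda\in\R^m_+$. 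Both conclusions then plug back into the preceding Lemma to deliver the competitive equilibrium. The main obstacle I anticipate is the $M^{\natural}$-concavity bookkeeping underlying the zero-gap step: verifying that the seller/buyer sign reorientation is simultaneously compatible with the $\pm1$ encoding of $X_i$ and with the discrete-convex-analysis machinery, and that the supergradient of $V$ at $\vec 0$ can be taken integral \emph{and} nonnegative at once. Everything else is the routine reduction and the monotonicity cleanup.
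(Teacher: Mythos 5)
Your proposal is correct in outline and follows the same overall strategy as the paper: reduce via the preceding Lemma (equivalently Theorem~\ref{thm:main-max}) to showing that the master problem $\max\{\sum_{i\in N}v_i(\vec 0,\vec x_i)\mid \ell(\vec x)\le \vec 0,\ \vec x\in X\}$ has zero duality gap and a market-clearing optimizer, identify full substitutability with $M^{\natural}$-concavity of the signed valuations via Hatfield et al.~\cite{HatfieldKNOW15}, and use buyer-monotonicity to convert any optimal solution into one with $\ell(\vec x^*)=\vec 0$. Where you diverge is the zero-duality-gap step: the paper invokes Yokote's duality theorem~\cite{Yokote2018} for maximizing a sum of $M^{\natural}$-concave functions over the laminar system $\ell(\vec x)\le\vec 0$ as a black box, whereas you unpack it into the standard discrete-convex-analysis argument --- the sup-convolution $V$ of the $v_i(\vec 0,\cdot)$ is again $M^{\natural}$-concave by Murota's convolution theorem~\cite{Murota:2003}, and a supergradient of $V$ at $\vec 0$ yields exactly the Lagrangian identity $\sup_{\vec x\in X}\{v(\vec x)-\bm\lambda^{\intercal}\ell(\vec x)\}=V(\vec 0)$. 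This buys a more self-contained proof at the cost of the bookkeeping you yourself flag; morally it is the same machinery, since Yokote's theorem is proved along these lines.

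Two small points deserve tightening. First, the integrality of the supergradient is neither needed nor available for real-valued valuations; only $\bm\lambda\in\R^m_+$ matters, so you should drop the integrality claim. Second, your justification of nonnegativity (``$\bm\lambda$ is the multiplier of an inequality constraint'') is not yet an argument: a supergradient of $V$ at $\vec 0$ need not be nonnegative a priori. The clean fix is to pass to the monotone closure $\tilde V(\vec z):=\max_{\vec z'\le\vec z}V(\vec z')$, which is the convolution of $V$ with the indicator of the nonnegative orthant and hence still $M^{\natural}$-concave; buyer-monotonicity gives $\tilde V(\vec 0)=V(\vec 0)$, a supergradient of the nondecreasing function $\tilde V$ at $\vec 0$ is nonnegative, and it is simultaneously a supergradient of $V$ there. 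With that repair your argument is complete and matches the paper's conclusion.
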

\begin{proof}
In order to apply our previous results,
we need to check whether
 \begin{equation}\label{master-bilateral} \max\left\{ v(\vec x):=\sum_{i\in N}v_i(\vec 0, \vec x_i)\middle\vert 
\vec x \in  X, \; \ell(\vec x)\leq \vec 0 \right\}\end{equation}
 has zero duality gap and admits an optimal solution
$\vec x^*\in X$ with $\ell(\vec x^*)=\vec 0$.
Again the result of Yokote~\cite{Yokote2018} implies zero duality gap
as $\ell(\vec x)\leq \vec 0$ is a laminar system.
With the buyer monotonicity, any optimal
solution $\vec x^*$ to~\eqref{master-bilateral} 
can be turned into one with $\ell(\vec x^*)= \vec 0$,
hence, Theorem~\ref{thm:main-max} implies  Theorem~\ref{thm:kelso}.
 \end{proof}

 \section{Application to Congestion Control in Communication Networks}\label{sec:congestion-control}
In the domain of network-based TCP congestion control, we are given a directed  \emph{capacitated} graph $G=(V,E,\vec u)$,
where $V$ are the nodes, $E$ with $|E|=m$ is the edge set and
$\vec u \in \R_+^m$ denote the edge capacities.
There is a set of players $N= \{1, \dots,
n\}$ and every $i \in N$ is associated with an end-to-end pair $(s_i,t_i)\in V\times V$ and a bandwidth utility function $U_i:\R_+\rightarrow\R_+$
measuring the received benefit from sending net flow from $s_i$ to $t_i$.
As in congestion games, a \emph{flow} for~$i\in N$ is a nonnegative vector
$\vec x_i \in \R^{|E|}_+$ that  lives in the flow polyhedron: 
\begin{align*}
X_i=\left\{\vec x_i\in \R_+^m\middle \vert \sum_{j\in \delta^+(v)} x_{ij} - \sum_{j\in \delta^-(v)} x_{ij} = 0, \text{ for all } v\in V\setminus\{s_i,t_i\}\right\},
\end{align*}
where $\delta^+(v)$ and $\delta^-(v)$ are the arcs leaving and
entering~$v$.
We assume $X_i\neq \emptyset$ for all $i\in N$ and
we denote the net flow reaching $t_i$
by $\val(\vec x_i):= \sum_{j\in \delta^+(s_i)} x_{ij} - \sum_{j\in \delta^-(s_i)} x_{ij}, i\in N$.
The goal in price-based congestion control is to determine
edge prices $\lambda_j, j\in E$ so that 
a strategy distribution $\vec x^*$ is induced as an equilibrium
respecting the network capacities $\vec u$ and, hence, avoiding
congestion. Assuming that resource consumption is linear,
that is, $g_i(\vec x_i)=\vec x_i, i\in N$, the equilibrium condition amounts to
\[ \vec x_i^*\in \arg\max\{U_i( \val(\vec x_i))-\bm\lambda^\intercal\vec x_i \vert \vec x_i\in X_i\} \text{ for all $i\in N$.}\]
We obtain the following result for concave bandwidth utility functions.
\begin{theorem}[Kelly et al.~\cite{Kelly98}]
For concave bandwidth utility functions $U_i, i\in N$, every capacity vector $\vec u\in \R^m_+$ is weakly enforceable with market prices.
\end{theorem}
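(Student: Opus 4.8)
The plan is to recognize the congestion-control model as a maximization game $G^{\max}(\vec u)$ and then apply the maximization analogues of Theorem~\ref{thm:main} and Corollary~\ref{cor:convex}. Set $v_i(\vec u,\vec x_i):=U_i(\val(\vec x_i))$ and $g_i(\vec x_i):=\vec x_i$ for every $i\in N$; with these choices the equilibrium condition displayed in the statement is exactly the maximization enforceability condition~\ref{cond3}, so it suffices to show that $\vec u$ is weakly enforceable with market prices for $G^{\max}(\vec u)$. The associated master problem $P^{\max}(\vec u)$ is
\[ \max\Big\{\sum_{i\in N} U_i(\val(\vec x_i)) \,\Big|\, \vec x_i\in X_i,\; i\in N,\; \ell(\vec x)\le \vec u\Big\}. \]

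First I would check the convexity hypotheses needed for the maximization framework. The value map $\vec x_i\mapsto\val(\vec x_i)$ is linear, hence the composition $\vec x_i\mapsto U_i(\val(\vec x_i))$ is concave, being a concave function precomposed with an affine map (monotonicity of $U_i$ is not even required here). Each $X_i$ is a convex flow polyhedron, $g_i$ is linear, and the coupling constraints $\ell_j(\vec x)\le u_j$ are affine; thus $P^{\max}(\vec u)$ maximizes a concave objective over a convex set defined by affine inequalities.

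Next I would verify well-posedness and strong duality. The zero flow $\vec x=\vec 0$ satisfies flow conservation and, because $\vec u\ge\vec 0$, also satisfies $\ell(\vec 0)=\vec 0\le\vec u$, so the feasible region is nonempty; since $0\le x_{ij}\le\ell_j(\vec x)\le u_j$ it is in fact a compact polytope, on which the continuous concave objective attains its maximum. As all constraints are affine, Slater's constraint qualification is met the moment the problem is feasible, which we just established. Hence $P^{\max}(\vec u)$ has zero duality gap and possesses a primal-dual optimal pair $(\vec x^*,\bm\lambda^*)$. By the maximization analogue of Theorem~\ref{thm:main} (the part equating zero duality gap with weak enforceability by market-clearing prices), $\vec u$ is weakly enforceable with the prices $\bm\lambda^*$.

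The argument is mostly routine; the only point worth flagging is that we deliberately do not --- and need not --- claim $\ell(\vec x^*)=\vec u$. Because we aim only at weak enforceability with market prices, the relevant characterization requires nothing beyond zero duality gap, and this is precisely what lets the conclusion hold for \emph{every} $\vec u\in\R^m_+$ rather than only for minimal ones (where tightness $\ell(\vec x^*)=\vec u$ would be automatic). Accordingly, the main --- and still minor --- obstacle is confirming the constraint qualification, but the affineness of all constraints collapses this to the trivial feasibility of the zero flow.
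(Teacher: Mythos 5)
Your proof is correct and follows essentially the same route as the paper: cast the model as $G^{\max}(\vec u)$, observe that $P^{\max}(\vec u)$ is a concave maximization over a (nonempty, bounded) polyhedron so that Slater-type conditions give zero duality gap, and invoke the weak-enforceability part of Theorem~\ref{thm:main-max}. Your additional checks (feasibility of the zero flow, boundedness via $0\le x_{ij}\le u_j$, and the remark that tightness $\ell(\vec x^*)=\vec u$ is not needed) only make explicit what the paper's two-line proof leaves implicit.
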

\begin{proof}
With the concavity of  $U_i, i\in N$, problem~\ref{price-opt-max}
is a convex optimization problem over a polytope
and hence satisfies Slater's constraint qualification conditions
for strong duality. Thus, Theorem~\ref{thm:main-max} implies the result.
\end{proof}

Let us turn to models, where the flow polyeder $X_i$ is intersected with $\Z_+^m$.
Most of the previous works in the area of congestion control
assume either that there is only a single path per $(s_i,t_i)$ pair or as in
Kelly et al.~\cite{Kelly98}, the flow is allowed to be fractional. Allowing a fully fractional distribution of the flow, however, is not possible in some interesting applications - the notion of data packets as indivisible units 
seems more realistic.  The issue of completely fractional
routing versus integrality requirements has been explicitly addressed by Orda et al.~\cite{Orda93}, Harks and Klimm~\cite{HarksK16b} and Wang et al.~\cite{wang2011}.
Using the TDI and IDP property of network
matrices, we obtain the following result for integral
flow polytopes.
\begin{corollary}\label{cor:kelly-integral}
Let  the bandwidth utility functions $U_i, i\in N$ be non-decreasing, identical and linear and assume that all players share the same source $s_i=s, i\in N$.
Then, for integral routing models with strategy spaces  $X'_i=X_i\cap\Z_+^m$,  every capacity vector $\vec u\in \Z^m_+$ is weakly enforceable with market prices.
\end{corollary}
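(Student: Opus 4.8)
The plan is to exploit that identical, linear, non-decreasing utilities turn the master problem into a homogeneous linear program over a single-source flow structure, so that Theorem~\ref{thm:box-integral} applies. Since $U_i$ is linear and non-decreasing I write $U_i(z)=\beta z$ with $\beta\ge 0$ for all $i\in N$. Because all players share the source $s$, the net value $\val(\vec x_i)=\sum_{j\in\delta^+(s)}x_{ij}-\sum_{j\in\delta^-(s)}x_{ij}$ has the same coefficient pattern for every $i$, so with $g_i(\vec x_i)=\vec x_i$ we get
\[ \sum_{i\in N}U_i(\val(\vec x_i))=\beta\Big(\sum_{j\in\delta^+(s)}\ell_j(\vec x)-\sum_{j\in\delta^-(s)}\ell_j(\vec x)\Big)=\sum_{j\in E}\pi_j\,\ell_j(\vec x), \]
where $\pi_j=\beta$ for $j\in\delta^+(s)$, $\pi_j=-\beta$ for $j\in\delta^-(s)$, and $\pi_j=0$ otherwise. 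Thus the objective is a \emph{homogeneous} linear functional of the aggregate load. Negating it turns this $G^{\max}$ congestion-control game into the equivalent $G^{\min}$ with homogeneous linear per-unit costs $\pi_j$ and the unchanged price term $\bm\lambda^\intercal g_i$; the sign flip maps condition~\ref{cond3} of Definition~\ref{def:enforceable} onto~\ref{cond2} and leaves the market-clearing rule intact, so it suffices to prove weak enforceability with market prices for the minimization game.

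Next I would identify the aggregation polytope. As $X_i'=X_i\cap\Z_+^m$ consists of the integral non-negative $\vec x_i$ satisfying flow conservation at every node except the common source $s$ and the sink $t_i$, the admissible aggregates $\vec z=\ell(\vec x)=\sum_{i\in N}\vec x_i$ are exactly the integral points of the transshipment polytope $P_N$ with source $s$ and sink set $T=\{t_i:i\in N\}$: non-negative $\vec z$ with zero divergence off $\{s\}\cup T$, non-negative divergence at $s$, and non-positive divergence at each $t\in T$. This is precisely the common-source/multiple-sink case recorded in Corollary~\ref{cor:consequences}\,(1). Its defining system is governed by the node--arc incidence matrix, which is totally unimodular and hence yields a box-TDI system; the integer decomposition property follows from integral flow decomposition, assigning each $s$-to-$T$ path in an integral $\vec z$ to some player $i$ whose designated sink $t_i$ equals its terminal sink (ties broken arbitrarily, while idle players and residual cycles receive the zero flow, which lies in every $X_i'$).

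With box-TDI and IDP in hand, the conclusion is immediate from Theorem~\ref{thm:box-integral}: for any $\vec u\in\Z_+^m$ the zero flow gives $\vec 0\in P_N\cap\{\vec y\le\vec u\}\neq\emptyset$, so the homogeneous linear objective attains an integral optimum of the master problem that decomposes into strategies $\vec x_i^*\in X_i'$, and the dual of the (integral) relaxation supplies market-clearing prices $\bm\lambda$; this yields weak enforceability of every $\vec u\in\Z_+^m$, which transfers back to the original maximization game by the equivalence of the first paragraph. The main obstacle is the second paragraph: one must verify that leaving the flow value \emph{free} (rather than fixing a demand $d_i$) preserves box-TDI and, crucially, that the path/cycle decomposition can always be routed to respect each player's \emph{designated} sink $t_i$; confirming this routing, together with checking that the max-to-min sign flip keeps the market-clearing condition intact, are the points requiring care.
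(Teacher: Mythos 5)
Your argument is correct in substance, but it takes a different route from the paper. The paper's proof is a two-line reduction: introduce a super-sink $t$, attach each $t_i$ to $t$ by an arc of large integral capacity, observe that with identical linear utilities the master problem $P^{\max}(\vec u)$ becomes an ordinary $s$-$t$ max-flow LP, and invoke the classical integrality of that LP (hence zero duality gap plus an integral optimum, so Theorem~\ref{thm:main-max} applies). You instead negate the objective, pass to the minimization game, form the common-source transshipment aggregation polytope $P_N$, and invoke Theorem~\ref{thm:box-integral} via box-TDI (total unimodularity of the node--arc incidence matrix) and IDP (integral path/cycle decomposition with paths routed to players owning the corresponding sink and circulations assigned to anyone, since they preserve conservation everywhere). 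The two caveats you flag at the end do both go through: leaving the flow value free only deletes rows of a TU system, which stays box-TDI, and the sink-respecting decomposition works because every path in the decomposition of an aggregate in $P_N$ terminates at some $t_k$ and can be handed to any player with that sink. Note that the paper's super-sink proof silently relies on exactly the same decomposition fact, so you have made explicit a step the paper leaves implicit. The one formal mismatch in your route is that Theorem~\ref{thm:box-integral} is stated for strategy spaces sharing a single matrix $A$ with varying right-hand sides $b_i$, whereas here the players' conservation systems differ in \emph{which} node constraints are present (conservation is waived at $t_i$ for player $i$ only); your direct definition of $P_N$ sidesteps this, but strictly speaking you are reproving the relevant integrality and decomposition facts rather than quoting the theorem verbatim. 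The paper's super-sink reduction buys brevity and avoids this bookkeeping; your version buys a reusable aggregation-polytope formulation consistent with Corollary~\ref{cor:consequences} and makes the decomposition argument explicit.
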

\begin{proof}
For problem~\ref{price-opt-max},
we can w.l.o.g. change the instance
by introducing a super-sink
and connect all $t_i$'s to the sink with large enough integral capacity. 
This way, we obtain an ordinary  $s$-$t$
max-flow problem for which the LP-formulation
 $LP^{\max}(\vec u)$ is known to be integral.
\iffalse
We can write
$X'_i=\{\vec x_i\in \Z_+^m\vert A\vec x_i=\vec b_i\}$ for all $i\in N$, where $A$ is the graph incidence matrix of $G$ and
$b_{ij}=0$ for all $j\in V\setminus\{s_i, t_i\}$ while for $s_i$,$t_i$
there are no constraints.
Then, we can use the aggregation polytope $P_N$ as in Section~\ref{sec:atomic-congestion-games}, equation~\eqref{eq:aggregation-polytope}.
The assumption on bandwidth utility functions implies the form $U_i(z)=a z, a\geq 0, i\in N$.  Thus, the aggregated utility can be written as
\[ \sum_{i\in N}\val(\vec x_i)=  \sum_{i\in N}a\left(\sum_{j\in \delta^+(s)} x_{ij} - \sum_{j\in \delta^-(s)} x_{ij}\right)= a \left(\sum_{j\in \delta^+(s)} y_{j} - \sum_{j\in \delta^-(s)} y_{j}\right).\]
Then, $LP^{\max}(\vec u)$ can be reformulated as
 \[ \max\left\{a \left(\sum_{j\in \delta^+(s)} y_{j} - \sum_{j\in \delta^-(s)} y_{j}\right) \;\middle\vert\;  \vec y\in P_N\cap\{\vec y \vert \vec y\leq \vec u\} \right\}\]
As the objective is linear and $P_N$ is box-TDI, this LP admits
an integral optimal solution. By the IDP property of $P_N$, we can decompose an integral
optimal solution and the result follows.
One can also interpret $LP^{\max}(\vec u)$
as a max-flow problem on a slightly changed
instance, where we introduce a super-sink
and connect all $t_i$'s to the sink with large enough capacity.
This way, we obtain a standard max-flow problem
which is known to admit integral optimal solutions.
\fi
\end{proof}
\begin{remark}
The above proof shows that for a capacity vector $\vec u\in \Z^m_+$, we can compactly represent the enforcing prices/allocation space
and efficiently optimize linear functions over it.
\end{remark}
While the above result seems to require somewhat
restrictive assumptions (linear identical bandwidth utilities
and a common source), we show in the following that
already for two source-sink pairs with
identical  linear capped bandwidth utilities,
enforceability is not guaranteed, unless $P= NP$.
A capped linear function $f:\R\rightarrow \R$
has the form $f(x)=ax$, for $x\leq x^{\max}$
and  $f(x)=a x^{\max}$ for  $x\geq x^{\max}$.
This type of function is concave and arises quite
naturally as we only require the existence of an
upper bound on the requested bandwidth of every player.
\begin{proposition}\label{prop:congestion-control-reduction}
Unless $P= NP$, there is an instance with only two players with different source sink pairs $(s_i,t_i), i\in\{1,2\}$ and non-decreasing, identical and linear
capped bandwidth utilities $U_i, i\in\{1,2\}$, for which there is a vector $\vec u\in \Z^m_+$
that is not weakly enforceable by market prices.
\end{proposition}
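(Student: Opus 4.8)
The plan is to instantiate the impossibility scheme discussed after Theorem~\ref{concave-main}: I would exhibit a two-player integral congestion control instance whose master problem $P^{\max}(\vec u)$ is NP-hard to optimize, while each player's demand oracle is polynomial-time computable. First I would check that the integral model fits Theorem~\ref{concave-main}. Each utility $U_i(\val(\vec x_i))$ is concave, being a capped linear function composed with the linear map $\val$; the resource consumption $g_i(\vec x_i)=\vec x_i$ is linear; and once the capacity box $\{\vec x\le \vec u\}$ is imposed, the integral strategy set $X'_i=X_i\cap\Z_+^m$ becomes a finite point set, so the concave-extension hypothesis applies. Consequently, by the maximization analogue of Theorem~\ref{thm:convex-hull-finite}, if \emph{every} $\vec u$ were weakly enforceable with market prices, then for every $\vec u$ the configuration LP $LP^{\max}(\vec u)$ would attain its optimum at an integral point feasible for the game, so its value would coincide with the value of the integral master problem.

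Next I would supply the reduction establishing NP-hardness of the master problem, reducing from the directed two edge-disjoint paths problem (equivalently integral two-commodity flow), which is NP-complete~\cite{Fortune80}. Given a digraph $G$ with terminal pairs $(s_1,t_1),(s_2,t_2)$, I keep $G$ as the network, assign player $i$ the pair $(s_i,t_i)$, set all edge capacities to one so that $\vec u=\vec 1$, and give both players the identical non-decreasing capped linear utility $U(z)=\min(z,1)$. Because the coupling constraint reads $x_{1j}+x_{2j}\le 1$ with $x_{ij}\in\Z_+$, on each edge at most one player carries flow and that flow is exactly one unit, so the two flow supports are edge-disjoint. Since $U\le 1$, the master optimum equals $2$ precisely when $\val(\vec x_1)\ge 1$ and $\val(\vec x_2)\ge 1$, and by integral flow decomposition this occurs if and only if $G$ admits two edge-disjoint paths. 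Thus deciding whether the master value equals $2$ decides the disjoint-paths instance.

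Finally I would verify that the demand problem is easy: for fixed nonnegative prices $\bm\lambda$, maximizing $U_i(\val(\vec x_i))-\bm\lambda^\intercal\vec x_i$ over $X'_i$ is solved by routing successive cheapest units of flow and stopping once the marginal path price exceeds the utility slope or the cap $x^{\max}$ is reached, which is a polynomial min-cost flow computation; hence the demand oracle runs in polynomial time. Combining the three ingredients yields the contradiction: if all $\vec u$ were weakly enforceable, Theorem~\ref{concave-main} lets us compute $LP^{\max}(\vec u)$, and hence (by the first paragraph) the master value, in polynomial time via the dual and the ellipsoid method, which by the second paragraph would decide an NP-complete problem, forcing $P=NP$. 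I expect the reduction to be the main obstacle: one must encode the hardness using only \emph{identical} capped linear utilities and \emph{different} source-sink pairs, so that single-commodity or marginal-cost shortcuts are unavailable, and argue carefully that the unit-capacity coupling forces edge-disjointness of integral optima; the remaining steps (concavity, finiteness of $X'_i$, and polynomiality of the demand oracle) are routine.
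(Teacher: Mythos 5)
Your proposal is correct and follows essentially the same route as the paper: both invoke the LP characterization of Theorem~\ref{concave-main} (via finiteness of the relevant strategies under capped utilities), observe that the demand problem is a polynomially solvable flow problem, and reduce the master problem from the directed two edge-disjoint paths problem with $\vec u=\vec 1$ and identical capped linear utilities, concluding that the master value is $2$ iff disjoint paths exist. Your write-up is somewhat more explicit about why the hypotheses of the LP characterization are met and how the demand oracle is implemented, but the argument is the same.
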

\begin{proof}
Having capped bandwidth utilities implies that
there are only finitely many strategies per player.
Thus,  we can use the LP characterization result
of Theorem~\ref{concave-main}:
It remains to prove that the master problem~\ref{price-opt-max}
is NP-hard and that the demand problem is
polynomial time solvable.
The demand problem amounts to
\[ \max\{\val(x_i)-\bm\lambda^\intercal x_i\vert \vec x_i\in X_i\},\]
which is just a max flow problem.
For the master problem, it is not hard
to see that we can reduce from the two-directed
disjoint path problem. For an instance of two-directed
disjoint path, we associate the given two source-sink pairs
naturally with those of two players $\{1,2\}$ and assume $\vec u=1$ and $U(\val(x_i))=\val(x_i), i\in \{1,2\}$
with a cap at any value larger equal $1$.
This way, due to the integrality of the flows, there is a solution
to the disjoint path problem iff the objective value of the
master problem is $2$.
\end{proof}

\section{Conclusions and Extensions}
We introduced a generic resource allocation
problem and studied the question of enforceability
of certain load vectors $\vec u$ via (anonymous) 
pricing of resources. We derived a characterization
of enforceable load vectors 
via studying the duality gap of an associated optimization problem.
We further derived a characterization connecting enforceability
for arbitrary non-convex settings to enforceability of a convex
model. 
Using this general result, we studied consequences
of known structural results in the area of linear integer
optimization, polyhedral combinatorics and
discrete convexity for several application cases.

Understanding duality gaps of optimization problems is an active research area, see for instance the progress on duality for
nonlinear mixed integer programming (cf. Baes et al.~\cite{BaesOW16}).
Thus, our general characterization yields the opportunity
to translate progress in this field to economic situations
mentioned in the applications.

\iffalse
A further consequence of the proposed framework is the enforceability
of load vectors $\vec u$ using \emph{mixed}
or \emph{correlated} equilibria. For these equilibrium concepts, the
strategy space of a finite strategic game is a (convex) polytope and
if the cost/utility function of the extended game (e.g., the 
cost/utility function of the mixed extension)
is convex in the randomization variable, we have strong duality of the master problem~\ref{price-opt}
and enforceability results for expected load vectors follow. 
\fi

For our general model we assumed that the strategy spaces
are subsets $X_i\in \R^m, i\in N$. This assumption
is not necessary for proving our main result.
We could have chosen $X_i$ as a Banach space
and the results would have gone through.
In fact, in the area of dynamic traffic assignments (cf. Friesz et al~\cite{Friesz93}), the flow trajectories live in 
function spaces, thus, offering the possibility
that our characterization on Banach spaces
yields the existence of (time varying) tolls 
for these applications too.

\subsection*{Acknowledgements}
I thank Dimitris Fotakis, Lukas Graf, Martin Hoefer, Max Klimm, Anja Schedel and Julian Schwarz for
helpful discussions and comments on an earlier draft of this manuscript.
I am also grateful for the  comments received by attendees of the workshop
``20 years of price of anarchy'' held in Crete, Juli 2019.

\bibliographystyle{plain}
\bibliography{../master-bib}

% Appendix
\appendix

\section{Utility Maximization Problems}\label{subsec:max}
We turn to utility maximization problems and define the following
analogous problem:
\begin{framed}
\begin{equation}\tag{$P^{\max}(\vec u)$}\label{price-opt-max}
\begin{aligned}
\max\;\left\{ v(\vec x) \vert \;\;  \ell_j(\vec x) \leq u_j, \; j\in E,
\;\; \vec x_i\in X_i, \; i=1,\dots,n\right\},
\end{aligned}
\end{equation}
where the objective function is defined as
$v(\vec x):=\sum_{i\in N}v_i(\vec u,\vec x_i).$
\end{framed}
The Lagrangian function for problem~\ref{price-opt-max} becomes
$L(\vec x,\bm\lambda):=v(\vec x) -\bm\lambda^\intercal (\ell(\vec x)-\vec u) ,\; \bm \lambda\in \R_+^m,$
and we can define the Lagrangian-dual as:
\begin{align*} \mu : \R_+^m \rightarrow\R,\;\;
\mu(\bm \lambda)=\sup_{\vec x \in X} L(\vec x,\bm \lambda)=\sup_{ \vec x \in X}\{v(\vec x)-\bm\lambda^\intercal (\ell(\vec x)-\vec u)\}.
\end{align*}
We assume that $\mu(\bm \lambda)=\infty$, if $L(\vec x,\bm \lambda)$ 
is not bounded from above on $X$.
The \emph{dual problem} is defined as:
\begin{align}\label{price-dual-max}
\tag{$D^{\max}(\vec u)$} \inf_{\bm \lambda\geq 0} \mu(\bm \lambda)\end{align}
We obtain the following analogous results to the minimization case.
\begin{theorem}\label{thm:main-max}
Consider a game of type~$G^{\max}(\vec u)$.
Then, the following statements hold:
\begin{enumerate}
\item\label{enum:main1-max} A supply vector $\vec u\in \R^m$ is enforceable
via $(\vec x^*,\bm \lambda^*)$ if and only if $(\vec x^*,\bm \lambda^*)$ has zero duality gap for~\ref{price-opt-max} and $\vec x^*$ satisfies~\eqref{eq:inequality} with equality. 
\item\label{enum:main2-max} A supply vector $\vec u\in \R^m$ is weakly enforceable
with market clearing prices
if and only if $(\vec x^*,\bm \lambda^*)$ has zero duality gap for~\ref{price-opt-max}. 
\item\label{enum:main-max-unique} A supply vector $\vec u\in \R^m$ is uniquely enforceable via $(\vec x^*,\bm \lambda^*)$
if and only if $(\vec x^*,\bm \lambda^*)$ has zero duality gap for~\ref{price-opt-max} and $\vec x^*$ is
a unique optimal solution for~\ref{price-opt-max}
satisfying~\eqref{eq:inequality} with equality. 
\iffalse
\item\label{enum:main3-max} A supply vector $\vec u\in \R^m$ is weakly enforceable via $(\vec x^*,\bm \lambda^*)$,
if $(\vec x^*,\bm \lambda^*)$ has zero duality gap for~\ref{price-opt-max}.
\fi
\end{enumerate}
\end{theorem}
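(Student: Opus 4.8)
The plan is to mirror the proof of Theorem~\ref{thm:main} essentially verbatim, flipping infimum/supremum and tracking the sign conventions dictated by the maximization Lagrangian $L(\vec x,\bm\lambda)=v(\vec x)-\bm\lambda^\intercal(\ell(\vec x)-\vec u)$. First I would record the maximization analog of the decomposition Lemma~\ref{eq:decomposition}: for any $\bm\lambda\in\R_+^m$,
\[
\vec x^*\in\arg\max_{\vec x\in X}L(\vec x,\bm\lambda)\;\Longleftrightarrow\;\vec x_i^*\in\arg\max_{\vec x_i\in X_i}\{v_i(\vec u,\vec x_i)-\bm\lambda^\intercal g_i(\vec x_i)\}\ \text{ for all }i\in N.
\]
This follows by the identical computation, using the separability $v(\vec x)=\sum_{i\in N}v_i(\vec u,\vec x_i)$, the linearity of $\ell$ with respect to the $g_i$, and the fact that subtracting the constant $\bm\lambda^\intercal\vec u$ does not change the argmax.

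As in Theorem~\ref{thm:main}, it then suffices to prove statement~\ref{enum:main2-max}. Indeed, statement~\ref{enum:main1-max} is obtained by appending the condition $\ell(\vec x^*)=\vec u$ to both sides of the equivalence in~\ref{enum:main2-max}, and statement~\ref{enum:main-max-unique} by appending uniqueness of the optimizer of~\ref{price-opt-max} to both sides; in each case the extra hypothesis is symmetric across the equivalence and needs no separate argument.

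For the backward direction of~\ref{enum:main2-max}, I would take $(\vec x^*,\bm\lambda^*)\in X\times\R_+^m$ with $\ell(\vec x^*)\leq\vec u$ and $\mu(\bm\lambda^*)=v(\vec x^*)$, and sandwich
\[
\mu(\bm\lambda^*)\;\geq\;v(\vec x^*)-(\bm\lambda^*)^\intercal(\ell(\vec x^*)-\vec u)\;\geq\;v(\vec x^*)\;=\;\mu(\bm\lambda^*),
\]
where the first inequality holds because $\vec x^*$ is a feasible point for the supremum defining $\mu$, and the second because $\bm\lambda^*\geq 0$ and $\ell(\vec x^*)-\vec u\leq 0$ force $-(\bm\lambda^*)^\intercal(\ell(\vec x^*)-\vec u)\geq 0$. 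Equality throughout yields the Walrasian complementarity $(\bm\lambda^*)^\intercal(\ell(\vec x^*)-\vec u)=0$ together with $\vec x^*\in\arg\max_{\vec x\in X}L(\vec x,\bm\lambda^*)$, and the decomposition lemma above then delivers condition~\eqref{cond3}. For the forward direction I would start from weak enforceability with market prices, namely $\ell(\vec x^*)\leq\vec u$, complementarity, and~\eqref{cond3}; the decomposition lemma gives $\vec x^*\in\arg\max_{\vec x\in X}L(\vec x,\bm\lambda^*)$, whence $\mu(\bm\lambda^*)=v(\vec x^*)-(\bm\lambda^*)^\intercal(\ell(\vec x^*)-\vec u)=v(\vec x^*)$ by complementarity, establishing zero duality gap.

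There is no genuine obstacle here; the one point demanding care is the sign bookkeeping. Because we maximize while \emph{subtracting} $\bm\lambda^\intercal(\ell(\vec x)-\vec u)$, the slack $\ell(\vec x^*)-\vec u\leq 0$ must contribute the \emph{nonnegative} quantity $-(\bm\lambda^*)^\intercal(\ell(\vec x^*)-\vec u)\geq 0$, and this is precisely what orients the sandwiching inequalities correctly. This is the exact mirror image of the minimization argument, so no new difficulty arises.
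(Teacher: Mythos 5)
Your proposal is correct and is exactly the argument the paper intends: the paper proves Theorem~\ref{thm:main} for minimization and states Theorem~\ref{thm:main-max} as the ``analogous'' result, and your sign-flipped decomposition lemma, sandwich argument, and reduction of statements~\ref{enum:main1-max} and~\ref{enum:main-max-unique} to~\ref{enum:main2-max} mirror that proof faithfully.
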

In maximization games, the sets $X_i$
usually contain some capacity restrictions,
therefore the notion of minimality of vectors $\vec u$
might not be appropriate. Take for instance the
example of auctions in Example~\ref{ex:auction}. Here,   $\vec u=0$ arises
as the unique minimal $\vec u$ leading
to trivial conclusions.
Perhaps
 more interesting are scenarios in which the combined valuation function $v(\vec x)$ is in some sense monotonically non-decreasing on $X$.
 
 \begin{definition}[Upwards closure of $X$, Monotonicity of valuations]
 $X$ is \emph{upwards-closed} w.r.t. $\vec u\in \R^m$, if $\ell(\vec x)\leq \vec u$ and $\ell(\vec x)\neq \vec u$ for some $\vec x\in X$ implies that
 there is $\vec x'\in X$ with $\vec x'\geq \vec x$ and
 $\ell(\vec x')= \vec u$. We say that  $X$ is \emph{upwards-closed},
 if this property holds for all $\vec u\in \R^m$.
  The function $v(\vec x)$ is \emph{monotonically non-decreasing} on $X$, if
 $v(\vec x)\geq v(\vec y)$ for all $\vec x, \vec y\in X$ with $\vec x\geq \vec y$.
 \end{definition}
We obtain the following result regarding this monotonicity assumption.
\begin{theorem}
Assume that $v(\vec x)$
is monotonically non-decreasing and $X$ is upwards closed w.r.t. $\vec u\in \R^m$.
Then, the supply vector $\vec u\in \R^m$ is enforceable via $(\vec x^*,\bm\lambda^*)$ if and only if $(\vec x^*,\bm\lambda^*) $ 
has zero duality gap for~\ref{price-opt-max}. 
\end{theorem}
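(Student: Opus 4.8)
The plan is to separate the two directions, observing that only the converse uses the two new hypotheses. The forward direction is immediate: if $\vec u$ is enforceable via $(\vec x^*,\bm\lambda^*)$, then in particular $\ell(\vec x^*)=\vec u$ and conditions~\ref{cond1} and~\ref{cond3} hold, so Theorem~\ref{thm:main-max}, part~\ref{enum:main1-max}, already gives zero duality gap for~\ref{price-opt-max}. Neither monotonicity nor upward-closure is needed here. The entire substance therefore lies in the converse, where I must manufacture the tightness $\ell(\vec x^*)=\vec u$ that separates full enforceability (Theorem~\ref{thm:main-max}, part~\ref{enum:main1-max}) from the weaker market-clearing enforceability (part~\ref{enum:main2-max}) that a zero gap supplies on its own.

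For the converse, I would start from a zero-gap pair $(\vec x^*,\bm\lambda^*)$, i.e.\ a feasible $\vec x^*\in X$ with $\ell(\vec x^*)\le\vec u$ and $v(\vec x^*)=\mu(\bm\lambda^*)$. The chain $\mu(\bm\lambda^*)=\max_{\vec x\in X}\{v(\vec x)-(\bm\lambda^*)^\intercal(\ell(\vec x)-\vec u)\}\ge v(\vec x^*)-(\bm\lambda^*)^\intercal(\ell(\vec x^*)-\vec u)\ge v(\vec x^*)=\mu(\bm\lambda^*)$ forces both equalities, yielding the complementary-slackness identity $(\bm\lambda^*)^\intercal(\ell(\vec x^*)-\vec u)=0$ and $\vec x^*\in\arg\max_{\vec x\in X} L(\vec x,\bm\lambda^*)$; by the maximization analogue of the decomposition Lemma~\ref{eq:decomposition} the latter is exactly the per-player condition~\ref{cond3}. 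This reproves weak enforceability with market prices, and if it happens that $\ell(\vec x^*)=\vec u$ I am done via part~\ref{enum:main1-max}.

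The remaining case $\ell(\vec x^*)\le\vec u$ with $\ell(\vec x^*)\ne\vec u$ is where the hypotheses do their work. Upward-closure of $X$ w.r.t.\ $\vec u$ delivers a point $\vec x'\in X$ with $\vec x'\ge\vec x^*$ and $\ell(\vec x')=\vec u$, and monotonicity of $v$ gives $v(\vec x')\ge v(\vec x^*)$; since $\vec x'$ is feasible and $\vec x^*$ is optimal this must be an equality, so $v(\vec x')=\mu(\bm\lambda^*)$ and $(\vec x',\bm\lambda^*)$ is again a zero-gap pair, now load-tight. Re-running the displayed tightness computation with $\ell(\vec x')-\vec u=0$ shows $\vec x'$ maximizes $L(\cdot,\bm\lambda^*)$ and hence satisfies~\ref{cond3}, so Theorem~\ref{thm:main-max}, part~\ref{enum:main1-max}, certifies that $\vec u$ is enforceable via $(\vec x',\bm\lambda^*)$.

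The main thing to watch, and the only genuine subtlety, is that this load-tight replacement $\vec x'$ need not coincide with the original witness $\vec x^*$: the converse really establishes enforceability of $\vec u$ (the reading intended by the statement), and the per-player optimality of $\vec x'$ is not given for free but must be re-derived, which is why I repeat the zero-gap and decomposition step for $(\vec x',\bm\lambda^*)$ rather than merely citing optimality of $\vec x'$. I expect no obstacle beyond this bookkeeping, since the Lagrangian decomposition is purely algebraic and indifferent to which optimal primal point is inserted.
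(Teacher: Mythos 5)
Your proof is correct and follows essentially the same route as the paper: the paper's (one-sentence) argument is precisely that monotonicity of $v$ and upwards-closedness of $X$ let one replace an optimal solution of~$P^{\max}(\vec u)$ by a load-tight one and then invoke Theorem~\ref{thm:main-max}. You merely spell out the details the paper leaves implicit — complementary slackness, the decomposition step for the replacement point $\vec x'$, and the (correct) observation that the enforcing witness may differ from the original zero-gap primal point.
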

\begin{proof}
By the monotonicity of $v(\vec x)$ and upwards-closedness of $X$
w.r.t. $\vec u$, any optimal solution of~\ref{price-opt-max} can be turned
into one that satisfies~\eqref{eq:inequality} with equality.
\end{proof}

We finally get an existence result for convex sets $X_i, i\in N$
and monotone and concave valuations.

\begin{corollary}
Let $X_i, i\in N$ be nonempty convex sets such that $X$ is upwards-closed
w.r.t. $\vec u\in \R^m$.
Assume that $v_i, i\in N$ are concave functions, $g_i, i\in N$ are convex functions
and $v(\vec x)$
is monotonically non-decreasing and that there exists
$\vec x^0\in \relint\left(\{\vec x\in X\vert \ell(\vec x)\leq \vec u\}\right) $.
Then, $\vec u$  is enforceable. If $v(\vec x)=\sum_{i\in N}v_i(\vec u, \vec x_i)$ is strictly concave 
over $X$, then $\vec u$ is uniquely enforceable.
\end{corollary}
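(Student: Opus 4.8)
The plan is to recognize Problem~\ref{price-opt-max} as a concave maximization problem and then combine a Slater-type constraint qualification with the immediately preceding theorem, which already reduces enforceability to the zero-duality-gap property under monotonicity and upwards-closedness.

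First I would verify that~\ref{price-opt-max} is a genuine concave program. Since each $g_i$ is convex on $X_i$ by assumption, every load coordinate $\ell_j(\vec x)=\sum_{i\in N} g_{ij}(\vec x_i)$ is a convex function of $\vec x$, so each constraint $\ell_j(\vec x)\leq u_j$ cuts out a convex set; intersecting these with the convex product set $X=\times_{i\in N}X_i$ leaves the feasible region convex. The objective $v(\vec x)=\sum_{i\in N}v_i(\vec u,\vec x_i)$ is concave as a sum of concave functions. Hence~\ref{price-opt-max} maximizes a concave objective over a nonempty convex set.

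Next I would invoke the hypothesis that there exists $\vec x^0\in\relint\left(\{\vec x\in X\mid \ell(\vec x)\leq \vec u\}\right)$, which serves as a Slater point for the convex program. Standard convex duality then yields that~\ref{price-opt-max} has zero duality gap and admits a primal--dual optimal pair $(\vec x^*,\bm\lambda^*)$. Because $v$ is monotonically non-decreasing and $X$ is upwards-closed with respect to $\vec u$, the preceding theorem applies verbatim and gives that $\vec u$ is enforceable via $(\vec x^*,\bm\lambda^*)$; equivalently, the optimal $\vec x^*$ can be taken to satisfy~\eqref{eq:inequality} with equality, matching statement~\ref{enum:main1-max} of Theorem~\ref{thm:main-max}.

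For the uniqueness claim I would additionally use strict concavity of $v$: over the convex feasible set the maximizer is then unique. The upwards-closedness and monotonicity force this unique optimum to satisfy $\ell(\vec x^*)=\vec u$, for otherwise a strictly larger feasible point of equal value would exist (by upwards-closedness and monotonicity) and contradict uniqueness. Together with the zero duality gap this is precisely the condition in statement~\ref{enum:main-max-unique} of Theorem~\ref{thm:main-max}, so $\vec u$ is uniquely enforceable. The one point requiring care is the precise form of the constraint qualification: Slater's condition is stated here through the relative interior rather than through strict feasibility of every inequality, which is the correct formulation when $X$ (for instance a flow polytope) lies in a lower-dimensional affine subspace and some constraints are affine. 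Checking that this relative-interior condition indeed certifies zero duality gap for the possibly nonlinear constraints $\ell_j(\vec x)\leq u_j$ is the only genuinely delicate step, and it is exactly the maximization analogue of the (suppressed) argument behind Corollary~\ref{cor:convex}.
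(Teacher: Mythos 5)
Your proposal is correct and follows essentially the same route the paper intends (the paper leaves this corollary without an explicit proof, but its suppressed proof of the minimization analogue, Corollary~\ref{cor:convex}, is exactly this argument): recognize~\ref{price-opt-max} as a concave program over a convex feasible set, use the relative-interior Slater point to obtain zero duality gap, and then invoke the preceding theorem on monotone, upwards-closed instances to conclude enforceability, with strict concavity giving uniqueness of the maximizer and hence unique enforceability via Theorem~\ref{thm:main-max}. Your explicit remark on the relative-interior form of the constraint qualification and your argument that the unique maximizer must satisfy $\ell(\vec x^*)=\vec u$ are both sound and, if anything, slightly more careful than the paper.
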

\section{Proof of Theorem~\ref{thm:polymatroid-main}}
Let us now restate problem~$P^{\min}(\vec u)$ in the context
of polymatroids.

 \begin{framed}
\begin{align}\tag{$P^{\min-\polymatroid}(\vec u)$}\label{cp-polymatroid}
\min\;\sum_{i\in N}&\sum_{j\in E}  \pi_{ij}(\vec u) x_{ij}\\\notag
\vec x_i&\in \B_{f_i},\;i\in N\\
\ell_j(\vec x)&\leq u_j, j\in E\label{ineq:load-polymatroid}
%x_{ij}&\geq 0\; \forall i\in N, j\in E.
\end{align}
We call $LP^{\min-\polymatroid}(\vec u)$ the fractional
relaxation, where
we optimize over  $\EB_{f_i}, i\in N$
instead of $\B_{f_i}, i\in N$.
\end{framed}

\begin{proof}
We first lift all integral base polyhedra $\B_{f_i}\subset \Z^m$
to the higher dimensional space $\bar \B_{f_i}\subset \Z^{n\cdot m}$
by introducing $n$ copies $E_i, i\in N$ of the elements $E$
leading to $\bar E:=\dot\cup_{i\in N} E_i$ with $E_i=\{e^i_1,\dots e^i_m\}, i\in N$.
The domain of the integral polymatroid function $f_i$ is extended to $\bar E$ as follows
\[ \bar f_i(S):= f_i(E_i\cap S) \text{ for all }S\subset \bar E.\]
This way $\bar f_i(S)$ remains an integral  polymatroid rank function
on the lifted space $\Z^{n\cdot m}$.
Note that for $ \vec{\bar x}_i\in \bar \B_{f_i}$, we have $\vec{\bar x}_i\in  \Z^{n\cdot m}$ and with 
$f_i(\{\emptyset\})=0$, we get $ \bar x_{ij}=0$ for all $j\in \bar E\setminus E_i$. By this construction, we get $ \vec x_i \in \B_{f_i} \Leftrightarrow \vec{\bar x}_i\in \bar \B_{f_i}$.

Now we define the Minkowski sum
\[  \bar\B_1:= \sum_{i\in N} \bar \B_{f_i} \subset \Z^{n\cdot m},\]
which is again an integral  polymatroid base polyhedron.
By this construction we can represent all collections of 
integral base vectors by a single integral polymatroid base polyhedron.

It remains to also handle the capacity constraint~\eqref{ineq:load-polymatroid} (note that this is not a box constraint for
polymatroid $\bar\B_1$).
For $S\subset \bar E$, we define
$S_j:=\{j\in E \vert\;  \exists \; i\in N \text{ with } e^i_j\in S\} $ as the union of those original element indices (in $E$) for which $S$ contains at least one copy.
With this definition, we define a second polymatroid as follows.
\begin{align*}
\bar \B_2:=\{\vec{  x}\in \Z^{n\cdot m}\vert \;  x(S)\leq h(S) \text{ for all } S\subseteq \bar E, \;x(\bar E)= h(\bar E)\},
\end{align*}
where for $S\subset \bar E$
$ h(S):=\sum_{j \in S_j} u_j.$ One can easily verify 
that $h$ is an integral polymatroid function.
Now observe that for the sets
$\{e_1^j,\dots, e_n^j\}, j\in E$ we exactly get the
capacity constraint  $ x\left(\{e_1^j,\dots, e_n^j\}\right)\leq u_j, j\in E.$
Altogether, with the minimality of $\vec u$, problem~\ref{cp-polymatroid} can be  reduced to the problem of finding a  vector
in the intersection of $\bar \B_1$ and $\bar \B_2$ minimizing a
linear objective:
\begin{align}\label{eq:intersect}
\min\; \left\{ \sum_{i\in N}\sum_{j\in E}  \pi_{ij}(\vec u) x_{ij} \; \middle\vert \;\; 
\vec x \in \bar \B_1\cap \bar \B_2\right\}
\end{align}
By the fundamental result of Edmonds~\cite[Thm. (35)]{Edmonds2003},
the fractional relaxation $\bar {\EB}_1\cap \bar {\EB}_2$ is integral.
Note that there are strongly polynomial time
algorithms computing an optimal solution to~\eqref{eq:intersect} (see Cunningham and Frank~\cite{CunninghamF85} and Frank and Tardos~\cite{FrankT87}).

\end{proof}

\section{Utility Maximization on Polymatroids}\label{sec:polym-max}
For the maximization variant, the strategy
spaces $X_i, i\in N$ are usually defined as the vectors
of an integral polymatroid polyhedron $\P_{f_i}$.
We get the following reformulation of $P^{\max}(\vec u)$: 
 \begin{framed}
\begin{align*}\tag{$P^{\max-\polymatroid}(\vec u)$}
\max\;\sum_{i\in N}&\sum_{j\in E}  v_{ij}(\vec u) x_{ij}\\
\vec x_i & \in \P_{f_i} \text{ for all }i\in N\\
\ell_j(\vec x)&\leq u_j, j\in E
\end{align*}
\end{framed}
The following companion result for maximization problems on polymatroids
holds true.
\begin{theorem}\label{thm:max-polymatroid}
For polymatroid games, every $\vec u\in \Z^m$ 
for which $P^{\max}(\vec u)$ admits a finite optimal solution
 is weakly enforceable with market prices.
If $v_{ij}(\vec u)\geq 0$ for all $i\in N, j\in E$,
then, any supply vector $\vec u\in \Z^m_+$ for which there exists $\vec x\in X$ with $\ell(\vec x)=\vec u$ is enforceable.
\end{theorem}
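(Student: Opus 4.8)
The plan is to mirror the proof of Theorem~\ref{thm:polymatroid-main}, replacing the base polyhedra $\B_{f_i}$ by the polymatroid polyhedra $\P_{f_i}$ and invoking the maximization characterization of Theorem~\ref{thm:main-max} in place of Theorem~\ref{thm:main}. For the first statement, by Theorem~\ref{thm:main-max}, part~\ref{enum:main2-max}, it suffices to show that $P^{\max-\polymatroid}(\vec u)$ has zero duality gap. Since $g_i(\vec x_i)=\vec x_i$ is linear and the objective $\sum_{i\in N}\sum_{j\in E} v_{ij}(\vec u)x_{ij}$ is linear, the Lagrangian dual $\mu(\bm\lambda)$ coincides with the dual of the fractional relaxation in which each integral polymatroid polyhedron $\P_{f_i}$ is replaced by its real counterpart; this is exactly the phenomenon exploited in Theorem~\ref{thm:main-convexification}. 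Thus, since this relaxation is a bounded linear program and therefore has zero duality gap by LP strong duality, it remains only to exhibit an integral optimal solution of the relaxation matching the fractional optimum, which then transfers zero duality gap to the integral problem $P^{\max-\polymatroid}(\vec u)$.

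To obtain integrality I would repeat the lifting construction from the proof of Theorem~\ref{thm:polymatroid-main}: introduce $n$ disjoint copies $E_i$ of $E$, extend each $f_i$ to $\bar f_i(S):=f_i(E_i\cap S)$ on $\bar E=\dot\cup_{i\in N} E_i$, and form the Minkowski sum $\bar\P_1:=\sum_{i\in N}\bar\P_{f_i}$, which is again an integral polymatroid polyhedron (the Minkowski sum of polymatroid polyhedra is the polyhedron of the sum of the rank functions). The capacity constraints $\ell_j(\vec x)\leq u_j$ are encoded by the second integral polymatroid $\bar\P_2:=\{x\in\Z_+^{nm}\mid x(S)\leq h(S)\text{ for all }S\subseteq\bar E\}$ with $h(S):=\sum_{j\in S_j}u_j$, whose binding inequalities on the sets $\{e_1^j,\dots,e_n^j\}$ are precisely $\sum_{i\in N} x_{ij}\leq u_j$. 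The relaxation then becomes the maximization of a linear function over $\bar\P_1\cap\bar\P_2$, and by Edmonds' polymatroid intersection theorem~\cite[Thm.~(35)]{Edmonds2003} the polyhedron $\bar\P_1\cap\bar\P_2$ is integral, so the (finite) optimum is attained at an integral vertex. This yields zero duality gap and hence weak enforceability with market prices.

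For the second statement I additionally need, by Theorem~\ref{thm:main-max}, part~\ref{enum:main1-max}, an optimal solution $\vec x^*$ with $\ell(\vec x^*)=\vec u$. Here I would use the hypothesis $v_{ij}(\vec u)\geq 0$, which makes the objective monotonically non-decreasing in $\vec x$, together with the assumption that $\vec u$ is an achievable load, i.e.\ $\vec u\in\sum_{i\in N}\P_{f_i}=\P_{\sum_i f_i}$. Starting from any integral optimal $\vec x^*$ with $\ell(\vec x^*)\leq\vec u$, I would apply the augmentation property of the combined polymatroid $\P_{\sum_i f_i}$: whenever $\ell(\vec x^*)\neq\vec u$ there is a coordinate $j$ with $\ell_j(\vec x^*)<u_j$ such that $\ell(\vec x^*)$ plus one unit on resource $j$ still lies in $\P_{\sum_i f_i}$, and this unit can be realized by raising $x_{ij}^*$ by one for a single player $i$ while keeping $\vec x_i^*\in\P_{f_i}$. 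Since $v_{ij}(\vec u)\geq 0$, this step does not decrease the objective, so iterating until the load equals $\vec u$ produces an optimal solution with $\ell(\vec x^*)=\vec u$. Equivalently, this argument shows that $X=\times_{i\in N}\P_{f_i}$ is upwards-closed with respect to $\vec u$, so the conclusion also follows from the monotone/upwards-closed result in Appendix~\ref{subsec:max}.

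The routine parts (the lifting, the submodularity of $\bar f_i$ and $h$, the reduction to a polymatroid intersection, and the LP strong-duality transfer) go through verbatim as in the minimization case. The main obstacle I anticipate is the decomposition step in the augmentation argument for the second statement: I must guarantee that each unit increase of the \emph{combined} load can be realized as a unit increase of a \emph{single} player's allocation that remains inside that player's $\P_{f_i}$. This is precisely the integer decomposition property of the Minkowski sum of integral polymatroids, and I would justify it either by working directly in the lifted polymatroid $\bar\P_1$ (where the supports $E_i$ are disjoint, so an augmenting element $e_j^i$ identifies both the player $i$ and the resource $j$) or by citing the standard exchange property of sums of integral polymatroids.
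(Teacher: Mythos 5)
Your proposal is correct and follows essentially the same route as the paper: the first claim is handled by the same lifting/Minkowski-sum construction and Edmonds' polymatroid intersection theorem as in Theorem~\ref{thm:polymatroid-main}, and the second by combining $v_{ij}(\vec u)\geq 0$ (monotonicity) with upwards-closedness of $X$ w.r.t.\ $\vec u$ and the monotone enforceability result of Appendix~\ref{subsec:max}. The only difference is that you spell out the augmentation/decomposition argument behind upwards-closedness, which the paper merely asserts ``using polymatroid properties''; your justification via the lifted polymatroid $\bar\P_1$, where the disjoint supports $E_i$ identify the player receiving each unit, is a valid way to close that step.
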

\begin{proof}
The proof of the first statement  is analogous to the proof
of the previous theorem.
For the second statement,
with $v_{ij}(\vec u)\geq 0$ it follows that $v(\vec x)$ is monotonically nondecreasing. Moreover, for any integral $\vec u\in \Z^m_+$
for which there exists $\vec x\in X$ with $\ell(\vec x)=\vec u$ it is known
that $X$
is upwards closed -- using  polymatroid properties.
Thus, with the integrality of the polymatroid-intersection polytope (see the
proof of Theorem~\ref{thm:polymatroid-main}) the result follows.
\end{proof}

\end{document}